\numberwithin{equation}{section}
\def\beq{\begin{equation}}
\def\eeq{\end{equation}}
\def\bit{\begin{itemize}}
\def\eit{\end{itemize}}
\def\eqalign#1{\null\vcenter{\def\\{\cr}\openup\jot\m@th
\ialign{\strut$\displaystyle{##}$\hfil&$\displaystyle{{}##}$\hfil
\crcr#1\crcr}}\,}
\def\bigO{{\cal O}}
\newenvironment{proof}%
{\rm \trivlist \item[\hskip \labelsep{\bf Proof. }]}%
{\hspace*{\fill}$\Box$\endtrivlist}
\begin{document}
\tikzset{middlearrow/.style={
        decoration={markings,
            mark= at position 0.6 with {\arrow{#1}} ,
        },
        postaction={decorate}
    }
}

\newtheorem{theorem}{Theorem}
\newtheorem{acknowledgement}[theorem]{Acknowledgement}
\newtheorem{remark}[theorem]{Remark}
\newtheorem{lemma}[theorem]{Lemma}
\newtheorem{proposition}[theorem]{Proposition}

\numberwithin{equation}{section}
\numberwithin{theorem}{section}

\def\Xint#1{\mathchoice
{\XXint\displaystyle\textstyle{#1}}%
{\XXint\textstyle\scriptstyle{#1}}%
{\XXint\scriptstyle\scriptscriptstyle{#1}}%
{\XXint\scriptscriptstyle\scriptscriptstyle{#1}}%
\!\int}
\def\XXint#1#2#3{{\setbox0=\hbox{$#1{#2#3}{\int}$ }
\vcenter{\hbox{$#2#3$ }}\kern-.59\wd0}}
\def\ddashint{\Xint=}
\def\dashint{\Xint-}

\title{Asymptotics of Hankel determinants with a one-cut regular potential and Fisher-Hartwig singularities}
\author{Christophe Charlier\footnote{Institut de Recherche en Math\'ematique et Physique,  Universit\'e
catholique de Louvain, Chemin du Cyclotron 2, B-1348
Louvain-La-Neuve, BELGIUM. E-mail: christophe.charlier@hotmail.com}}

\maketitle


\begin{abstract}
We obtain asymptotics of large Hankel determinants whose weight depends on a one-cut regular potential and any number of Fisher-Hartwig singularities. This generalises two results: 1) a result of Berestycki, Webb and Wong \cite{BerWebbWong} for root-type singularities, and 2) a result of Its and Krasovsky \cite{ItsKrasovsky} for a Gaussian weight with a single jump-type singularity. We show that when we apply a piecewise constant thinning on the eigenvalues of a random Hermitian matrix drawn from a one-cut regular ensemble, the gap probability in the thinned spectrum, as well as correlations of the characteristic polynomial of the associated conditional point process, can be expressed in terms of these determinants.
\end{abstract}

\section{Introduction and main results}
We are interested in the large $n$ asymptotics of the Hankel determinant
\begin{equation}\label{Hankel determinants 1}
D_{n}(\vec{\alpha},\vec{\beta},V,W) = \det \left( \int_{\mathbb{R}} x^{j+k-2}w(x)dx \right)_{j,k=1,...,n},
\end{equation}
where the weight $w$ is defined by
\begin{equation}\label{weight_general}
w(x) = e^{-nV(x)}e^{W(x)}\omega(x), \qquad \omega(x) =  \prod_{j=1}^{m} \omega_{\alpha_{j}}(x)\omega_{\beta_{j}}(x), \qquad m \in \mathbb{N},
\end{equation}
and for each $k \in \{1,...,m\}$, we have
\begin{equation}\label{first def of omega}
\omega_{\alpha_{k}}(x) = |x-t_{k}|^{\alpha_{k}}, \qquad \omega_{\beta_{k}}(x) = \left\{ \begin{array}{l l}
e^{i\pi\beta_{k}}, & \mbox{ if } x < t_{k}, \\
e^{-i \pi \beta_{k}}, & \mbox{ if } x > t_{k}.
\end{array}  \right.
\end{equation}
The number of Fisher-Hartwig (FH) singularities is denoted by $m \in \mathbb{N}$, and $t_{1},...,t_{m}\in \mathbb{R}$ are the locations of these singularities. For each $k \in \{1,...,m\}$, the root-type singularity and the jump-type singularity located at $t_{k}$ are parametrized by $\alpha_{k} \in \{ z \in \mathbb{C} : \Re z > -1 \}$ and $\beta_{k} \in \mathbb{C}$ respectively. Since for any $n_{0}\in \mathbb{Z}$, we have $\omega_{\beta_{k}+n_{0}} = (-1)^{n_{0}} \omega_{\beta_{k}}$, we can assume without loss of generality that $\Re\beta_{k} \in (\frac{-1}{2},\frac{1}{2}]$. The potential $V: \mathbb{R} \to \mathbb{R}$ is real analytic, the function $W : \mathbb{R} \to \mathbb{R}$ is continuous on $\mathbb{R}$ and they satisfy
\begin{equation*}
\lim_{x\to\pm\infty} V(x)/\log|x| = + \infty \qquad \mbox{ and } \qquad W(x) = \bigO(V(x)), \mbox{ as } |x| \to \infty.
\end{equation*}
The $\bigO$ notation means that there exist two constants $c_{1},c_{2} >0$ such that $|W(x)| \leq c_{1} |V(x)|$ for all $|x| > c_{2}$. Thus $w$ is integrable on $\mathbb{R}$ and has finite moments for $n > c_{1}$. By Heine's formula, $D_{n}(\vec{\alpha},\vec{\beta},V,W)$ admits the following $n$-fold integral representation:
\begin{equation}\label{Hankel determinants}
D_{n}(\vec{\alpha},\vec{\beta},V,W) = \frac{1}{n!}\int_{\mathbb{R}^{n}} \prod_{1\leq j < k \leq n}(x_{k}-x_{j})^{2} \prod_{j=1}^{n}w(x_{j})dx_{j}.
\end{equation}
This determinant depends on several parameters. To simplify the notation, we denoted it by $D_{n}(\vec{\alpha},\vec{\beta},V,W)$, where the dependence in $t_{1}$,...,$t_{m}$ is omitted and where we write $\vec{\alpha}$ and $\vec{\beta}$ for $(\alpha_{1},...,\alpha_{m})$ and $(\beta_{1},...,\beta_{m})$ respectively. 

\vspace{0.2cm}The analogous situation for Toeplitz determinants (i.e. determinants constant along the diagonals) with a weight defined on the circle with FH singularities has been studied by many authors (e.g. \cite{FH, Widom2, Basor, Basor2, BS, Ehrhardt}), and the most general results can be found in \cite{DIK,DeiftItsKrasovsky}. Asymptotics for large Hankel determinants associated to a weight of the form
\begin{equation}\label{weight Deift Its Krasovsky}
e^{W(x)}\omega(x) \chi_{[-1,1]}(x),
\end{equation}
where $\chi_{[-1,1]}$ is the characteristic function of $[-1,1]$, were also obtained in \cite{DIK} by using a relation between these Hankel determinants and Toeplitz determinants with FH singularities. Note that the weight \eqref{weight Deift Its Krasovsky} is not a particular case of \eqref{weight_general} (and vice-versa).

\vspace{0.2cm}Hankel determinants of the form \eqref{Hankel determinants} appear naturally in random matrix theory. Consider the set of $n \times n$ Hermitian matrices $M$ endowed with the probability distribution
\begin{equation}\label{random matrix ensemble}
\frac{1}{\widehat{Z}_n}e^{-n \mathrm{Tr} V(M)}dM, \qquad dM = \prod_{i=1}^{n} dM_{ii} \prod_{1 \leq i<j \leq n } d \Re M_{ij}d \Im M_{ij},
\end{equation}
where $\widehat{Z}_n$ is the normalisation constant, also called the partition function. This distribution of matrices is invariant under unitary conjugations and induces a probability distribution on the eigenvalues $x_{1},...,x_{n}$ of $M$ which is of the form
\begin{equation}\label{distribution eigenvalues}
\frac{1}{n! Z_{n}} \prod_{1\leq j < k \leq n}(x_{k}-x_{j})^{2} \prod_{j=1}^{n} e^{-nV(x_{j})}dx_{j}, \qquad (x_{1},...,x_{n}) \in \mathbb{R}^{n},
\end{equation}
where the partition function $Z_{n}$ can be rewritten as
\begin{equation}
Z_{n} = D_{n}(\vec{0},\vec{0},V,0).
\end{equation}
Probably the most studied random matrix ensemble is the Gaussian Unitary Ensemble (GUE) \cite{AkeBaiFra, AndGuiZei, Mehta}, and corresponds to the case $V(x) = 2x^{2}$. In that instance, the partition function is explicitly known (see e.g. \cite{Mehta})
\begin{equation}\label{partition GUE}
\begin{array}{r c l}
\displaystyle D_{n}(\vec{0},\vec{0},2x^{2},0) & = & \displaystyle (2\pi)^{\frac{n}{2}} 2^{-n^{2}} n^{-\frac{n^{2}}{2}} \prod_{j=1}^{n-1} j! \\
& = & 2^{-n^{2}}e^{-\frac{3}{4}n^{2}} (2\pi)^{n} n^{-\frac{1}{12}} e^{\zeta^{\prime}(-1)}\left( 1+\bigO(n^{-1}) \right), \quad \mbox{ as } n \to \infty,
\end{array}
\end{equation}
where $\zeta$ is Riemann's zeta-function.

\vspace{0.2cm}The potentials $V$ we are interested in are described in terms of properties of the equilibrium measure $\mu_{V}$, which is the unique minimizer of the functional
\begin{equation}
\iint \log |x-y|^{-1} d\mu(x)d\mu(y) + \int V(x)d\mu(x)
\end{equation}
among all Borel probability measures $\mu$ on $\mathbb{R}$. It is known (see e.g. \cite{SaTo}) that this measure and its support (denoted $\mathcal{S}$) are completely characterized by the Euler-Lagrange variational conditions
\begin{align}
2 \int_{\mathcal{S}} \log |x-s| d\mu_{V}(s) = V(x) - \ell, & & \mbox{ for } x \in \mathcal{S}, \label{var equality} \\
2 \int_{\mathcal{S}} \log |x-s| d\mu_{V}(s) \leq V(x) - \ell, & & \mbox{ for } x \in \mathbb{R}\setminus \mathcal{S}. \label{var inequality}
\end{align}
\newpage
\hspace{-0.55cm}A potential $V$ is called one-cut regular if it satisfies the following conditions (see e.g. \cite{BleIts, ClaeysGravaMcLaughlin, BerWebbWong}):
\begin{enumerate}
\item $V : \mathbb{R}\to\mathbb{R}$ is analytic.
\item \vspace{-0.25cm} $\lim_{x\to\pm\infty}V(x)/\log|x| = +\infty$.
\item \vspace{-0.2cm} The inequality \eqref{var inequality} is strict.
\item \vspace{-0.2cm} The equilibrium measure is supported on $\mathcal{S} = [a,b]$ and is of the form \\ $d\mu_{V}(x) = \psi(x)\sqrt{(b-x)(x-a)}dx$, where $\psi$ is positive on $[a,b]$. 
\end{enumerate}

The above conditions imply that $\psi$ is also real analytic. This is a consequence of \cite[Theorem 1.38]{DeiKriMcL} (for more details, see the beginning of Section \ref{Section:steepest}). In the present work, we restrict ourselves to the class of one-cut regular potentials whose equilibrium measure is supported on $[-1,1]$ instead of $[a,b]$. This is without loss of generality, as can be easily seen from a change of variables in \eqref{Hankel determinants}, \eqref{var equality} and \eqref{var inequality}. For the reader's convenience, we show that explicitly in Remark \ref{Remark: support} below. An example of a one-cut regular potential is the Gaussian potential $V(x) = 2x^{2}$. In this case, it is well-known \cite{SaTo} that $\ell = 1+2\log 2$ and $\psi(x) = \frac{2}{\pi}$. 

\vspace{0.2cm}A lot of results are available in the literature for large $n$ asymptotics of $D_{n}(\vec{\alpha},\vec{\beta},V,W)$ for particular values of the parameters, and we briefly discuss them here. 

\vspace{0.2cm}We start with Hankel determinants without singularities. If $V$ is a polynomial, is one-cut regular and such that all zeros of $\psi(x)$ are nonreal, Johansson \cite{Johansson2} obtained rigorously the  asymptotics of $\frac{D_{n}(\vec{0},\vec{0},V,W)}{D_{n}(\vec{0},\vec{0},V,0)}$ (see \cite[Theorem 2.4]{Johansson2} for the precise assumptions on $W$). In particular, this implies a central limit theorem for the linear statistics, i.e. it gives information about the global fluctuation properties of the spectrum around the equilibrium measure. For a polynomial one-cut regular potential $V$, large $n$ asymptotics for the partition function $D_{n}(\vec{0},\vec{0},V,0)$ have been obtained via the Riemann-Hilbert method in \cite{ErcMcL} (under certain assumptions on the coefficients of $V$, see \cite[Theorem 1.1 and above]{ErcMcL}), and via deformation equations in \cite{BleIts} (under further technical assumptions on $V$, see \cite[equation (1.5) and remark (2) below]{BleIts}). There are also results available for partition functions in more general situations, see e.g. \cite{ClaeysGravaMcLaughlin} when the equilibrium measure is supported on two intervals.

\vspace{0.2cm}Hankel determinants with root-type singularities are related to the statistical properties of the characteristic polynomial $p_{n}(t) = \prod_{j=1}^{n}(t-x_{j})$ of \eqref{distribution eigenvalues}. There is numerical evidence and conjectures of links between $p_{n}(t)$ and the behaviour of the Riemann $\zeta$-function along the critical line (see e.g. \cite{KeaSna}). In \cite{Krasovsky}, Krasovsky studied the correlations between $|p_{n}(t_{1})|$,...,$|p_{n}(t_{m})|$, and powers of these quantities, when $-1<t_{1}<...<t_{m}<1$ are fixed in the case of GUE. The large $n$ asymptotics of these correlations are given by
\begin{multline}\label{asymptotics alpha Krasovsky}
\mathbb{E}_{\mathrm{GUE}}\bigg( \prod_{k=1}^{m} |p_{n}(t_{k})|^{\alpha_{k}} \bigg) = \frac{D_{n}(\vec{\alpha},\vec{0},2x^{2},0)}{D_{n}(\vec{0},\vec{0},2x^{2},0)} = 2^{-\mathcal{A}n} \prod_{1 \leq j < k \leq m} (2|t_{j}-t_{k}|)^{-\frac{\alpha_{j}\alpha_{k}}{2}} \\ 
\times  \prod_{j=1}^{m} \frac{G\left( 1+\frac{\alpha_{j}}{2} \right)^{2}}{G(1+\alpha_{j})} \Big(2\sqrt{1-t_{j}^{2}}n\Big)^{\frac{\alpha_{j}^{2}}{4}} \exp \left( \frac{\alpha_{j} n}{2}(2t_{j}^{2}-1) \right)  \left( 1 + \bigO \left( \frac{\log n}{n} \right) \right),
\end{multline}
where $\Re \alpha_{k} >-1$ for $k = 1,...,m$, $\mathcal{A} = \sum_{j=1}^{m} \alpha_{j}$ and $G$ is Barnes' $G$-function. 

\vspace{0.2cm}This result was recently generalized for the class of one-cut regular potentials by Berestycki, Webb and Wong. In \cite[Theorem 1.1]{BerWebbWong}, they proved that a sufficiently small power of the absolute value of the characteristic polynomial $p_{n}(t)$ of a one-cut regular ensemble converges weakly in distribution to a Gaussian multiplicative chaos measure. It was crucial in their analysis to obtain the large $n$ asymptotics of 
\begin{equation}\label{gaussian multiplicative chaos}
\mathbb{E}_{V}\bigg( \prod_{j=1}^{n} e^{W(x_{j})} \prod_{k=1}^{m} |p_{n}(t_{k})|^{\alpha_{k}} \bigg) = \frac{D_{n}(\vec{\alpha},\vec{0},V,W)}{D_{n}(\vec{0},\vec{0},V,0)},
\end{equation}
where $\alpha_{k} \in (-1,\infty)$ for $k=1,...,m$ and $W \in C^{\infty}(\mathbb{R})$ is a compactly supported function, analytic in a neighbourhood of $[-1,1]$. In particular, they proved two conjectures of Forrester and Frankel \cite{ForFra} concerning asymptotics of Hankel determinants with root-type singularities.

\vspace{0.2cm}Only limited results are available concerning Hankel determinants with jump discontinuities. Such determinants allow to generalize the correlations \eqref{gaussian multiplicative chaos}. Let us define the argument of the characteristic polynomial as follows:
\begin{equation}
\arg p_{n}(t) = \sum_{j=1}^{n} \arg(t-x_{j}), \qquad \mbox{ where } \quad \arg(t-x_{j}) = \left\{ \begin{array}{l l}
0, & \mbox{ if } x_{j}<t, \\
-\pi, & \mbox{ if } x_{j} >t.
\end{array} \right.
\end{equation}
We have
\begin{equation}\label{generalise correlation referee}
\mathbb{E}_{V}\bigg( \prod_{j=1}^{n} e^{W(x_{j})} \prod_{k=1}^{m} |p_{n}(t_{k})|^{\alpha_{k}} e^{2 i \beta_{k}\arg p_{n}(t_{k})} \bigg) =  \frac{D_{n}(\vec{\alpha},\vec{\beta},V,W)}{D_{n}(\vec{0},\vec{0},V,0)}\prod_{k=1}^{m} e^{-i\pi \beta_{k}}.
\end{equation}
Hankel determinants with jump singularities appear also when we thin the eigenvalues of a random matrix. Consider the point process of the $n$ eigenvalues of an $n\times n$ random GUE matrix. An independent and constant thinning consists of removing each of these eigenvalues with a certain probability $s \in [0,1]$. The remaining (thinned) eigenvalues are denoted $y_{1},...,y_{N}$, where $N$ is a random variable following the binomial distribution $\mathrm{Bin}(n,1-s)$. The gap probability in the thinned spectrum can be expressed in terms of a Hankel determinant with a single jump discontinuity (i.e. $m=1$) as follows: let $t_{1} \in \mathbb{R}$ and consider the partition of $\mathbb{R}^{n}$
\begin{equation}
A_{k} = \big\{ (x_{1},...,x_{n}) \in \mathbb{R}^{n} : \sharp \{ x_{i}:x_{i}<t_{1} \}=k \big\}, \qquad \bigsqcup_{k=0}^{n} A_{k} = \mathbb{R}^{n}.
\end{equation}
Since we thin the eigenvalues independently of each other and independently of their positions, we have 
\begin{equation}
\mathbb{P}_{\mathrm{GUE}}\big(\sharp \{ y_{i}:y_{i}<t_{1} \}=0|\sharp \{ x_{i}:x_{i}<t_{1} \}=k\big) = s^{k}.
\end{equation}
Therefore, using the integral representation for Hankel determinants \eqref{Hankel determinants}, one has
\begin{align}\label{gap prob m=1}
& \mathbb{P}_{\mathrm{GUE}}\big(\sharp\{ y_{i} : y_{i} < t_{1} \} = 0 \big) = \sum_{k=0}^{n}s^{k}\mathbb{P}_{\mathrm{GUE}}\big(\sharp \{ x_{i}:x_{i}<t_{1} \}=k\big) \\ 
& \hspace{0.2cm} = \frac{1}{n!Z_{n}} \sum_{k=0}^{n} \int_{A_{k}} \prod_{1\leq j < k \leq n}(x_{k}-x_{j})^{2} \prod_{j=1}^{n} e^{-2nx_{j}^{2}} \left\{ \begin{array}{l l}
\hspace{-0.06cm}s, & \hspace{-0.14cm}\mbox{ if } x_{j} < t_{1} \\
\hspace{-0.06cm}1, & \hspace{-0.14cm}\mbox{ if } x_{j}>t_{1}
\end{array} \right\} dx_{j} = e^{in\pi \beta_{1}} \frac{D_{n}(\vec{0},\vec{\beta},2x^{2},0)}{D_{n}(\vec{0},\vec{0},2x^{2},0)}, \nonumber
\end{align}
where in the last equality $\vec{\beta} = \beta_{1} = \frac{\log s}{2\pi i} \in i \mathbb{R}^{+}$ if $s > 0$. The large $n$ asymptotics of this ratio of Hankel determinants were studied by Its and Krasovsky in \cite{ItsKrasovsky} for a fixed $t_{1} \in (-1,1)$. Their asymptotic formula is valid not only for purely imaginary $\beta_{1}$, but as long as $\Re \beta_{1} \in \big( \frac{-1}{4},\frac{1}{4}\big)$. Note that the quantity in \eqref{gap prob m=1} is a polynomial in $s$, but admits an expression in terms of a Hankel determinant with a Fisher-Hartwig singularity only if $s \neq 0$. If $s = 0$, there is no thinning, and the probability $\mathbb{P}_{\mathrm{GUE}}\big(\sharp\{ y_{i} : y_{i} < t_{1} \} = 0 \big) = \mathbb{P}_{\mathrm{GUE}}\big(\sharp \{ x_{i}:x_{i}<t_{1} \}=0\big)$ can be expressed in terms of a Hankel determinant associated to the weight $\chi_{[t_{1},\infty)}(x)e^{-2nx^{2}}$. We do not study such Hankel determinants in the present paper. Their asymptotics as $n \to \infty$ deserve a separate analysis, see e.g. \cite{DS,CharlierDeano} (and see \cite{Widom,ChCl,ChCl2} for analogous situations for Toeplitz determinants).

\newpage
The operation of thinning is general and can be applied on any point process \cite{Daley}. It was introduced in random matrix theory by Bohigas and Pato \cite{BohigasPato1, BohigasPato2}. The point process \eqref{distribution eigenvalues} can be rewritten as a determinant, and is called determinantal (see e.g. \cite{Johansson, Soshnikov} for the definition and properties of determinantal point processes). If we apply an independent and constant thinning to a determinantal point process, the thinned point process is still determinantal \cite{Kallenberg}. This provides an efficient way to interpolate between the original point process (when $s = 0$) and an uncorrelated Poisson process (when $s \to 1$ at a certain speed). Several such transitions have been studied, see e.g. \cite{BotDeiItsKra1, BotDeiItsKra2, ChCl, ChCl2, BerDui, Lambert, BotBuc}.
Hankel determinants with several jump discontinuities are related to piecewise constant (and independent) thinning. We start from the point process of \eqref{distribution eigenvalues} for a general potential $V$, and consider $\mathcal{K} \subseteq \{1,...,m+1\}$, possibly empty. For $k \in \mathcal{K}$, each eigenvalue on $(t_{k-1},t_{k})$ is removed with a probability $s_{k} \in (0,1]$ (in this setting $t_{0} = -\infty$ and $t_{m+1} = +\infty$). The thinned point process is again denoted $y_{1},...,y_{N}$, but $N$ is no longer binomial. In the same spirit as \eqref{gap prob m=1}, one can show that
\begin{equation}\label{lol 19}
\mathbb{P}_{V}\Big(\sharp \{ y_{j} \hspace{-0.05cm} \in \hspace{-0.05cm} \bigcup_{k\in \mathcal{K}} (t_{k-1},t_{k}) \} = 0\Big) \hspace{-0.08cm} = \hspace{-0.08cm} \frac{1}{n!Z_{n}}\hspace{-0.08cm} \int_{\mathbb{R}^{n}} \hspace{-0.1cm} \prod_{1\leq j < k \leq n} \hspace{-0.3cm}(x_{k}-x_{j})^{2}  \prod_{j=1}^{n} e^{-nV(x_{j})} \hspace{-0.1cm} \prod_{k \in \mathcal{K}} \hspace{-0.1cm}\left\{ \hspace{-0.2cm} \begin{array}{l l}
s_{k}, & \hspace{-0.28cm} \mbox{ if } x_{j} \in (t_{k-1},t_{k}) \\
1, & \hspace{-0.28cm} \mbox{ if } x_{j} \notin (t_{k-1},t_{k}) \\
\end{array} \right. \hspace{-0.2cm} dx_{j}.
\end{equation}
The piecewise constant factor in \eqref{lol 19} can be rewritten in the form
\begin{equation}\label{lol 30}
\prod_{k\in \mathcal{K}}\left\{ \begin{array}{l l}
s_{k}, & \mbox{ if } x \in (t_{k-1},t_{k}) \\
1, & \mbox{ if } x \notin (t_{k-1},t_{k}) \\
\end{array} \right\} = \prod_{k \in \mathcal{K}} s_{k}^{1/2} \prod_{j=1}^{m} \omega_{\widetilde{\beta}_{j}}(x), \qquad 2i \pi \widetilde{\beta}_{j} = \log \left( \frac{\widetilde{s}_{j}}{\widetilde{s}_{j+1}} \right),
\end{equation}
where $\widetilde{s}_{j} = s_{j}$ if $j \in \mathcal{K}$, and $\widetilde{s}_{j} = 1$ if $j \notin \mathcal{K}$. Thus, the gap probability in the thinned point process is given by
\begin{equation}\label{ratio beta m geq 2}
\mathbb{P}_{V}\Big(\sharp \{ y_{j} \hspace{-0.05cm} \in \hspace{-0.05cm} \bigcup_{k\in \mathcal{K}} (t_{k-1},t_{k}) \} = 0\Big) =  \frac{D_{n}(\vec{0},\vec{\beta}_{\star},V,0)}{D_{n}(\vec{0},\vec{0},V,0)}\prod_{k \in \mathcal{K}} s_{k}^{1/2},
\end{equation}
and $\vec{\beta}_{\star} = (\widetilde{\beta}_{1},...,\widetilde{\beta}_{m})$ is defined as in \eqref{lol 30}.

\vspace{0.2cm}Following \cite{ChCl2}, instead of studying the thinned spectrum, we can consider a situation where we have information about the thinned spectrum (in the language of probability, we observe an event $A$), and from there, we try to obtain information about the initial/complete spectrum. The initial point process $x_{1},...,x_{n}$ is called \textit{conditional} to $A$. Let us assume that $A$ is the event of $\sharp \{ y_{j} \hspace{-0.05cm} \in \hspace{-0.05cm} \bigcup_{k\in \mathcal{K}} (t_{k-1},t_{k}) \} = 0$, i.e. we observe a gap in the thinned spectrum. Let us also assume that $n$ is known, as well as the thinning parameters $t_{1},...,t_{m}$,$\widetilde{s}_{1},...,\widetilde{s}_{m}$. From Bayes' formula for conditional probabilities, the conditional point process with respect to $A$ follows the distribution \vspace*{-0.08cm}
\begin{equation}
\frac{1}{n!\widetilde{Z}_{n}}\prod_{1\leq i < j \leq n} (x_{j}-x_{i})^{2}\prod_{j=1}^{n}\widetilde{w}(x_{j})dx_{j}, \qquad \widetilde{Z}_{n} = D_{n}(\vec{0},\vec{\beta}_{\star},V,0),
\end{equation}
where $\widetilde{w}(x) = e^{-nV(x)}\prod_{j=1}^{m}\omega_{\widetilde{\beta}_{j}}(x)$.
Thus, the generalized correlations of the characteristic polynomial of the conditional point process with respect to $A$ is expressed as a ratio of Hankel determinants with general FH singularities \vspace*{-0.08cm}
\begin{equation}\label{ratio full FH}
\mathbb{E}_{V,\mathrm{cond}}\Bigg( \prod_{j=1}^{n}e^{W(x_{j})}\prod_{k=1}^{m}|p(t_{k})|^{\alpha_{k}}e^{2 i \beta_{k} \arg p_{n}(t_{k})} \Bigg) =  \frac{D_{n}(\vec{\alpha},\vec{\beta} + \vec{\beta}_{\star},V,W)}{D_{n}(\vec{0},\vec{\beta}_{\star},V,0)}\prod_{k=1}^{m}e^{-i\pi \beta_{k}}.
\end{equation}
The contribution of this paper is to obtain large $n$ asymptotics for $D_{n}(\vec{\alpha},\vec{\beta},V,W)$ for one-cut regular potentials, up to the constant term. The case of general FH singularities (even for $m=1$) and of pure jump-type singularities (for $m \geq 2$) have not been obtained rigorously, even for the Gaussian potential $V(x) = 2x^{2}$. In particular, asymptotics for the partition functions $\widetilde{Z}_{n}$ and for the correlations \eqref{generalise correlation referee} and \eqref{ratio full FH} are not known.
\newpage
\begin{theorem}\label{Main theorem}
Let $m \in \mathbb{N}$, and let $t_{j}$, $\alpha_{j}$ and $\beta_{j}$ be such that $t_{j} \in (-1,1)$, $t_{j} \neq t_{k}$ for $1\leq j\neq k \leq m$, $\Re \alpha_{j} > -1$ and $\Re \beta_{j} \in (\frac{-1}{4},\frac{1}{4})$, for $j=1,...,m$. Let $V$ be a one-cut regular potential whose equilibrium measure is supported on $[-1,1]$ with density $\psi(x)\sqrt{1-x^{2}}$, and let $W: \mathbb{R}\to\mathbb{R}$ be analytic in a neighbourhood of $[-1,1]$, locally H\"{o}lder-continuous on $\mathbb{R}$ and such that $W(x) = \bigO(V(x)), \mbox{ as } |x| \to \infty$. As $n \to \infty$, we have
\begin{equation}\label{asymp thm}
D_{n}(\vec{\alpha},\vec{\beta},V,W) = \exp\left(C_{1} n^{2} + C_{2} n + C_{3} \log n + C_{4} + \bigO \Big( \frac{\log n}{n^{1-4\beta_{\max}}} \Big)\right),
\end{equation}
with $\beta_{\max} = \max \{ |\Re \beta_{1}|,...,|\Re \beta_{m}| \}$ and 
\begin{align}
& C_{1} = - \log 2 - \frac{3}{4} - \frac{1}{2} \int_{-1}^{1} \sqrt{1-x^{2}}(V(x)-2x^{2})\left( \frac{2}{\pi} + \psi(x) \right)dx, \\
& C_{2} = \log(2\pi) - \mathcal{A}\log 2 - \frac{\mathcal{A}}{2\pi}\int_{-1}^{1} \frac{V(x)-2x^{2}}{\sqrt{1-x^{2}}}dx + \int_{-1}^{1}\psi(x)\sqrt{1-x^{2}}W(x)dx  \\
& \hspace{1cm} + \sum_{j=1}^{m} \left( \frac{\alpha_{j}}{2}(V(t_{j})-1) + \pi i \beta_{j} \left( 1 - 2 \int_{t_{j}}^{1}\psi(x)\sqrt{1-x^{2}}dx \right) \right), \nonumber \\
& C_{3} = - \frac{1}{12} + \sum_{j=1}^{m} \bigg( \frac{\alpha_{j}^{2}}{4} - \beta_{j}^{2} \bigg), \\
& C_{4} = \zeta^{\prime}(-1)  + \frac{\mathcal{A}}{2\pi} \int_{-1}^{1} \frac{W(x)}{\sqrt{1-x^{2}}}dx - \frac{1}{4\pi^{2}}\int_{-1}^{1}  \frac{W(y)}{\sqrt{1-y^{2}}} \bigg(\dashint_{-1}^{1} \frac{W^{\prime}(x)\sqrt{1-x^{2}}}{x-y}dx \bigg) dy \\
& \hspace{1cm} - \frac{1}{24}\log \left( \frac{\pi^{2}}{4}\psi(1)\psi(-1) \right) + \sum_{1\leq j < k \leq m} \log  \Bigg(  \frac{\big(1-t_{j}t_{k}-\sqrt{(1-t_{j}^{2})(1-t_{k}^{2})}\big)^{2\beta_{j}\beta{k}}}{2^{\frac{\alpha_{j}\alpha_{k}}{2}}|t_{j}-t_{k}|^{\frac{\alpha_{j}\alpha_{k}}{2} + 2\beta_{j}\beta_{k}}} \Bigg) \nonumber \\
& \hspace{1cm} +\sum_{j=1}^{m} \Bigg(  i\mathcal{A}\beta_{j}\arcsin t_{j} - \frac{i\pi}{2}\beta_{j}\mathcal{A}_{j} + \log \frac{G(1+\frac{\alpha_{j}}{2}+\beta_{j})G(1+\frac{\alpha_{j}}{2}-\beta_{j})}{G(1+\alpha_{j})} \Bigg) \nonumber \\
&  \hspace{1cm}+\sum_{j=1}^{m} \Bigg(  \bigg(\frac{\alpha_{j}^{2}}{4}-\beta_{j}^{2}\bigg)\log\left(\frac{\pi}{2}\psi(t_{j})\right) - \frac{\alpha_{j}}{2}W(t_{j}) + i\frac{\beta_{j}}{\pi} \sqrt{1-t_{j}^{2}} \dashint_{-1}^{1} \frac{W(x)}{\sqrt{1-x^{2}}(t_{j}-x)}dx \Bigg) \nonumber  \\
& \hspace{1cm}+\sum_{j=1}^{m} \bigg( \frac{\alpha_{j}^{2}}{4}-3\beta_{j}^{2} \bigg)\log\left(2\sqrt{1-t_{j}^{2}}\right), \nonumber
\end{align}
where $G$ is Barnes' $G$-function, $\zeta$ is Riemann's zeta-function and where we use the notations
\begin{equation}
\mathcal{A} = \sum_{j=1}^{m} \alpha_{j}, \qquad \mathcal{A}_{j} = \sum_{l=1}^{j-1} \alpha_{l} - \sum_{l=j+1}^{m} \alpha_{l}.
\end{equation}
Furthermore, the error term in \eqref{asymp thm} is uniform for all $\alpha_{k}$ in compact subsets of \\$\{ z \in \mathbb{C}: \Re z >-1 \}$, for all $\beta_{k}$ in compact subsets of  $\{ z \in \mathbb{C}: \Re z \in \big( \frac{-1}{4},\frac{1}{4} \big) \}$, and uniform in $t_{1},...,t_{m}$, as long as there exists $\delta > 0$ independent of $n$ such that
\begin{equation}\label{def of delta}
\min_{j\neq k}\{ |t_{j}-t_{k}|,|t_{j}-1|,|t_{j}+1|\} \geq \delta.
\end{equation}
\end{theorem}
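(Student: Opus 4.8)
The plan is to combine the Deift--Zhou steepest descent analysis of the Riemann--Hilbert (RH) problem for the orthogonal polynomials associated to $w$ with a differential identity that I integrate against reference cases whose asymptotics are already known. First I would write $D_{n}(\vec{\alpha},\vec{\beta},V,W) = \prod_{k=0}^{n-1} h_{k}$, where $h_{k}$ is the squared norm of the $k$-th monic orthogonal polynomial for the weight $w$ in \eqref{weight_general}, and encode these polynomials in the standard $2\times 2$ RH problem for a matrix $Y(z)$ with jump $w$ across $\mathbb{R}$ and normalisation $Y(z) = (I + \bigO(1/z))\, z^{n\sigma_{3}}$ as $z \to \infty$. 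Because $V$ is one-cut regular with equilibrium measure supported on $[-1,1]$, the whole analysis is anchored on the density $\psi(x)\sqrt{1-x^{2}}$ and the constant $\ell$ from \eqref{var equality}--\eqref{var inequality}.

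The steepest descent proceeds through the usual chain $Y \mapsto T \mapsto S \mapsto R$. The first transformation uses the $g$-function $g(z) = \int \log(z-s)\, d\mu_{V}(s)$ to normalise $Y$ at infinity; the variational conditions make the jumps of $T$ exponentially small off $[-1,1]$ and purely oscillatory on it. Opening lenses around $[-1,1]$ produces $S$, whose jumps are exponentially close to the identity away from the cut and from the singularities $t_{j}$. The global parametrix $N$ solves the limiting problem on $[-1,1]$ and is built from a Szeg\H{o}-type function carrying the dependence on $e^{W}$ and on the Fisher--Hartwig factor $\omega$; it is from this function, together with the integration below, that the cross terms in $C_{4}$ involving $\alpha_{j}\alpha_{k}$, $\beta_{j}\beta_{k}$ and the factor $(1-t_{j}t_{k}-\sqrt{(1-t_{j}^{2})(1-t_{k}^{2})})$, as well as the single and double $W$-integrals, will emerge. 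For the local parametrices I would use Airy parametrices at the soft edges $\pm 1$ and, at each $t_{j}$, a parametrix built from confluent hypergeometric functions that simultaneously carries the root exponent $\alpha_{j}$ and the jump exponent $\beta_{j}$; these local models are the source of the Barnes $G$-function factors $G(1+\tfrac{\alpha_{j}}{2}+\beta_{j})G(1+\tfrac{\alpha_{j}}{2}-\beta_{j})/G(1+\alpha_{j})$ and of the $(\tfrac{\alpha_{j}^{2}}{4}-\beta_{j}^{2})\log n$ contributions to $C_{3}$. The final $S\mapsto R$ step leaves $R$ with jumps of size $I + \bigO(n^{-1+4\beta_{\max}})$ on the matching circles, and a small-norm argument yields $R = I + \bigO(n^{-1+4\beta_{\max}})$ uniformly; the exponent $4\beta_{\max}$ reflects the growth of the confluent hypergeometric parametrices on these circles, which also dictates the restriction $\Re\beta_{j} \in (\tfrac{-1}{4},\tfrac{1}{4})$ and the stated error term.

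To recover the constants up to $C_{4}$, I would introduce a one-parameter deformation of the weight --- for instance scaling $(\vec{\alpha},\vec{\beta},W) \mapsto (\lambda\vec{\alpha},\lambda\vec{\beta},\lambda W)$, or deforming the parameters one at a time --- and differentiate $\log D_{n}$ in $\lambda$. This derivative admits a closed expression in terms of the RH solution (equivalently the recurrence coefficients and the values of $Y$, or the correlation kernel) evaluated at the $t_{j}$ and the endpoints. Substituting the steepest-descent asymptotics of the preceding step and integrating $\lambda$ from $0$ to $1$ reconstructs $C_{1},\dots,C_{4}$; the constant of integration at $\lambda = 0$ is fixed by the partition-function asymptotics for $D_{n}(\vec{0},\vec{0},V,0)$, ultimately anchored to the explicit Gaussian value \eqref{partition GUE}. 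The pure root-type result of \cite{BerWebbWong} and the single Gaussian jump result of \cite{ItsKrasovsky} then serve as consistency checks on the extracted constants.

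The hardest part will be the constant term $C_{4}$. It requires, first, computing the subleading terms of the steepest descent --- in particular the $\bigO(1)$ contribution of the first correction $R_{1}$ in the small-norm problem, which couples all the singularities and feeds the cross terms --- and, second, carrying out the $\lambda$-integration of the differential identity exactly, keeping careful track of the Barnes $G$-functions, the principal-value double integral of $W$, and the constant of integration. A secondary difficulty is the uniformity in $t_{1},\dots,t_{m}$: the confluent hypergeometric parametrices and all matching estimates must be controlled uniformly on circles whose radii stay bounded below by a fixed fraction of $\delta$, which is precisely what the separation hypothesis \eqref{def of delta} provides.
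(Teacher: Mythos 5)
Your Riemann--Hilbert machinery coincides with the paper's (the chain $Y \mapsto T \mapsto S \mapsto R$, Szeg\H{o}-function global parametrix, Airy parametrices at $\pm 1$, confluent hypergeometric parametrices at the $t_{j}$, small-norm analysis), but your integration scheme has a genuine gap. You keep $V$ fixed and deform only $(\vec{\alpha},\vec{\beta},W)$, so the constant of integration in your scheme is $D_{n}(\vec{0},\vec{0},V,0)$, which you declare to be ``ultimately anchored to the explicit Gaussian value \eqref{partition GUE}''. There is, however, no route in your proposal from the Gaussian value to $D_{n}(\vec{0},\vec{0},V,0)$: for a general real-analytic one-cut regular $V$ the asymptotics of this partition function, up to and including the constant term, were not available in the literature (the cited results \cite{ErcMcL}, \cite{BleIts} require $V$ polynomial plus further assumptions), and they are themselves part of the content of Theorem \ref{Main theorem} (take $m=0$, $W\equiv 0$: the constants $C_{1}$, $C_{3}=-\tfrac{1}{12}$ and $C_{4}=\zeta^{\prime}(-1)-\tfrac{1}{24}\log\big(\tfrac{\pi^{2}}{4}\psi(1)\psi(-1)\big)$ are exactly that statement). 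So as written your argument is circular. The paper closes precisely this hole with a deformation of the potential that your proposal omits: $V_{s}(x)=(1-s)2x^{2}+sV(x)$, which is one-cut regular for every $s\in[0,1]$ with density $\psi_{s}=(1-s)\tfrac{2}{\pi}+s\psi$, combined with the differential identity \eqref{diff identity s} and integrated in $s$ from the Gaussian endpoint to $V$; this step is what produces $C_{1}$, the $V$-dependent parts of $C_{2}$, and the terms $\big(\tfrac{\alpha_{j}^{2}}{4}-\beta_{j}^{2}\big)\log\big(\tfrac{\pi}{2}\psi(t_{j})\big)$ and $-\tfrac{1}{24}\log\big(\tfrac{\pi^{2}}{4}\psi(1)\psi(-1)\big)$ in $C_{4}$. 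The paper's actual chain is: Krasovsky's formula \eqref{asymptotics alpha Krasovsky} for $D_{n}(\vec{\alpha},\vec{0},2x^{2},0)$ as the only external input, then integration in each $\beta_{k}$, then in the potential parameter $s$, then in the $W$-parameter $t$.

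A second, smaller omission: for complex $\vec{\alpha},\vec{\beta}$ the weight is complex, so the orthogonal polynomials need not exist (some determinant $D_{k}^{(n)}$ may vanish), and every differential identity you plan to integrate is a priori valid only off an $n$-dependent exceptional set of parameters. The paper spends real effort here: the RH analysis yields existence of $Y$ for all $n\geq n_{\star}$ uniformly on compacts, the identities are first proved on $\Omega\setminus\widetilde{\Omega}^{(n)}$ with $\widetilde{\Omega}^{(n)}$ (locally) finite, and then extended to all of $\Omega$ by analyticity of $D_{n}$ in the parameters before integrating; an analogous argument would be needed along your scaling path $\lambda\mapsto(\lambda\vec{\alpha},\lambda\vec{\beta},\lambda W)$, and in the potential variable the exceptional set can even accumulate at the endpoints $s=0,1$. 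Finally, a minor quantitative slip: the jump of $R$ on $\partial\mathcal{D}_{t_{k}}$ is $I+\bigO(n^{-1+2|\Re\beta_{k}|})$, not $I+\bigO(n^{-1+4\beta_{\max}})$; the exponent $4\beta_{\max}$ in the final error arises only when products of factors $n^{\pm 2\Re\beta_{j}}$ are assembled in the differential identities.
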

\begin{remark}
The notation $\dashint$ stands for the Cauchy principal value integral.
\end{remark}
\begin{remark}\label{remark: beta 1/4 1/2}
If some parameters among $\alpha_{1},...,\alpha_{m},\beta_{1},...,\beta_{m},t_{1},...,t_{m}$ tend to the boundaries of their domains at a sufficiently slow speed as $n \to \infty$, the asymptotic formula \eqref{asymp thm} is still expected to hold, but with a worse error term. If the speed is faster, critical transitions should take place and \eqref{asymp thm} is no longer valid. We don't analyse these cases here. Such transitions have been studied in \cite{ChCl,ChCl2} for Toeplitz determinants with a single jump-type singularity with $\beta_{1} \to \pm i \infty$, in \cite{ClaeysFahs} for Hankel determinants with two merging root-type singularities, in \cite{ClKr} for Toeplitz determinants with two merging FH singularities and in \cite{WuXuZhao} for Hankel determinants with a FH singularity approaching the edge (i.e. approaching $1$ or $-1$).

\vspace{0.2cm}\hspace{-0.55cm}The assumption $\Re \beta_{j} \in (\frac{-1}{4},\frac{1}{4})$, $j = 1,...,m$ makes it easier to handle some technicalities in our analysis. The methods presented here allow with extra efforts to treat the more general situation $\Re \beta_{j} \in (\frac{-1}{2},\frac{1}{2})$ and to compute more terms in the asymptotics. A similar technical problem was first observed in \cite{ItsKrasovsky} for large $n$ asymptotics of $D_{n}(\vec{0},\vec{\beta},2x^{2},0)$ with $m=1$ (i.e. $\vec{\beta}=\beta_{1}$). For the reader interested to extend Theorem \ref{Main theorem} to the most general situation $\Re \beta_{j} \in (\tfrac{-1}{2},\tfrac{1}{2}]$, we recommend to read \cite[Section 5.2]{DeiftItsKrasovsky} and \cite[Remark 1.22]{DIK}, where analogous situations have been studied for Toeplitz determinants with FH singularities and for Hankel determinants associated to a weight of the form \eqref{weight Deift Its Krasovsky}. 
\end{remark}
\begin{remark}\label{Remark: support}
The assumption that the support of the equilibrium measure is $[-1,1]$ is without loss of generality. Suppose that $\widetilde{V}$ is a one-cut regular potential whose equilibrium measure is supported on $[a,b]$. Then, the change of variables $y_{j} = \left( x_{j} - \frac{a+b}{2} \right)/\left( \frac{b-a}{2} \right)$, $j = 1,...,n$ in \eqref{Hankel determinants} shows that
\begin{equation}
D_{n}(\vec{\alpha},\vec{\beta},\widetilde{V},\widetilde{W};\vec{t}_{\star}) = \left( \frac{b-a}{2} \right)^{n^{2}+n\mathcal{A}}D_{n}(\vec{\alpha},\vec{\beta},V,W;\vec{t}),
\end{equation}
where we have explicitly written the dependence in $\vec{t}_{\star} = (\widetilde{t}_{1},...,\widetilde{t}_{m})$, where $\vec{t} = (t_{1},...,t_{m})$ is given by $t_{j} = \left( \widetilde{t}_{j} - \frac{a+b}{2} \right)/\left( \frac{b-a}{2} \right)$, $j = 1,...,m$, and where
\begin{equation*}
V(x) = \widetilde{V}\left( \frac{a+b}{2}+ x \frac{b-a}{2} \right), \quad W(x) = \widetilde{W}\left( \frac{a+b}{2}+ x \frac{b-a}{2} \right).
\end{equation*}
Also, if $\widetilde{\ell}$ and $\widetilde{\psi}(x) \sqrt{(b-x)(x-a)}$ are respectively the Euler-Lagrange constant and the density of the equilibrium measure of $\widetilde{V}$, the change of variables $u = \left( s - \frac{a+b}{2} \right)/\left( \frac{b-a}{2} \right)$ in \eqref{var equality} and \eqref{var inequality} implies that the equilibrium measure of $V$ is supported on $[-1,1]$ and can be written as $\psi(x)\sqrt{1-x^{2}}$. Furthermore, $\psi$ and the Euler-Lagrange constant $\ell$ of $V$ are given by
\begin{equation}
\ell = \widetilde{\ell} + 2\log\left(\frac{b-a}{2}\right), \qquad \psi(x) = \left( \frac{b-a}{2} \right)^{2} \widetilde{\psi}\left( \frac{a+b}{2} + x \frac{b-a}{2} \right).
\end{equation}
\end{remark}

\subsection{Outline}
We will compute the asymptotics for $D_{n}(\vec{\alpha},\vec{\beta},V,W)$ in three steps which can be schematized as
\begin{equation}
D_{n}(\vec{\alpha},\vec{0},2x^{2},0) \mapsto D_{n}(\vec{\alpha},\vec{\beta},2x^{2},0) \mapsto D_{n}(\vec{\alpha},\vec{\beta},V,0) \mapsto D_{n}(\vec{\alpha},\vec{\beta},V,W).
\end{equation}
Each of these steps is subdivided into three parts: 1) a differential identity for $\log D_{n}(\vec{\alpha},\vec{\beta},V,W)$, 2) an asymptotic analysis of a Riemann-Hilbert (RH) problem, and 3) the integration of the differential identity.

\vspace{0.2cm}\hspace{-0.52cm}As mentioned in the introduction, asymptotics for $D_{n}(\vec{\alpha},\vec{0},2x^{2},0)$ are already known (see \eqref{partition GUE} and \eqref{asymptotics alpha Krasovsky}). In the first step, for each $\beta_{k}$, $k \in \{1,...,m\}$, we will obtain an identity which expresses $\partial_{\beta_{k}} \log D_{n}(\vec{\alpha},\vec{\beta},2x^{2},0)$ in terms of orthogonal polynomials (OPs). These differential identities will be similar to the ones presented in $\cite{ItsKrasovsky, Krasovsky}$, but some extra care due to the presence of general FH singularities is needed. We integrate these differential identities in Section \ref{Section: beta}. 

\vspace{0.2cm}\hspace{-0.52cm}In the second step, we follow \cite{BerWebbWong} and introduce a deformation parameter $s\in [0,1]$ for the potential, 
\begin{equation}\label{deformation parameter potential}
V_{s}(x) = (1-s)2x^{2} + sV(x).
\end{equation}
This potential interpolates between $2x^{2}$ (for $s=0$) and $V(x)$ (for $s=1$). From \eqref{var equality} and \eqref{var inequality}, it is easily checked that $V_{s}$ is a one-cut regular potential for every $s \in [0,1]$, and the associated density $\psi_{s}(x)$ and Euler-Lagrange constant $\ell_{s}$ are simply given by
\begin{align}\label{d_s and ell_s}
& \psi_{s}(x) = (1-s)\frac{2}{\pi} + s \psi(x), \qquad \ell_{s} = (1-s)(1+2\log 2) + s \ell.
\end{align} 
We present an identity for $\partial_{s} \log D_{n}(\vec{\alpha},\vec{\beta},V_{s},0)$ in Section \ref{Section:diff identities}, and its integration in Section \ref{Section: integrating V}.

\vspace{0.2cm}\hspace{-0.52cm}In the last step, the potential $V$ is fixed and $W$ is analytic in a neighbourhood of $[-1,1]$ and locally H\"{o}lder continuous on $\mathbb{R}$. For $t \in [0,1]$, we define
\begin{equation}\label{deformation parameter Tau}
W_{t}(z) = \log\big(1-t+t e^{W(z)}\big),
\end{equation}
where the principal branch of the log is taken. For every $t \in [0,1]$, $W_{t}$ is analytic on a neighbourhood of $[-1,1]$ independent of $t$ and is still H\"{o}lder continuous on $\mathbb{R}$. This deformation is the same as the one used in \cite{DeiftItsKrasovsky, BerWebbWong}. Then, we proceed similarly as in the two previous steps, with an identity for $\partial_{t} \log D_{n}(\vec{\alpha},\vec{\beta},V,W_{t})$ in Section \ref{Section:diff identities}, and its integration in Section \ref{Section: Tau}.

\vspace{0.2cm}\hspace{-0.52cm}The differential identities for $\log D_{n}(\vec{\alpha},\vec{\beta},V,W)$ are expressed in terms of OPs, which are orthogonal with respect to a weight which depends on the step:
\begin{align*}
& w(x) = e^{-2nx^{2}}\omega(x), & & \mbox{ in step 1}, \\
& w_{s}(x) = e^{-nV_{s}(x)} \omega(x), & & \mbox{ in step 2}, \\
& w_{t}(x) = e^{-nV(x)}e^{W_{t}(x)} \omega(x), & & \mbox{ in step 3}.
\end{align*}
All these weights belong to the class of weights presented in \eqref{weight_general} (which we also denote by $w$). In Section \ref{Section:OP and Y}, we introduce  the family of OPs with respect to $w$ and the associated RH problem found by Fokas, Its and Kitaev \cite{FokasItsKitaev}. We perform an asymptotic analysis of this RH problem using the Deift/Zhou \cite{DeiftZhou1992, DeiftZhou, DKMVZ2, DKMVZ1} steepest descent method in Section \ref{Section: steepest descent}.

\section{\hspace{-0.34cm} Orthogonal polynomials and a Riemann-Hilbert problem}\label{Section:OP and Y}

The orthonormal polynomials $p_{k}(x) = p_{k}^{(n)}(x;\vec{\alpha},\vec{\beta},V,W)$ of degree $k$ associated to the weight $w$ given in \eqref{weight_general} are defined through the following orthogonality conditions:
\begin{equation}\label{ortho conditions}
\int_{\mathbb{R}} p_{k}(x) p_{j}(x) w(x) dx = \delta_{jk}, \qquad j = 0,1,...,k.
\end{equation}
We consider $D_{k}^{(n)}(\vec{\alpha},\vec{\beta},V,W) = \det(w_{j+l-2})_{j,l=1,...,k}$, where $w_{j} = w_{j}^{(n)}(\vec{\alpha},\vec{\beta},V,W) = \int_{-\infty}^{\infty}x^{j}w(x)dx$ is the $j$-th moment of $w$. When there is no confusion, we will simply write $D_{k}^{(n)}$ instead of $D_{k}^{(n)}(\vec{\alpha},\vec{\beta},V,W)$, and we set $D_{0}^{(n)} = 1$. With this notation, we have $D_{n}^{(n)} = D_{n}$, where $D_{n}$ is defined in \eqref{Hankel determinants 1}. The existence of the system of orthogonal polynomials depends if one (or several) of these determinants vanishes. It is well-known (see e.g. \cite{Szego OP, Deift}) that if
\begin{equation}\label{condition determinant non zero}
D_{k}^{(n)}(\vec{\alpha},\vec{\beta},V,W) \neq 0 \quad \mbox{ and } \quad D_{k+1}^{(n)}(\vec{\alpha},\vec{\beta},V,W) \neq 0, 
\end{equation}
then the orthogonal polynomial $p_{k}$ exists and is given by
\begin{equation}\label{lol 33}
p_{k}(x) = \frac{\left| \begin{array}{c c c c}
w_{0} & \cdots & w_{k-1} & w_{k} \\
\vdots & \ddots & \vdots & \vdots \\
w_{k-1} & \cdots & w_{2k-2} & w_{2k-1} \\
1 & \cdots & x^{k-1} & x^{k} \\
\end{array} \right|}{\sqrt{D_{k+1}^{(n)}(\vec{\alpha},\vec{\beta},V,W)}\sqrt{D_{k}^{(n)}(\vec{\alpha},\vec{\beta},V,W)}} = \kappa_{k} x^{k} + ...,
\end{equation}
where $\kappa_{k} = \kappa_{k}^{(n)}(\vec{\alpha},\vec{\beta},V,W)$ is the leading coefficient of $p_{k}$, and is given by
\begin{equation}\label{lol 32}
\kappa_{k} = \frac{\sqrt{D_{k}^{(n)}(\vec{\alpha},\vec{\beta},V,W)}}{\sqrt{D_{k+1}^{(n)}(\vec{\alpha},\vec{\beta},V,W)}} \neq 0, \qquad \mbox{ and where } \quad \arg \sqrt{D_{k}^{(n)}},\arg \sqrt{D_{k+1}^{(n)}} \in \left(-\frac{\pi}{2},\frac{\pi}{2}\right].
\end{equation}
Note that \eqref{ortho conditions} alone does not uniquely define the orthonormal polynomials $p_{k}$ and the leading coefficients $\kappa_{k}$ (they are unique up to multiplicative factors of $-1$). We fixed this with the (arbitrary) choice of the arguments in \eqref{lol 32}. However, in the present paper we will only use $\kappa_{k}^{2}$, $\kappa_{k}p_{k}$, as well as the monic orthogonal polynomials $\kappa_{k}^{-1}p_{k}$, which are always uniquely defined by \eqref{ortho conditions} (if they exist), i.e. they are independent of the choice of the arguments in \eqref{lol 32}. We will also use the subleading coefficients of the monic orthogonal polynomials:
\begin{equation}\label{def coeff OPs}
\kappa_{k}^{-1}p_{k}(x) = x^{k} + \eta_{k}x^{k-1} + \gamma_{k} x^{k-2} + ..., \qquad k \in \mathbb{N}, \mbox{ } k \geq 2.
\end{equation}
If $D_{k}^{(n)} \neq 0$ for every $k = 0,1,...,n$, \eqref{lol 32} implies the following well-known formula
\begin{equation}\label{lol 34}
D_{n}(\vec{\alpha},\vec{\beta},V,W) = \prod_{j=0}^{n-1} \kappa_{k}^{-2}.
\end{equation}
Note that if $\alpha_{k} \in (-1,\infty)$ and $\beta_{k} \in i\mathbb{R}$ for all $k =1,...,m$, then $w$ is positive and it follows from Heine's integral representation for Hankel determinants that $D_{k}^{(n)}(\vec{\alpha},\vec{\beta},V,W) > 0$ for all $k$. As a consequence, the sequence of orthogonal polynomials $(p_{k})_{k\in \mathbb{N}}$ exists. Nevertheless, for general values of the parameters $\vec{\alpha}$ and $\vec{\beta}$, the weight $w$ is complex and there is no guarantee of existence for the orthogonal polynomials (one of several determinants $D_{k}^{(n)}$ may vanish). This will cause some extra difficulties in the analysis.  We discuss this in more detail at the beginning of Section \ref{Section:diff identities}. 

\vspace{0.2cm}\hspace{-0.525cm}For sufficiently large $n$, we will prove existence and obtain explicit asymptotics of $p_{n}$ via the Deift-Zhou steepest descent method applied on a Riemann-Hilbert problem. Consider the matrix valued function $Y(z) = Y^{(n)}(z;\vec{\alpha},\vec{\beta},V,W)$, defined by
\begin{equation}
Y(z) = \begin{pmatrix}\label{Y definition}
\kappa_{n}^{-1}p_{n}(z) & \frac{\kappa_{n}^{-1}}{2\pi i} \int_{\mathbb{R}} \frac{p_{n}(x)w(x)}{x-z}dx \\
-2\pi i \kappa_{n-1} p_{n-1}(z) & -\kappa_{n-1} \int_{\mathbb{R}} \frac{p_{n-1}(x)w(x)}{x-z}dx
\end{pmatrix}.
\end{equation}
Note that $Y$ is expressed only in terms of $\kappa_{n}^{-1}p_{n}$, $\kappa_{n-1}p_{n-1}$ and their Cauchy transforms. It is known \cite{FokasItsKitaev} that $Y$ can be characterized as the solution to a boundary value problem for analytic functions, called the RH problem for $Y$. Furthermore, $Y$ defined by \eqref{Y definition} (as well as the solution to the RH problem for $Y$), exists if and only if $D_{n-1}^{(n)} \neq 0$, $D_{n}^{(n)} \neq 0$ and $D_{n+1}^{(n)} \neq 0$, and is always unique.
\subsubsection*{RH problem for $Y$}
\begin{itemize}
\item[(a)] $Y : \mathbb{C}\setminus \mathbb{R} \to \mathbb{C}^{2\times 2}$ is analytic.
\item[(b)] The limits of $Y(z)$ as $z$ approaches $\mathbb{R}\setminus \{t_{1},...,t_{m}\}$ from above and below exist, are continuous on $\mathbb{R}\setminus \{t_{1},...,t_{m}\}$ and are denoted by $Y_+$ and $Y_-$ respectively. Furthermore they are related by
\begin{equation}\label{jump relations of Y}
Y_{+}(x) = Y_{-}(x) \begin{pmatrix}
1 & w(x) \\ 0 & 1
\end{pmatrix}, \hspace{0.5cm} \mbox{ for } x \in \mathbb{R}\setminus \left\{t_{1},...,t_{m} \right\}.
\end{equation}
\item[(c)] As $z \to \infty$, we have $Y(z) = \left(I + \bigO(z^{-1})\right) z^{n\sigma_{3}}$, where $\sigma_{3} = \begin{pmatrix}
1 & 0 \\ 0 & -1
\end{pmatrix}$.
\item[(d)] As $z$ tends to $t_{k}$, $k \in \{1,...,m\}$, the behaviour of $Y$ is
\begin{equation}
\begin{array}{l l}
\displaystyle Y(z) = \begin{pmatrix}
\bigO(1) & \bigO(\log (z-t_{k})) \\
\bigO(1) & \bigO(\log (z-t_{k}))
\end{pmatrix}, & \displaystyle \mbox{ if } \Re \alpha_{k} = 0, \\[0.35cm]
\displaystyle Y(z) = \begin{pmatrix}
\bigO(1) & \bigO(1)+\bigO((z-t_{k})^{\alpha_{k}}) \\
\bigO(1) & \bigO(1)+\bigO((z-t_{k})^{\alpha_{k}})
\end{pmatrix}, & \displaystyle \mbox{ if } \Re \alpha_{k} \neq 0.
\end{array}
\end{equation}
\end{itemize}
The above condition (b) follows from the Sokhotski formula and relies on the assumption that $W$ is locally H\"{o}lder continuous on $\mathbb{R}$ (see e.g. \cite{Gakhov}).

\section{Differential identities}\label{Section:diff identities}
In this section, we derive several identities which will be useful to prove Theorem \ref{Main theorem}. These identities will be valid only when all orthogonal polynomials $p_{k}$ for $k = 0,...,n$ exist. We give here an overview of how we will handle these technicalities in this paper, following \cite{Krasovsky, ItsKrasovsky, DIK}. By applying Lebesgue's dominated convergence theorem on the moments $w_{j}$, we can prove that
\begin{align*}
& \mbox{(a) the determinants } D_{k}^{(n)}(\vec{\alpha},\vec{\beta},2x^{2},0) \mbox{ are analytic functions of } (\vec{\alpha},\vec{\beta}) \in \mathcal{P}_{\alpha} \times \mathbb{C}^{m}, \mbox{ where } \\
& \hspace{0.6cm}\mathcal{P}_{\alpha} = \{ \vec{\alpha}\in \mathbb{C}^{m} :  \Re \alpha_{j} > -1, \mbox{ for all } j = 1,...,m \}, \\
& \mbox{(b) the determinants } D_{k}^{(n)}(\vec{\alpha},\vec{\beta},V_{s},0) \mbox{ are analytic functions of } (\vec{\alpha},\vec{\beta},s) \in \mathcal{P}_{\alpha} \times \mathbb{C}^{m} \times (0,1), \\
& \mbox{(c) the determinants } D_{k}^{(n)}(\vec{\alpha},\vec{\beta},V,W_{t}) \mbox{ are analytic functions of } (\vec{\alpha},\vec{\beta},t) \in \mathcal{P}_{\alpha} \times \mathbb{C}^{m} \times [0,1].
\end{align*}
The property (b) deserves further explanation. If $V(x)/x^{2} \to + \infty$ as $|x| \to +\infty$, the moments are not well-defined if $\Re s <0$, since $V_{s}(x)\to -\infty$ as $|x| \to +\infty$. In this case, the determinants $D_{k}^{(n)}(\vec{\alpha},\vec{\beta},V_{s},0)$ are not analytic as functions of $s$ in a neighbourhood of $s=0$. Similarly, if $V(x)=o(x^{2})$ as $|x| \to +\infty$, the determinants $D_{k}^{(n)}(\vec{\alpha},\vec{\beta},V_{s},0)$ are not analytic as functions of $s$ in a neighbourhood of $s=1$. Note also from (a), (b), (c) above and from \eqref{Y definition}, \eqref{lol 33} and \eqref{lol 32}, that $Y^{(n)}(z;\vec{\alpha},\vec{\beta},2x^{2},0)$, $Y^{(n)}(z;\vec{\alpha},\vec{\beta},V_{s},0)$ and $Y^{(n)}(z;\vec{\alpha},\vec{\beta},V,W_{t})$ are meromorphic functions of $(\vec{\alpha},\vec{\beta}) \in \mathcal{P}_{\alpha} \times \mathbb{C}^{m}$, $(\vec{\alpha},\vec{\beta},s) \in \mathcal{P}_{\alpha} \times \mathbb{C}^{m} \times (0,1)$ and $(\vec{\alpha},\vec{\beta},t) \in \mathcal{P}_{\alpha} \times \mathbb{C}^{m} \times [0,1]$ respectively.

\vspace{0.2cm}\hspace{-0.55cm}We will find identities of the forms (see equations \eqref{differential identity}, \eqref{diff identity s} and \eqref{diff identity t} below)
\begin{align}
& \partial_{\beta_{k}}\log D_{n}(\vec{\alpha},\vec{\beta},2x^{2},0) = F_{1,n}(\vec{\alpha},\vec{\beta}), \label{F1} \\
& \partial_{s} \log D_{n}(\vec{\alpha},\vec{\beta},V_{s},0) = F_{2,n}(\vec{\alpha},\vec{\beta},s), \label{F2}\\
& \partial_{t} \log D_{n}(\vec{\alpha},\vec{\beta},V,W_{t}) = F_{3,n}(\vec{\alpha},\vec{\beta},t), \label{F3}
\end{align}
where the functions $F_{1,n}$, $F_{2,n}$ and $F_{3,n}$ (we omit the dependence of $F_{1,n}$ in $k$) are expressed only in terms of $Y$. The identities \eqref{F1}, \eqref{F2} and \eqref{F3} are valid only for $(\vec{\alpha},\vec{\beta})\in \widetilde{\mathcal{P}}_{1}^{(n)}$, for $(\vec{\alpha},\vec{\beta},s)\in \widetilde{\mathcal{P}}_{2}^{(n)}$ and for $(\vec{\alpha},\vec{\beta},t)\in \widetilde{\mathcal{P}}_{3}^{(n)}$ respectively, where
\begin{align*}
& \widetilde{\mathcal{P}}_{1}^{(n)} = \left\{ (\vec{\alpha},\vec{\beta})\in \mathcal{P}_{\alpha}\times \mathbb{C}^{m}: D_{j}^{(n)}(\vec{\alpha},\vec{\beta},2x^{2},0) \neq 0 \mbox{ for all } j = 1,...,n+1 \right\}, \\
& \widetilde{\mathcal{P}}_{2}^{(n)} = \left\{ (\vec{\alpha},\vec{\beta},s)\in \mathcal{P}_{\alpha}\times \mathbb{C}^{m} \times (0,1): D_{j}^{(n)}(\vec{\alpha},\vec{\beta},V_{s},0) \neq 0 \mbox{ for all } j = 1,...,n+1 \right\}, \\
& \widetilde{\mathcal{P}}_{3}^{(n)} = \left\{ (\vec{\alpha},\vec{\beta},t)\in \mathcal{P}_{\alpha}\times \mathbb{C}^{m} \times [0,1]: D_{j}^{(n)}(\vec{\alpha},\vec{\beta},V,W_{t}) \neq 0 \mbox{ for all } j = 1,...,n+1 \right\}.
\end{align*}
In Section \ref{Section: steepest descent}, we will perform an asymptotic analysis on the RH problem for $Y$ as $n \to +\infty$. Let $\Omega$ be a compact subset of $\mathcal{P}_{\alpha}\times \mathcal{P}_{\beta}^{(\frac{1}{4})}$, where
\begin{equation*}
\mathcal{P}_{\beta}^{(\frac{1}{4})} = \{ \vec{\beta}\in \mathbb{C}^{m} :  \Re \beta_{j} \in (-\tfrac{1}{4},\tfrac{1}{4}), \mbox{ for all } j = 1,...,m \}.
\end{equation*}
The analysis of Section \ref{Section: steepest descent} will imply the existence of $F_{1,n}(\vec{\alpha},\vec{\beta})$, $F_{2,n}(\vec{\alpha},\vec{\beta},s)$ and $F_{3,n}(\vec{\alpha},\vec{\beta},t)$ for \textit{all} $n \geq n_{\star}$, where $n_{\star}$ depends only on $\Omega$, for \textit{all} $(\vec{\alpha},\vec{\beta}) \in \Omega$ and for \textit{all} $s,t \in [0,1]$. Furthermore, their large $n$ asymptotics will be explicitly computed up to the constant term in Section \ref{Section: beta}, \ref{Section: integrating V} and \ref{Section: Tau}, and will be valid uniformly for \textit{all} $(\vec{\alpha},\vec{\beta}) \in \Omega$ and for \textit{all} $s,t \in [0,1]$. 

\vspace{0.2cm}\hspace{-0.54cm}Thus, we will prove that the r.h.s. of \eqref{F1} exists for \textit{all} $n \geq n_{\star}$ and for \textit{all} $(\vec{\alpha},\vec{\beta}) \in \Omega$, but the identity \eqref{F1} itself is valid only for 
\begin{equation}
(\vec{\alpha},\vec{\beta}) \in \Omega \cap \widetilde{\mathcal{P}}_{1}^{(n)} = \Omega \setminus \widetilde{\Omega}_{1}^{(n)},
\end{equation}
where $\widetilde{\Omega}_{1}^{(n)}$ consists of at most a finite number of points (see (a) above). The analyticity in $\beta_{k}$ of the determinant $D_{n}(\vec{\alpha},\vec{\beta},2x^{2},0)$ will allow us to extend this differential identity from $\Omega \setminus \widetilde{\Omega}_{1}^{(n)}$ to $\Omega$. We will show that explicitly when integrating the differential identity in Section \ref{Section: beta}. 

\vspace{0.2cm}\hspace{-0.54cm}Likewise, the subset of $\Omega\times (0,1)$ for which the differential identity \eqref{F2} is valid is 
\begin{equation}
\big(\Omega\times (0,1)\big) \cap \widetilde{\mathcal{P}}_{2}^{(n)} = \big(\Omega\times (0,1)\big) \setminus \widetilde{\Omega}_{2}^{(n)}.
\end{equation}
However, since the determinants $D_{k}^{(n)}(\vec{\alpha},\vec{\beta},V_{s},0)$ are analytic as functions of $s \in (0,1)$, but not analytic at $s = 0$ and $s = 1$ (see (b) above), $\widetilde{\Omega}_{2}^{(n)}$ is locally finite on its interior but may have accumulation points on its boundary. More precisely, if $\epsilon \in (0,\tfrac{1}{2})$, then $\widetilde{\Omega}_{2,\epsilon}^{(n)} = \{ (\vec{\alpha},\vec{\beta},s)\in \widetilde{\Omega}_{2}^{(n)}:s \in (\epsilon,1-\epsilon) \}$ consists of at most a finite number of points. This extra difficulty does not appear in \cite{Krasovsky, ItsKrasovsky, DIK}, but does not change drastically the idea. We provide the details in Section \ref{Section: integrating V}.

\vspace{0.2cm}\hspace{-0.54cm}Since $D_{k}^{(n)}(\vec{\alpha},\vec{\beta},V,W_{t})$ are analytic as functions of $t\in [0,1]$ (see (c) above), the subset of $\Omega\times[0,1]$ for which the identity \eqref{F3} is valid is $(\Omega \times [0,1]) \cap \widetilde{\mathcal{P}}_{3}^{(n)} = (\Omega \times [0,1]) \setminus \widetilde{\Omega}_{3}^{(n)}$, where $\widetilde{\Omega}_{3}^{(n)}$ consists of at most a finite number of points. We can extend this identity for \textit{all} $(\vec{\alpha},\vec{\beta},t) \in \Omega \times [0,1]$ in a similar way as for the identity \eqref{F1}.

\vspace{0.2cm}\hspace{-0.54cm}Once the differential identities \eqref{F1}, \eqref{F2} and \eqref{F3} are extended from $\Omega \setminus \widetilde{\Omega}_{1}^{(n)}$ to $\Omega$, from $\big(\Omega\times (0,1)\big) \setminus \widetilde{\Omega}_{2}^{(n)}$ to $\Omega\times(0,1)$ and from $(\Omega \times [0,1]) \setminus \widetilde{\Omega}_{3}^{(n)}$ to $\Omega\times[0,1]$ respectively, we prove Theorem \ref{Main theorem} by integrating them.

\subsection{Differential identity with respect to $\beta_{k}$, $k \in \{1,...,m\}$}\label{subsection: diff identity beta}

As mentioned in the outline, we will first compute the asymptotics of $D_{n}(\vec{\alpha},\vec{\beta},2x^{2},0)$, i.e. when the weight has the form $w(x) = e^{-2nx^{2}}\omega(x)$. For this we will need a differential identity with respect to $\beta_{k}$, for each $k \in \{1,...,m\}$. Suppose $f$ is a smooth and integrable function on $\mathbb{R}$ with sufficient decay at $\pm\infty$. The integral of $\frac{f(x) \omega(x)}{x-t_{k}}$ on $\mathbb{R}$ is not well-defined if $\Re \alpha_{k} \leq 0$, even in the sense of principal value, because of the jump caused by $\beta_{k}$. Therefore, we define a regularized integral by
\begin{equation}\label{def VP}
\mathrm{Reg}_{k}\left( f \right) = \lim_{\epsilon \to 0_{+}} \left[ \alpha_{k} \int_{\mathbb{R}\setminus[t_{k}-\epsilon,t_{k}+\epsilon]} \frac{f(x)\omega(x)}{x-t_{k}} dx - f(t_{k})\omega_{k}(t_{k})(e^{i\pi\beta_{k}}-e^{-i\pi\beta_{k}})\epsilon^{\alpha_{k}} \right],
\end{equation}
where
\begin{equation}\label{def omega_k}
\omega_{k}(z) = \prod_{j\neq k} \omega_{\alpha_{j}}(z)\omega_{\beta_{j}}(z).
\end{equation}
It is easy to observe that the limit in \eqref{def VP} converges. We have the following property.
\begin{proposition}\label{Prop: regularised integral} (adapted from \cite{Krasovsky}).
The regularised integral \eqref{def VP} satisfies
\begin{equation}\label{lim VP}
\mathrm{Reg}_{k}(f) = \lim_{z\to t_{k}}\left( \alpha_{k} \int_{\mathbb{R}} \frac{f(x) \omega(x)}{x-z}dx - \mathcal{J}_{k}(z)\right),
\end{equation}
where the limit is taken along a non-tangential to the real line path in $\{ z \in \mathbb{C}:\Im z > 0 \}$, and
\begin{equation}\label{substraction in VP}
\mathcal{J}_{k}(z) = \left\{ \begin{array}{l l}
\displaystyle \frac{\pi \alpha_{k}}{\sin(\pi\alpha_{k})} f(t_{k})\omega_{k}(t_{k}) (e^{i\pi\beta_{k}}-e^{-i\pi  \alpha_{k}}e^{-i\pi\beta_{k} })(z-t_{k})^{\alpha_{k}}, & \displaystyle \mbox{if } \Re \alpha_{k} \leq 0, \alpha_{k} \neq 0, \\[0.3cm]
\displaystyle f(t_{k})\omega_{k}(t_{k})(e^{i\pi\beta_{k}}-e^{-i\pi\beta_{k}}), & \displaystyle \mbox{if } \alpha_{k} = 0,  \\[0.3cm]
\displaystyle 0, & \displaystyle \mbox{if } \Re \alpha_{k} > 0.
\end{array} \right.
\end{equation}
\end{proposition}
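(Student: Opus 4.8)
The plan is to track the boundary behaviour of the Cauchy-type integral $g(z)=\alpha_{k}\int_{\mathbb{R}}\frac{f(x)\omega(x)}{x-z}\,dx$ as $z\to t_{k}$ non-tangentially from the upper half-plane, to isolate its singular part, and to show that this singular part is precisely $\mathcal{J}_{k}(z)$ while the surviving finite part equals $\mathrm{Reg}_{k}(f)$. First I would localize: fix $\delta>0$ so small that $[t_{k}-\delta,t_{k}+\delta]$ contains no other singularity $t_{j}$, and split the integral over $\mathbb{R}$ into the piece on $[t_{k}-\delta,t_{k}+\delta]$ and the piece on its complement. On the complement the kernel $\frac{1}{x-z}$ stays bounded as $z\to t_{k}$, since $t_{k}$ lies at distance $\delta$ from the region of integration, so that contribution extends analytically to $z=t_{k}$; it appears identically in \eqref{def VP} and \eqref{lim VP} and may be discarded. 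This reduces the claim to the local integral over $\{|x-t_{k}|<\delta\}$.

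On the local piece I would peel off the leading value by writing $f(x)\omega_{k}(x)=f(t_{k})\omega_{k}(t_{k})+\big(f(x)\omega_{k}(x)-f(t_{k})\omega_{k}(t_{k})\big)$, where $\omega_{k}$ is defined in \eqref{def omega_k} and is analytic near $t_{k}$. The bracketed difference vanishes at $t_{k}$ at a Hölder rate, so after multiplication by the remaining factor $|x-t_{k}|^{\alpha_{k}}\omega_{\beta_{k}}(x)$ the resulting integrand is integrable and its Cauchy transform is continuous up to $t_{k}$ (by Sokhotski--Plemelj), while the $\epsilon^{\alpha_{k}}$-term in \eqref{def VP} is negligible in the limit; hence this regular part again contributes equally to both representations. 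What remains is the model integral $f(t_{k})\omega_{k}(t_{k})\,\alpha_{k}\int_{-\delta}^{\delta}\frac{|y|^{\alpha_{k}}\omega_{\beta_{k}}(t_{k}+y)}{y-w}\,dy$ with $w=z-t_{k}$ and $\Im w>0$.

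The heart of the matter is evaluating this model integral. Splitting according to the sign of $y$ turns it into $e^{-i\pi\beta_{k}}\int_{0}^{\delta}\frac{y^{\alpha_{k}}}{y-w}\,dy-e^{i\pi\beta_{k}}\int_{0}^{\delta}\frac{y^{\alpha_{k}}}{y+w}\,dy$. For $-1<\Re\alpha_{k}<0$ I would extend each integral to $\int_{0}^{\infty}$ (the tail $\int_{\delta}^{\infty}$ being analytic in $w$ near $0$ and so contributing only to the finite part) and use the classical formula
\[
\int_{0}^{\infty}\frac{y^{\alpha_{k}}}{y+a}\,dy=-\frac{\pi}{\sin(\pi\alpha_{k})}\,a^{\alpha_{k}},\qquad |\arg a|<\pi .
\]
Taking $a=w$ and $a=-w$, and using that $(-w)^{\alpha_{k}}=e^{-i\pi\alpha_{k}}w^{\alpha_{k}}$ in the principal branch when $\Im w>0$, the singular part as $w\to0$ collapses to
\[
\frac{\pi\alpha_{k}}{\sin(\pi\alpha_{k})}\,f(t_{k})\omega_{k}(t_{k})\big(e^{i\pi\beta_{k}}-e^{-i\pi\alpha_{k}}e^{-i\pi\beta_{k}}\big)w^{\alpha_{k}},
\]
which is exactly $\mathcal{J}_{k}(z)$ in \eqref{substraction in VP}. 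Matching the remaining finite parts with the $\epsilon$-regularized value amounts to comparing $\lim_{w\to0}\big(\int_{0}^{\delta}\frac{y^{\alpha_{k}}}{y\mp w}\,dy-(\text{singular term})\big)$ with the $\epsilon^{\alpha_{k}}$-subtracted $\int_{\epsilon}^{\delta}y^{\alpha_{k}-1}\,dy$, which is a direct computation.

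Finally I would dispatch the three regimes of \eqref{substraction in VP}. For $\Re\alpha_{k}>0$ the half-line integrals converge and $w^{\alpha_{k}}\to0$, so $g(z)$ has a genuine finite limit equal to the ordinary principal-value integral and $\mathcal{J}_{k}=0$ is correct. For $\Re\alpha_{k}\le0$ with $\alpha_{k}\neq0$ the computation above applies. The case $\alpha_{k}=0$ is immediate, since the prefactor $\alpha_{k}$ annihilates the integral in both \eqref{def VP} and \eqref{lim VP}, leaving only $-f(t_{k})\omega_{k}(t_{k})(e^{i\pi\beta_{k}}-e^{-i\pi\beta_{k}})$, which is the $\alpha_{k}\to0$ limit of the general expression; the three pieces are thus consistent boundary values of one analytic function of $\alpha_{k}$. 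I expect the main obstacle to be the bookkeeping of the finite parts and branch cuts: proving that the non-singular remainder of the model Cauchy integral coincides exactly with the finite part of the $\epsilon$-regularized integral, and handling the principal-branch identity $(-w)^{\alpha_{k}}=e^{-i\pi\alpha_{k}}w^{\alpha_{k}}$ correctly along the non-tangential approach, uniformly across the transition at $\alpha_{k}=0$.
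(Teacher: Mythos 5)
Your route is genuinely different from the paper's. The paper does almost no analysis itself: it imports Krasovsky's two half-line expansions \eqref{Kras1}--\eqref{Kras2}, which already encode the endpoint behaviour of the Cauchy transform for complex $\alpha_{k}\neq 0$ with $\Re\alpha_{k}>-1$, attaches the phases $e^{\pm i\pi\beta_{k}}$, sums, and takes $\epsilon\to 0_{+}$ followed by $z\to t_{k}$. You instead give a self-contained evaluation: localization, subtraction of $f(t_{k})\omega_{k}(t_{k})$ with a Sokhotski--Plemelj argument for the H\"older remainder, and an explicit beta-integral evaluation of the model Cauchy transform. For $-1<\Re\alpha_{k}<0$, for $\Re\alpha_{k}>0$, and for $\alpha_{k}=0$ this is correct: the branch identity $(-w)^{\alpha_{k}}=e^{-i\pi\alpha_{k}}w^{\alpha_{k}}$ ($\Im w>0$, $w=z-t_{k}$), the singular coefficient, and the matching of finite parts all check out --- the counterterm in \eqref{def VP} exactly cancels the $\epsilon^{\alpha_{k}}$ divergence of the truncated model integral, leaving $f(t_{k})\omega_{k}(t_{k})(e^{-i\pi\beta_{k}}-e^{i\pi\beta_{k}})\delta^{\alpha_{k}}$ on both sides. (One slip of wording: the $\epsilon^{\alpha_{k}}$-term in \eqref{def VP} is not ``negligible'' when $\Re\alpha_{k}<0$; it is the counterterm, and must be allocated entirely to the model part, which your later paragraph in fact does.)

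The genuine gap is the case $\Re\alpha_{k}=0$, $\alpha_{k}\neq 0$, which the first line of \eqref{substraction in VP} covers and which the paper needs, since purely imaginary $\alpha_{k}$ are allowed throughout. Your assertion that ``for $\Re\alpha_{k}\le 0$ with $\alpha_{k}\neq 0$ the computation above applies'' fails there: the extension of the model integrals to $\int_{0}^{\infty}$ and the tail integrals $\int_{\delta}^{\infty}y^{\alpha_{k}-1}\,dy$ both diverge when $\Re\alpha_{k}=0$ (the integrands are $y^{-1}$ times a bounded oscillation), so the classical beta formula cannot be invoked, and the ``tail is analytic near $w=0$'' step collapses. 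Worse, in this regime $w^{\alpha_{k}}$ oscillates without decaying, so the model Cauchy transform itself has no limit as $w\to 0$; proving that subtracting $\mathcal{J}_{k}(z)$ restores the limit is exactly the content of the proposition and requires a real argument, not only the remark that the three cases are boundary values of one analytic function of $\alpha_{k}$ (to exploit analyticity one must justify interchanging the continuation in $\alpha_{k}$ with the limit $w\to 0$, i.e.\ one needs uniform control in $w$). A repair consistent with your strategy: rescale $y=|w|t$ to obtain $\int_{0}^{\delta}y^{\alpha_{k}}(y-w)^{-1}dy=\delta^{\alpha_{k}}/\alpha_{k}+|w|^{\alpha_{k}}C(\arg w)+o(1)$, where the constant $C(\arg w)$ is independent of $|w|$ and is analytic in $\alpha_{k}$ on $-1<\Re\alpha_{k}<1$; identify $C(\arg w)=-\pi e^{i\alpha_{k}(\arg w-\pi)}/\sin(\pi\alpha_{k})$ on the sub-strip $-1<\Re\alpha_{k}<0$ by your beta-integral computation, and continue in $\alpha_{k}$ to the boundary $\Re\alpha_{k}=0$. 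Alternatively, quote the classical endpoint behaviour of Cauchy-type integrals with power-oscillatory densities (Muskhelishvili/Gakhov), which is in effect what the paper does by citing Krasovsky. Without one of these, the proof does not cover the full parameter range claimed in the statement.
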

\begin{proof}
In \cite[equation (29) and below]{Krasovsky}, the author proved that if $\alpha_{k} \neq 0$, we have as $\epsilon \to 0_{+}$ that
\begin{align}
& \int_{-\infty}^{t_{k}-\epsilon} \frac{f(x)\omega_{k}(x)|x-t_{k}|^{\alpha_{k}}}{x-t_{k}}dx - \int_{-\infty}^{t_{k}-\epsilon} \frac{f(x)\omega_{k}(x)|x-t_{k}|^{\alpha_{k}}}{x-z}dx = \label{Kras1} \\ 
& \hspace{2.5cm} \frac{-\pi f(z)\omega_{k}(z)}{\sin(\pi \alpha_{k})}(z-t_{k})^{\alpha_{k}} + \frac{\epsilon^{\alpha_{k}}}{\alpha_{k}}f(t_{k})\omega_{k}(t_{k})+ \frac{G_{1}(z)(z-t_{k}) + \bigO(\epsilon^{\Re \alpha_{k}+1})}{2i\sin(\pi\alpha_{k})},\nonumber \\[0.35cm] 
& \int_{t_{k}+\epsilon}^{\infty} \frac{f(x)\omega_{k}(x)|x-t_{k}|^{\alpha_{k}}}{x-t_{k}}dx - \int_{t_{k}+\epsilon}^{\infty} \frac{f(x)\omega_{k}(x)|x-t_{k}|^{\alpha_{k}}}{x-z}dx = \label{Kras2} \\ 
& \hspace{2.5cm} \frac{\pi e^{-i\pi  \alpha_{k}} f(z)\omega_{k}(z)}{\sin(\pi \alpha_{k})}(z-t_{k})^{\alpha_{k}} - \frac{\epsilon^{\alpha_{k}}}{\alpha_{k}}f(t_{k})\omega_{k}(t_{k})- \frac{G_{2}(z)(z-t_{k}) + \bigO(\epsilon^{\Re \alpha_{k}+1})}{2i\sin(\pi\alpha_{k})}, \nonumber
\end{align} 
where $G_{1}$ and $G_{2}$ are analytic in a neighbourhood of $t_{k}$. Multiplying \eqref{Kras1} by $\alpha_{k} e^{i\pi\beta_{k} }$ and \eqref{Kras2} by $\alpha_{k} e^{-i\pi\beta_{k} }$, and summing them we obtain as $\epsilon \to 0_{+}$
\begin{multline*}
\alpha_{k} \int_{\mathbb{R}\setminus[t_{k}-\epsilon,t_{k}+\epsilon]} \frac{f(x)\omega(x)}{x-t_{k}}dx - f(t_{k})\omega_{k}(t_{k}) (e^{i\pi\beta_{k}}-e^{-i\pi\beta_{k}})\epsilon^{\alpha_{k}} = \alpha_{k} \int_{\mathbb{R}} \frac{f(x)\omega(x)}{x-z}dx  \\ -\frac{\pi\alpha_{k} f(z)\omega_{k}(z)}{\sin(\pi\alpha_{k})}(z-t_{k})^{\alpha_{k}}(e^{i\pi\beta_{k}} -e^{- i \pi\alpha_{k}}e^{-i\pi\beta_{k} }) + \frac{(G_{1}(z)e^{i\pi\beta_{k}}-G_{2}(z)e^{-i\pi\beta_{k}})(z-t_{k}) + \bigO(\epsilon^{\Re\alpha_{k}+1})}{2i\sin \pi\alpha_{k}}\alpha_{k}.
\end{multline*}
Taking first the limit $\epsilon \to 0_{+}$ and then $z \to t_{k}$, we find the claim for $\alpha_{k} \neq 0$. The claim for $\alpha_{k} = 0$ is straightforward.
\end{proof}
The following developments are similar to those done in \cite{ItsKrasovsky, Krasovsky}, but the presence of both root-type and jump-type singularities requires some changes, and we provide the details below. In this subsection we assume that $(\vec{\alpha},\vec{\beta}) \in \widetilde{\mathcal{P}}_{1}^{(n)}$, such that the formula \eqref{lol 34} holds. We start with a general differential identity obtained in \cite{Krasovsky}, which we apply to the parameter $\beta_{k}$, for a $k \in \{1,...,m\}$:
\begin{equation}\label{lol 20}
\partial_{\beta_{k}} \log D_{n} = - n \partial_{\beta_{k}}\log \kappa_{n-1} + \frac{\kappa_{n-1}}{\kappa_{n}}(I_{1}-I_{2}),
\end{equation}
where
\begin{equation}
I_{1} = \int_{-\infty}^{\infty} p_{n-1}^{\prime}(x) \partial_{\beta_{k}}p_{n}(x) w(x)dx, \qquad I_{2} = \int_{-\infty}^{\infty} p_{n}^{\prime}(x) \partial_{\beta_{k}}p_{n-1}(x) w(x)dx.
\end{equation}
We will simplify $I_{1}$ and $I_{2}$ for our particular weight $w(x) = e^{-2nx^{2}}\omega(x)$. Let $\epsilon >0$ be sufficiently small such that $\min_{i\neq j} |t_{i}-t_{j}| > 3\epsilon$, $I_{1}$ can be split as
\begin{equation}
I_{1} = \sum_{l=1}^{m+1} \int_{t_{l-1}+\epsilon}^{t_{l}-\epsilon} p_{n-1}^{\prime}(x) \partial_{\beta_{k}}p_{n}(x) w(x)dx + \sum_{l=1}^{m} \int_{t_{l}-\epsilon}^{t_{l}+\epsilon} p_{n-1}^{\prime}(x) \partial_{\beta_{k}}p_{n}(x) w(x)dx,
\end{equation}
where $t_{0} = -\infty$ and $t_{m+1} = \infty$. Noting that
\begin{equation}
w^{\prime}(x) = \bigg( -4nx + \sum_{j=1}^{m} \frac{\alpha_{j}}{x-t_{j}} \bigg) w(x), \qquad x \in  \mathbb{R}\setminus\{ t_{1},...,t_{m} \},
\end{equation}
we have, by integration by parts, for $l \in \{1,...,m+1\}$
\begin{multline}\label{one integral inside J_1}
\int_{t_{l-1}+\epsilon}^{t_{l}-\epsilon} p_{n-1}^{\prime}(x) \partial_{\beta_{k}}p_{n}(x) w(x)dx = \\ p_{n-1}(t_{l}-\epsilon)\partial_{\beta_{k}}p_{n}(t_{l}-\epsilon)w(t_{l}-\epsilon)-p_{n-1}(t_{l-1}+\epsilon)\partial_{\beta_{k}}p_{n}(t_{l-1}+\epsilon)w(t_{l-1}+\epsilon) \\ - \int_{t_{l-1}+\epsilon}^{t_{l}-\epsilon} p_{n-1}(x) \bigg[ \partial_{\beta_{k}}p_{n}^{\prime}(x) + \partial_{\beta_{k}}p_{n}(x) \bigg( -4nx + \sum_{j=1}^{m} \frac{\alpha_{j}}{x-t_{j}} \bigg) \bigg] w(x)dx.
\end{multline}
In the above expression, the quantities $p_{n-1}(t_{m+1}-\epsilon)\partial_{\beta_{k}}p_{n}(t_{m+1}-\epsilon)w(t_{m+1}-\epsilon)$ and $p_{n-1}(t_{0}+\epsilon)\partial_{\beta_{k}}p_{n}(t_{0}+\epsilon)w(t_{0}+\epsilon)$ have to be understood as equal to $0$. On the other hand, for $l \in \{1,...,m\}$,
\begin{equation}
\int_{t_{l}-\epsilon}^{t_{l}+\epsilon} p_{n-1}^{\prime}(x) \partial_{\beta_{k}}p_{n}(x) w(x)dx = \bigO\big(\epsilon^{1+\Re \alpha_{l}}\big), \qquad \mbox{ as } \epsilon \to 0.
\end{equation}
Therefore, summing \eqref{one integral inside J_1} for $l = 1,...,m+1$, and taking the limit $ \epsilon \to 0$, we obtain
\begin{align}
& I_{1} = - \int_{-\infty}^{\infty} p_{n-1}(x) \partial_{\beta_{k}} p_{n}^{\prime}(x)w(x)dx + 4n \int_{-\infty}^{\infty} p_{n-1}(x)\partial_{\beta_{k}}p_{n}(x)xw(x)dx \\ & - \sum_{j=1}^{m} \alpha_{j} \int_{-\infty}^{\infty} p_{n-1}(x) \frac{\partial_{\beta_{k}}p_{n}(x)-\partial_{\beta_{k}}p_{n}(t_{j})}{x-t_{j}}w(x)dx - \sum_{j=1}^{m} \partial_{\beta_{k}}p_{n}(t_{j})\mathrm{Reg}_{j} \big(p_{n-1}(x)e^{-2nx^{2}}\big). \nonumber
\end{align}
By orthogonality,
\begin{equation*}
\int_{-\infty}^{\infty} p_{n-1}(x)\partial_{\beta_{k}}p_{n}^{\prime}(x)w(x)dx = n \frac{\partial_{\beta_{k}} \kappa_{n}}{\kappa_{n-1}}, \quad \int_{-\infty}^{\infty} p_{n-1}(x) \frac{\partial_{\beta_{k}}p_{n}(x)-\partial_{\beta_{k}}p_{n}(t_{j})}{x-t_{j}}w(x)dx = \frac{\partial_{\beta_{k}}\kappa_{n}}{\kappa_{n-1}}.
\end{equation*}
The next integral is a bit more involved and was done in \cite[equation (21)]{Krasovsky}, 
\begin{equation}
\int_{-\infty}^{\infty} p_{n-1}(x) \partial_{\beta_{k}}p_{n}(x) x w(x)dx = \frac{\kappa_{n}}{\kappa_{n-1}} \left[ \frac{\partial_{\beta_{k}}\kappa_{n}}{\kappa_{n}}\left( \frac{\kappa_{n-1}}{\kappa_{n}} \right)^{2} + \partial_{\beta_{k}}\gamma_{n} - \eta_{n} \partial_{\beta_{k}}\eta_{n} \right]. 
\end{equation}
Therefore, one can rewrite $I_{1}$ as
\begin{multline}\label{J_1}
I_{1} = - (n + \mathcal{A}) \frac{\partial_{\beta_{k}}\kappa_{n}}{\kappa_{n-1}} + 4n \frac{\kappa_{n}}{\kappa_{n-1}} \left[ \frac{\partial_{\beta_{k}}\kappa_{n}}{\kappa_{n}}\left( \frac{\kappa_{n-1}}{\kappa_{n}} \right)^{2} + \partial_{\beta_{k}}\gamma_{n} - \eta_{n} \partial_{\beta_{k}}\eta_{n} \right] \\ - \sum_{j=1}^{m} \partial_{\beta_{k}}p_{n}(t_{j})\mathrm{Reg}_{j}\big( p_{n-1}(x)e^{-2nx^{2}} \big),
\end{multline}
where $\mathcal{A} = \sum_{j=1}^{m} \alpha_{j}$. A similar and simpler calculation for $I_{2}$ leads to 
\begin{equation}\label{J_2}
I_{2} = 4n \frac{\partial_{\beta_{k}}\kappa_{n-1}}{\kappa_{n}} - \sum_{j=1}^{m} \partial_{\beta_{k}}p_{n-1}(t_{j})\mathrm{Reg}_{j}\big( p_{n}(x)e^{-2nx^{2}} \big).
\end{equation}
For $l \in \{1,...,m\}$, we define $\widetilde{Y}(t_{l})$ as
\begin{equation}\label{def second column of tilde Y}
\widetilde{Y}(t_{l}) = \begin{pmatrix}
Y_{11}(t_{l}) & \mathrm{Reg}_{l} \left( \frac{1}{2\pi i}Y_{11}(x)e^{-2nx^{2}} \right) \\
Y_{21}(t_{l}) & \mathrm{Reg}_{l} \left( \frac{1}{2\pi i}Y_{21}(x)e^{-2nx^{2}} \right) \\
\end{pmatrix}.
\end{equation}
Note that from Proposition \ref{Prop: regularised integral}, for $l \in \{1,...,m\}$,
\begin{equation}
\widetilde{Y}_{j2}(t_{l}) = \lim_{z\to t_{l}} \alpha_{l}Y_{j2}(z) - c_{l} Y_{j1}(t_{l})(z-t_{l})^{\alpha_{l}}, \qquad j = 1,2,
\end{equation}
where the limit is taken along a non-tangential to the real line path and
\begin{equation}
c_{l} = \frac{\pi\alpha_{l}}{\sin(\pi\alpha_{l})}\frac{e^{-2nt_{l}^{2}}}{2\pi i} \omega_{l}(t_{l})(e^{i\pi\beta_{l}}-e^{-i\pi\alpha_{l}}e^{-i\pi\beta_{l}}).
\end{equation}
Thus the matrix $\widetilde{Y}(t_{l})$ has not determinant $1$, but
\begin{equation}
\alpha_{l} = \alpha_{l} \det Y(t_{l}) = \det \widetilde{Y}(t_{l}).
\end{equation}
Putting together \eqref{lol 20}, \eqref{J_1} and \eqref{J_2}, we obtain
\begin{multline}\label{differential identity}
\partial_{\beta_{k}}\log D_{n}(\vec{\alpha},\vec{\beta},2x^{2},0) = -(n+\mathcal{A}) \partial_{\beta_{k}}\log(\kappa_{n}\kappa_{n-1}) - 2n\partial_{\beta_{k}}\left( \frac{\kappa_{n-1}^{2}}{\kappa_{n}^{2}} \right) + 4n \partial_{\beta_{k}} \left(\gamma_{n}-\frac{\eta_{n}^{2}}{2}\right) \\[0.1cm] 
\hspace*{0.6cm} + \sum_{j=1}^{m} \left[\widetilde{Y}_{22}(t_{j})\partial_{\beta_{k}}Y_{11}(t_{j}) - \widetilde{Y}_{12}(t_{j})\partial_{\beta_{k}}Y_{21}(t_{j}) + Y_{11}(t_{j})\widetilde{Y}_{22}(t_{j})\partial_{\beta_{k}} \log (\kappa_{n}\kappa_{n-1}) \right].
\end{multline}

\subsection{A general differential identity}\label{subsection: diff identity s and t}

Let $\gamma$ be a parameter of the weight $w$ whose dependence is smooth. From the well-known \cite{Szego OP, Deift} relation $D_{n}(\vec{\alpha},\vec{\beta},V,W) = \prod_{j=0}^{n-1} \kappa_{j}^{-2}$ (which we assume is valid), taking the $\log$ and then differentiating with respect to $\gamma$, we have
\begin{equation}\label{general diff identity 1}
\partial_{\gamma} \log D_{n}(\vec{\alpha},\vec{\beta},V,W) = - \int_{\mathbb{R}} \partial_{\gamma} \bigg( \sum_{j=0}^{n-1} p_{j}^{2}(x) \bigg) w(x)dx = \int_{\mathbb{R}} \bigg( \sum_{j=0}^{n-1} p_{j}^{2}(x) \bigg) \partial_{\gamma} w(x)dx,
\end{equation}
where the last equality comes from the orthogonality relations, since $\int_{\mathbb{R}} \sum_{j=0}^{n-1} p_{j}^{2}(x)w(x)dx = n$.
By the Christoffel-Darboux formula (see e.g. \cite{Szego OP, Deift}) and \eqref{Y definition}, the equation \eqref{general diff identity 1} can be rewritten as 
\begin{equation}\label{general diff identity}
\displaystyle \partial_{\gamma} \log D_{n}(\vec{\alpha},\vec{\beta},V,W) =  \frac{1}{2\pi i}\int_{\mathbb{R}}[Y^{-1}(x)Y^{\prime}(x)]_{21}\partial_{\gamma}w(x)dx.
\end{equation}
We will use this differential identity for the steps 2 and 3, which correspond respectively to $w_{s}$ and $w_{t}$, as described in the outline. In these cases, \eqref{general diff identity} becomes
\begin{align}
& \partial_{s} \log D_{n}(\vec{\alpha},\vec{\beta},V_{s},0) =  \frac{1}{2\pi i}\int_{\mathbb{R}}[Y^{-1}(x)Y^{\prime}(x)]_{21}\partial_{s}w_{s}(x)dx, \label{diff identity s} \\ 
& \partial_{t} \log D_{n}(\vec{\alpha},\vec{\beta},V,W_{t}) =  \frac{1}{2\pi i}\int_{\mathbb{R}}[Y^{-1}(x)Y^{\prime}(x)]_{21}\partial_{t}w_{t}(x)dx, \label{diff identity t}
\end{align}
where \eqref{diff identity s} is valid for $(\vec{\alpha},\vec{\beta},s)\in \widetilde{\mathcal{P}}_{2}^{(n)}$, and where \eqref{diff identity t} is valid for $(\vec{\alpha},\vec{\beta},t)\in \widetilde{\mathcal{P}}_{3}^{(n)}$ (see the discussion at the beginning of Section \ref{Section:diff identities}).

\section{Steepest descent analysis}\label{Section: steepest descent}\label{Section:steepest}
In this section, we will perform an asymptotic analysis on the RH problem for $Y$ as $n \to \infty$. Our analysis is based on the Deift/Zhou steepest descent method \cite{DeiftZhou1992,DeiftZhou}.

\subsection{Equilibrium measure and $g$-function}
We denote by $U_{V}$ the maximal open neighbourhood of $\mathbb{R}$ in which $V$ is analytic, and by $U_W$ an open neighbourhood of $\mathcal{S} = [-1,1]$ in which $W$ is analytic, sufficiently small such that $U_W \subset U_V$.
From \cite[Theorem 1.38 (i), equations (1.39), (1.43), (1.44) and (1.45)]{DeiKriMcL}, we have
\begin{equation}\label{lol 23}
\psi(x) \sqrt{1-x^{2}} = \Re \left( i \frac{V^{\prime}(x)}{2\pi} + \frac{h(x)}{2\pi} \sqrt{1-x^{2}} \right),
\end{equation}
where
\begin{equation}
h(z) = -\frac{1}{2\pi i} \int_{\mathcal{C}} \frac{V^{\prime}(y)}{\sqrt{y^{2}-1}}\frac{dy}{y-z},
\end{equation}
and $\mathcal{C}$ is a closed loop oriented in the clockwise direction encircling $z$ and $[-1,1]$. Clearly, $h$ is analytic in $U_V$. Also, since $V$ is real, \eqref{lol 23} reduces to
\begin{equation}
\psi(x) = \frac{h(x)}{2\pi} = \frac{1}{2\pi^{2}} \dashint_{-1}^{1} \frac{V^{\prime}(y)}{\sqrt{1-y^{2}}} \frac{dy}{y-x}.
\end{equation}
Thus, we have that $\psi$ is analytic in $U_V$ and $\psi(x) \in \mathbb{R}$ if $x \in \mathbb{R}$.

\hspace{-0.51cm}We define the $g$-function, which is useful in order to normalize the RH problem at $\infty$, by 
\begin{equation}
g(z) = \int_{\mathcal{S}} \log (z-s) \rho(s) ds, \qquad \rho(x) = \psi(x) \sqrt{1-x^{2}},
\end{equation}
where the principal branch cut is chosen for the logarithm. The $g$-function is analytic in $\mathbb{C}\setminus (-\infty,1]$ and possesses the following properties
\begin{align}
& g_{+}(x) + g_{-}(x) = 2 \int_{S} \log |x-s| \rho(s)ds, & & x \in \mathbb{R}, \label{g+ + g-} \\
& g_{+}(x) - g_{-}(x) = 2 \pi i, & & x < -1, \label{g+ - g- 1} \\
& g_{+}(x) - g_{-}(x) = 2 \pi i \int_{x}^{1} \rho(s)ds, & & x \in \mathcal{S}, \label{g+ - g- 2}\\
& g_{+}(x) - g_{-}(x) = 0, & & 1 < x. \label{g+ - g- 3}
\end{align}
Consider the function
\begin{equation}\label{integral form of xi}
\xi(z) = - \pi \int_{1}^{z}\tilde{\rho}(s)ds,
\end{equation}
where the path of integration lies in $U_V \setminus (-\infty,1)$ and $\tilde{\rho}(s) = \psi(s)\sqrt{s^{2}-1}$ is analytic in $U_V\setminus \mathcal{S}$. Since $\tilde{\rho}_{\pm}(s) = \pm i \rho(s)$ for $s \in \mathcal{S}$, we have
\begin{equation} \label{xi +}
2\xi_{\pm}(x) = g_{\pm}(x)-g_{\mp}(x) = 2g_{\pm}(x) + \ell - V(x), \qquad x \in \mathcal{S},
\end{equation}
where we have used \eqref{var equality} together with \eqref{g+ + g-}. Analytically continuing $\xi(z)-g(z)$ in \eqref{xi +}, we have
\begin{equation}\label{relation between g and xi}
\xi(z) = g(z) + \frac{\ell}{2} - \frac{V(z)}{2}, \qquad \mbox{ for all } z \in U_V\setminus(-\infty,1).
\end{equation}
The function $\xi(z)$ possesses certain properties which will be useful later to show that the jump matrices of a RH problem converge to the identity matrix as $n \to \infty$. First, note from \eqref{relation between g and xi} that the variational inequality \eqref{var inequality} (which we recall is strict) can be rewritten as
\begin{equation}\label{lol 15}
\xi_{+}(x)+\xi_{-}(x)<0, \qquad \mbox{for } |x|>1, \mbox{ } x \in \mathbb{R}.
\end{equation}
Also, since $V(x)/\log|x| \to +\infty$ and $g(x) = \bigO(\log |x|)$ as $|x| \to +\infty$, $x \in \mathbb{R}$, we have
\begin{equation}\label{lol 16}
\big(\xi_{+}(x)+\xi_{-}(x)\big)/V(x) \to -1, \qquad \mbox{as } |x|\to+\infty, \mbox{ } x \in \mathbb{R}.
\end{equation}
Finally, from the assumption that $\psi$ is positive on $\mathcal{S}$, a direct analysis (see e.g. \cite[Lemma 4.4]{BerWebbWong}) of the integral \eqref{integral form of xi} shows that for $z$ in a small enough neighbourhood of $(-1,1)$, 
\begin{equation}\label{lol 17}
\Re \xi(z) >0, \qquad \mbox{ if } \Im z \neq 0.
\end{equation}
We will also need later large $z$ asymptotics of $e^{ng(z)}$ for the case $V(x) = 2x^{2}$:
\begin{equation}\label{asymptotics for e^{ng}}
e^{ng(z)} = z^{n} \left( 1-\frac{n}{8z^{2}}+\bigO(z^{-4}) \right), \qquad \mbox{ as } z \to \infty.
\end{equation}
\subsection{First transformation: $Y \mapsto T$}
The first step consists of normalizing the RH problem at $\infty$, which can be done with the following transformation
\begin{equation}
T(z) = e^{\frac{n\ell}{2}\sigma_{3}}Y(z)e^{-ng(z)\sigma_{3}}e^{-\frac{n\ell}{2}\sigma_{3}}.
\end{equation}
$T$ satisfies the following RH problem.
\subsubsection*{RH problem for $T$}
\begin{itemize}
\item[(a)] $T : \mathbb{C}\setminus \mathbb{R} \to \mathbb{C}^{2\times 2}$ is analytic.
\item[(b)] Using \eqref{var equality}, \eqref{g+ + g-}, \eqref{g+ - g- 1}, \eqref{g+ - g- 2}, \eqref{g+ - g- 3} and \eqref{xi +}, a direct computation shows that $T$ has the following jumps:
\begin{equation}
T_{+}(x) = T_{-}(x) J_{T}(x), \hspace{0.5cm} \mbox{ for } x \in \mathbb{R}\setminus \left\{t_{1},...,t_{m} \right\},
\end{equation}
where
\begin{equation}
J_{T}(x) = \left\{ \begin{array}{l l}
\displaystyle \begin{pmatrix}
1 & e^{W(x)} \omega(x) e^{n(\xi_{+}(x)+\xi_{-}(x))} \\
0 & 1
\end{pmatrix}, & \displaystyle \mbox{if } x < -1, \\[0.23cm]
\begin{pmatrix}
e^{-2n\xi_{+}(x)} & e^{W(x)} \omega(x) \\
0 & e^{2n\xi_{+}(x)}
\end{pmatrix}, & \displaystyle \mbox{if } x\in (-1,1)\setminus\{t_{1},...,t_{m}\}, \\[0.23cm]
\displaystyle \begin{pmatrix}
1 & e^{W(x)}\omega(x) e^{2n\xi(x)} \\
0 & 1
\end{pmatrix}, & \displaystyle \mbox{if } 1<x. \\
\end{array} \right.
\end{equation}
\item[(c)] As $z \to \infty$, we have $T(z) = I + \bigO(z^{-1})$.
\item[(d)] As $z$ tends to $t_{k}$ for a certain $k \in \{1,...,m\}$, the behaviour of $T$ is
\begin{equation}
\begin{array}{l l}
\displaystyle T(z) = \begin{pmatrix}
\bigO(1) & \bigO(\log (z-t_{k})) \\
\bigO(1) & \bigO(\log (z-t_{k}))
\end{pmatrix}, & \displaystyle \mbox{ if } \Re \alpha_{k} = 0, \\[0.35cm]
\displaystyle T(z) = \begin{pmatrix}
\bigO(1) & \bigO(1)+\bigO((z-t_{k})^{\Re\alpha_{k}}) \\
\bigO(1) & \bigO(1)+\bigO((z-t_{k})^{\Re\alpha_{k}})
\end{pmatrix}, & \displaystyle \mbox{ if } \Re \alpha_{k} \neq 0.
\end{array}
\end{equation}

As $z$ tends to $-1$ or $1$, we have $T(z) = \bigO(1)$.
\end{itemize}
\subsection{Second transformation: $T \mapsto S$}
We will use the following factorization of $J_{T}(x)$ for $x \in \mathcal{S}\setminus\{t_{1},...,t_{m}\}$:
\begin{multline}
\begin{pmatrix}
e^{-2n\xi_{+}(x)} & e^{W(x)} \omega(x) \\
0 & e^{-2n\xi_{-}(x)}
\end{pmatrix} = \begin{pmatrix}
1 & 0 \\ e^{-W(x)}\omega(x)^{-1}e^{-2n \xi_{-}(x)} & 1
\end{pmatrix} \\ \times \begin{pmatrix}
0 & e^{W(x)}\omega(x) \\ -e^{-W(x)}\omega(x)^{-1} & 0
\end{pmatrix} \begin{pmatrix}
1 & 0 \\ e^{-W(x)}\omega(x)^{-1}e^{-2n \xi_{+}(x)} & 1
\end{pmatrix}.
\end{multline}
\begin{figure}[t]
    \begin{center}
    \setlength{\unitlength}{1truemm}
    \begin{picture}(100,30)(28,30)
        \put(12,40){\line(1,0){120}}
        \put(30,39.9){\thicklines\vector(1,0){.0001}}
        \put(50,39.9){\thicklines\vector(1,0){.0001}}
        \put(70,39.9){\thicklines\vector(1,0){.0001}}
        \put(90,39.9){\thicklines\vector(1,0){.0001}}
        \put(110,39.9){\thicklines\vector(1,0){.0001}}
        \put(49.6,47.4){\thicklines\vector(1,0){.0001}}
        \put(49.6,32.4){\thicklines\vector(1,0){.0001}}
        \put(69.6,47.4){\thicklines\vector(1,0){.0001}}
        \put(69.6,32.4){\thicklines\vector(1,0){.0001}}
        \put(89.6,47.4){\thicklines\vector(1,0){.0001}}
        \put(89.6,32.4){\thicklines\vector(1,0){.0001}}
        \put(38,39.9){\thicklines\circle*{1.2}} \put(36,36.5){$-1$}
        \put(58,39.9){\thicklines\circle*{1.2}} \put(57.35,36.3){$t_{1}$}
        \put(78,39.9){\thicklines\circle*{1.2}} \put(77,36.3){$t_{m}$}
        \put(98,39.9){\thicklines\circle*{1.2}} \put(98,36.8){$1$}
        \qbezier(38,39.9)(48,55)(58,39.9)
        \qbezier(58,39.9)(68,55)(78,39.9)
        \qbezier(78,39.9)(88,55)(98,39.9)
        \qbezier(38,39.9)(48,24.8)(58,39.9)
        \qbezier(58,39.9)(68,24.8)(78,39.9)
        \qbezier(78,39.9)(88,24.8)(98,39.9)
        \put(65.5,50){$\gamma_+$}
        \put(65.5,29){$\gamma_-$}
    \end{picture}
    \caption{Jump contours for the RH problem for $S$ with $m=2$. The lens contours are labelled $\gamma_+ \subset U_W$ and $\gamma_- \subset U_W$, and they lie in the upper and lower half plane respectively. \label{Fig:S}}
\end{center}
\end{figure}
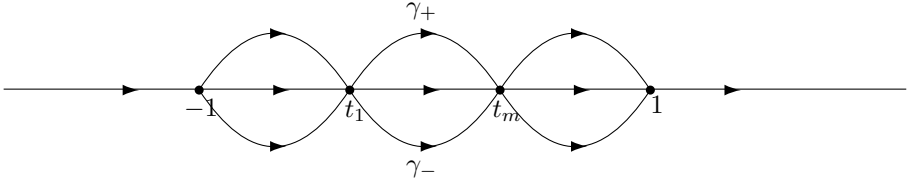
The functions $\omega_{\alpha_{k}}$ and $\omega_{\beta_{k}}$ (see \eqref{first def of omega}) can be analytically continued as follows:
\begin{equation}
\omega_{\alpha_{k}}(z) = \left\{ \begin{array}{l l}
(t_{k}-z)^{\alpha_{k}}, & \mbox{ if } \Re z < t_{k}, \\
(z-t_{k})^{\alpha_{k}}, & \mbox{ if } \Re z > t_{k},
\end{array} \right. \qquad \omega_{\beta_{k}}(z) = \left\{ \begin{array}{l l}
e^{i\pi\beta_{k}}, & \mbox{ if } \Re z < t_{k}, \\
e^{-i \pi \beta_{k}}, & \mbox{ if } \Re z > t_{k}.
\end{array}  \right.
\end{equation}
We open the lenses $\gamma_{+}$ and $\gamma_{-}$ around $S$ as illustrated in Figure \ref{Fig:S}, such that they are inside $U_W$ and we define
\begin{equation}
S(z) = T(z)  \left\{ \begin{array}{l l}
\begin{pmatrix}
1 & 0 \\ \displaystyle -e^{-W(z)}\omega(z)^{-1}e^{-2n \xi(z)} & 1
\end{pmatrix}, & \mbox{ if } z \mbox{ is inside the lenses}, \Im z > 0, \\
\begin{pmatrix}
1 & 0 \\ \displaystyle e^{-W(z)}\omega(z)^{-1}e^{-2n \xi(z)} & 1
\end{pmatrix}, & \mbox{ if } z \mbox{ is inside the lenses}, \Im z < 0, \\
I, & \mbox{ if } z \mbox{ is outside the lenses}. \\
\end{array} \right.
\end{equation}
$S$ satisfies the following RH problem
\subsubsection*{RH problem for $S$}
\begin{itemize}
\item[(a)] $S : \mathbb{C}\setminus (\mathbb{R} \cup \gamma_{+} \cup \gamma_{-}) \to \mathbb{C}^{2\times 2}$ is analytic.
\item[(b)] $S$ has the following jumps:
\begin{align}
& S_{+}(z) = S_{-}(z)\begin{pmatrix}
1 & e^{W(z)}\omega(z) e^{n(\xi_{+}(z)+\xi_{-}(z))} \\
0 & 1
\end{pmatrix}, & & \mbox{ if } z < -1, \\
& S_{+}(z) = S_{-}(z)\begin{pmatrix}
1 & e^{W(z)}\omega(z)e^{2n\xi(z)} \\
0 & 1
\end{pmatrix}, & & \mbox{ if } 1 < z, \\
& S_{+}(z) = S_{-}(z)\begin{pmatrix}
0 & e^{W(z)}\omega(z) \\ -e^{-W(z)}\omega(z)^{-1} & 0
\end{pmatrix}, & & \mbox{ if } z\in(-1,1)\setminus\{t_{1},...,t_{m}\}, \\
& S_{+}(z) = S_{-}(z)\begin{pmatrix}
1 & 0 \\ e^{-W(z)}\omega(z)^{-1}e^{-2n \xi(z)} & 1
\end{pmatrix}, & & \mbox{ if } z \in \gamma_{+} \cup \gamma_{-}.
\end{align}
\item[(c)] As $z \to \infty$, we have $S(z) = I + \bigO(z^{-1})$.
\item[(d)] As $z$ tends to $t_{k}$ for a certain $k \in \{1,...,m\}$, we have
\begin{equation*}
\begin{array}{l l}
\displaystyle S(z) = \left\{ \begin{array}{l l}
\begin{pmatrix}
\bigO(1) & \bigO(\log (z-t_{k})) \\
\bigO(1) & \bigO(\log (z-t_{k}))
\end{pmatrix}, & \mbox{if } z \mbox{ is outside the lenses}, \\
\begin{pmatrix}
\bigO(\log (z-t_{k})) & \bigO(\log (z-t_{k})) \\
\bigO(\log (z-t_{k})) & \bigO(\log (z-t_{k}))
\end{pmatrix}, & \mbox{if } z \mbox{ is inside the lenses},
\end{array} \right., & \displaystyle \mbox{ if } \Re \alpha_{k} = 0, \\[0.9cm]

\displaystyle S(z) = \left\{ \begin{array}{l l}
\begin{pmatrix}
\bigO(1) & \bigO(1) \\
\bigO(1) & \bigO(1)
\end{pmatrix}, & \mbox{if } z \mbox{ is outside the lenses}, \\
\begin{pmatrix}
\bigO((z-t_{k})^{-\Re\alpha_{k}}) & \bigO(1) \\
\bigO((z-t_{k})^{-\Re\alpha_{k}}) & \bigO(1)
\end{pmatrix}, & \mbox{if } z \mbox{ is inside the lenses},
\end{array} \right. , & \displaystyle \mbox{ if } \Re\alpha_{k} > 0, \\[0.9cm]

\displaystyle S(z) = \begin{pmatrix}
\bigO(1) & \bigO((z-t_{k})^{\Re\alpha_{k}}) \\
\bigO(1) & \bigO((z-t_{k})^{\Re\alpha_{k}}) 
\end{pmatrix}, & \displaystyle \mbox{ if } \Re\alpha_{k} < 0.
\end{array}
\end{equation*}
As $z$ tends to $-1$ or $1$, we have $S(z) = \bigO(1)$.
\end{itemize}
From \eqref{lol 15} and \eqref{lol 17}, the jumps for $S$ on $(\gamma_{+}\cup\gamma_{-}\cup\mathbb{R})\setminus \mathcal{S}$ are exponentially close to the identity matrix as $n \to \infty$. Nevertheless, this is only a pointwise convergence and it breaks down as $z$ approaches $t_{1}$,...,$t_{m}$, $-1$ and $1$. This simply follows from 
\begin{align*}
& \xi_{+}(x) + \xi_{-}(x) \to \xi_{+}(\pm 1) + \xi_{-}(\pm 1) = 0, & & \mbox{ as } x\to \pm 1, \mbox{ } x \in \mathbb{R}, \\
& \Re \xi_{\pm}(x) = 0, & & \mbox{ for } x \in \mathcal{S}.
\end{align*}
Therefore, for $z$ outside of a neighbourhood of $\mathcal{S}$, from \eqref{lol 16} and the assumption $W(x) = \bigO(V(x))$ as $|x| \to \infty$, the jumps for $S$ are uniformly exponentially close to $I$ as $n \to \infty$. If we ignore these jumps, we are left with a simpler RH problem (called the global parametrix) whose solution is a good approximation of $S$ away from these points. For $z$ in small neighbourhoods of $t_{1}$,...,$t_{m}$, $-1$ and $1$, the global parametrix is no longer a good approximation for $S$, and we will construct local parametrices around these points.
\subsection{Global parametrix}
Ignoring the exponentially small terms as $n \to \infty$ in the jumps of $S$ and small neighbourhoods of $-1$, $t_{1},...,t_{m}$ and $1$, we are left with the following RH problem, whose solution $P^{(\infty)}$ is a good approximation of $S$ away from neighbourhoods of $-1$, $t_{1},...,t_{m}$ and $1$:
\subsubsection*{RH problem for $P^{(\infty)}$}
\begin{itemize}
\item[(a)] $P^{(\infty)} : \mathbb{C}\setminus [-1,1] \to \mathbb{C}^{2\times 2}$ is analytic.
\item[(b)] $P^{(\infty)}$ has the following jumps:
\begin{align}\label{jumps for P inf}
& \hspace{-0.3cm} P^{(\infty)}_{+}(z) = P^{(\infty)}_{-}(z)\begin{pmatrix}
0 & e^{W(z)}\omega(z) \\ -e^{-W(z)}\omega(z)^{-1} & 0
\end{pmatrix}, & & \mbox{ if } z \in (-1,1)\setminus\{t_{1},...,t_{m}\}.
\end{align}
\item[(c)] As $z \to \infty$, we have $P^{(\infty)}(z) = I + P_{1}^{(\infty)} z^{-1} + P_{2}^{(\infty)} z^{-2} + \bigO(z^{-3})$.

As $z$ tends to $-1$, we have $P^{(\infty)}(z) = \bigO((z+1)^{-1/4})$.

As $z$ tends to $1$, we have $P^{(\infty)}(z) = \bigO((z-1)^{-1/4})$.

As $z$ tends to $t_{k}$, $k\in\{1,...,m\}$, we have $P^{(\infty)}(z) = \bigO(1)(z-t_{k})^{-(\frac{\alpha_{k}}{2}+\beta_{k})\sigma_{3}}$.
\end{itemize}
The unique solution $P^{(\infty)}$ of the above RH problem can be constructed similarly as in \cite{Kuijlaars2}. Define $a(z) = \sqrt[4]{\frac{z+1}{z-1}}$, analytic on $\mathbb{C}\setminus [-1,1]$ and such that $a(z) \sim 1$ as $z \to \infty$. It can be checked that $P^{(\infty)}$ is given by
\begin{equation}\label{Pinf Region 1}
P^{(\infty)}(z) = D_{\infty}^{\sigma_{3}} \begin{pmatrix}
\frac{1}{2}(a(z)+a^{-1}(z)) & \frac{1}{-2i}(a(z)-a^{-1}(z)) \\
\frac{1}{2i}(a(z)-a^{-1}(z)) & \frac{1}{2}(a(z)+a^{-1}(z))
\end{pmatrix} D(z)^{-\sigma_{3}},
\end{equation}
with $D(z) = D_{W}(z)D_{\alpha}(z)D_{\beta}(z)$, and $D_{W}(z)$, $D_{\alpha}(z)$ and $D_{\beta}(z)$ are the three Szeg\H{o} functions defined by
\begin{align}
& D_{W}(z) = \exp \left( \frac{\sqrt{z^{2}-1}}{2\pi}\int_{-1}^{1} \frac{W(x)}{\sqrt{1-x^{2}}}\frac{dx}{z-x} \right), \label{def D_T} \\
& D_{\alpha}(z) = \prod_{j=1}^{m}\exp\left( \frac{\sqrt{z^{2}-1}}{2\pi}\int_{-1}^{1} \frac{\log \omega_{\alpha_{j}}(x)}{\sqrt{1-x^{2}}} \frac{dx}{z-x} \right) = (z+\sqrt{z^{2}-1})^{-\frac{\mathcal{A}}{2}}\prod_{j=1}^{m} (z-t_{j})^{\frac{\alpha_{j}}{2}}, \label{def D_alpha} \\
& D_{\beta}(z) = \prod_{j=1}^{m}\exp\left( \frac{\sqrt{z^{2}-1}}{2\pi}\int_{-1}^{1} \frac{\log \omega_{\beta_{j}}(x)}{\sqrt{1-x^{2}}} \frac{dx}{z-x} \right) = e^{\frac{i\pi\mathcal{B}}{2}} \prod_{j=1}^{m} \left( \frac{zt_{j}-1-i\sqrt{(z^{2}-1)(1-t_{j}^{2})}}{z-t_{j}} \right)^{\beta_{j}}, \label{def D_beta}
\end{align}
where $\mathcal{B} = \sum_{j=1}^{m} \beta_{j}$, and we recall $\mathcal{A} = \sum_{j=1}^{m} \alpha_{j}$.
A proof of the simplified form of \eqref{def D_alpha} and \eqref{def D_beta} can be found in \cite[equation (51)]{Krasovsky} and \cite[equation (4.14)]{ItsKrasovsky} respectively.
Also, the function $D$ has a limit at $\infty$ given by
\begin{align}\label{D inf}
& D_{\infty} = \lim_{z\to\infty}D(z)= D_{W,\infty}D_{\alpha,\infty}D_{\beta,\infty}, \qquad D_{W,\infty} = \exp \left( \frac{1}{2\pi} \int_{-1}^{1} \frac{W(x)}{\sqrt{1-x^{2}}}dx \right), \\ & D_{\alpha,\infty} = 2^{-\frac{\mathcal{A}}{2}}, \qquad D_{\beta,\infty} = \exp \bigg( i\sum_{j=1}^{m}\beta_{j} \arcsin t_{j} \bigg). \nonumber
\end{align}
We will use later the following expansions of $D(z)$: as $z \to t_{k}$, $\Im z>0$, we have
\begin{align}
& D_{\alpha}(z) = e^{-i \frac{\mathcal{A}}{2}\arccos t_{k}}\bigg( \prod_{j \neq k} |t_{k}-t_{j}|^{\frac{\alpha_{j}}{2}} \prod_{j=k+1}^{m}e^{\frac{i\pi\alpha_{j}}{2}} \bigg) (z-t_{k})^{\frac{\alpha_{k}}{2}}(1+\bigO(z-t_{k})), \label{asymptotics for D_alpha in D_t} \\
& D_{\beta}(z) = e^{-\frac{i\pi}{2}(\mathcal{B}_{k}+\beta_{k})}  \bigg( \prod_{j\neq k} T_{kj}^{\beta_{j}}\bigg) (1-t_{k}^{2})^{-\beta_{k}}2^{-\beta_{k}}(z-t_{k})^{\beta_{k}}(1+\bigO(z-t_{k})), \label{asymptotics for D_beta in D_t} 
\end{align}
where
\begin{equation*}
\mathcal{B}_{k} = \sum_{j=1}^{k-1}\beta_{j} - \sum_{j=k+1}^{m} \beta_{j}, \qquad T_{kj} = \frac{1-t_{k}t_{j}-\sqrt{(1-t_{k}^{2})(1-t_{j}^{2})}}{|t_{k}-t_{j}|}.
\end{equation*}
The expansions of $D(z)$ near $1$ and $-1$ are nicely expressed in terms of
\begin{equation*}
\widetilde{\mathcal{B}}_{1} = 2 i \sum_{j=1}^{m} \sqrt{\frac{1+t_{j}}{1-t_{j}}}\beta_{j}, \qquad \widetilde{\mathcal{B}}_{-1} = 2 i \sum_{j=1}^{m} \sqrt{\frac{1-t_{j}}{1+t_{j}}}\beta_{j}.
\end{equation*}
As $z \to 1$, $\Im z>0$, we have
\begin{align}
& D_{\alpha}^{2}(z)\prod_{j=1}^{m}(z-t_{j})^{-\alpha_{j}} =  1-\sqrt{2} \mathcal{A} \sqrt{z-1} + \mathcal{A}^{2}(z-1)+ \bigO((z-1)^{3/2}), \label{asymptotics for D_alpha in D_1}\\
& D_{\beta}^{2}(z)e^{i\pi\mathcal{B}} = 1+   \sqrt{2} \widetilde{\mathcal{B}}_{1} \sqrt{z-1} + \widetilde{\mathcal{B}}_{1}^{2}(z-1) + \bigO((z-1)^{3/2}). \label{asymptotics for D_beta in D_1}
\end{align}
As $z \to -1$, $\pm\Im z > 0$, we have
\begin{align}
& D_{\alpha}^{2}(z)\prod_{j=1}^{m}(t_{j}-z)^{-\alpha_{j}} =  1\pm i\sqrt{2} \mathcal{A} \sqrt{z+1} - \mathcal{A}^{2}(z+1)+ \bigO((z+1)^{3/2}), \label{asymptotics for D_alpha in D_-1}\\
& D_{\beta}^{2}(z)e^{-i\pi\mathcal{B}} = 1 \pm  i\sqrt{2}  \widetilde{\mathcal{B}}_{-1} \sqrt{z+1} - \widetilde{\mathcal{B}}_{-1}^{2} (z+1) + \bigO((z+1)^{3/2}). \label{asymptotics for D_beta in D_-1}
\end{align}
The first term of the expansion of \eqref{Pinf Region 1} as $z \to \infty$ with $W \equiv 0$ is given by
\begin{equation}\label{P_1^inf}
P_{1}^{(\infty)} = \begin{pmatrix}
\displaystyle \sum_{j=1}^{m} \left( \frac{\alpha_{j}t_{j}}{2}+i\sqrt{1-t_{j}^{2}}\beta_{j}\right) & \displaystyle \frac{i}{2} D_{\infty}^{2} \\
\displaystyle -\frac{i}{2}D_{\infty}^{-2} & \displaystyle -\sum_{j=1}^{m} \left( \frac{\alpha_{j}t_{j}}{2}+i\sqrt{1-t_{j}^{2}}\beta_{j}\right)
\end{pmatrix}.
\end{equation}
The next term requires more calculations, but we will only use the $1,1$ entry:
\begin{equation}\label{P_2^inf}
P_{2,11}^{(\infty)} = \frac{1}{8}+\frac{1}{2}\left( \sum_{j=1}^{m} \left( \frac{\alpha_{j}t_{j}}{2}+i\sqrt{1-t_{j}^{2}}\beta_{j}\right) \right)^{2} - \frac{1}{8}\sum_{j=1}^{m}\alpha_{j}(1-2t_{j}^{2}) + \frac{i}{2}\sum_{j=1}^{m}t_{j}\sqrt{1-t_{j}^{2}}\beta_{j}.
\end{equation}
\subsection{Local parametrix near $t_{k}$}
We assume that there exists $\delta > 0$ independent of $n$ such that
\begin{equation}
\min_{j\neq k}\{ |t_{j}-t_{k}|,|t_{j}-1|,|t_{j}+1|\} \geq \delta.
\end{equation}
Consider a fixed disk $\mathcal{D}_{t_{k}}$, centered at $t_{k}$, of radius smaller than $\delta/3$ and such that $\mathcal{D}_{t_{k}} \subset U_W$. Inside $\mathcal{D}_{t_{k}}$, the local parametrix $P^{(t_{k})}$ is a good approximation of $S$ and must satisfy the following RH problem.
\subsubsection*{RH problem for $P^{(t_{k})}$}
\begin{itemize}
\item[(a)] $P^{(t_{k})} : \mathcal{D}_{t_{k}}\setminus (\mathbb{R} \cup \gamma_{+} \cup \gamma_{-}) \to \mathbb{C}^{2\times 2}$ is analytic.
\item[(b)] $P^{(t_{k})}$ has the following jumps:
\begin{align}
& P_{+}^{(t_{k})}(z) = P_{-}^{(t_{k})}(z)\begin{pmatrix}
0 & e^{W(z)}\omega(z) \\ -e^{-W(z)}\omega(z)^{-1} & 0
\end{pmatrix}, & & \mbox{ if } z \in (\mathbb{R}\setminus\{t_{k}\})\cap \mathcal{D}_{t_{k}}, \\
& P_{+}^{(t_{k})}(z) = P_{-}^{(t_{k})}(z)\begin{pmatrix}
1 & 0 \\ e^{-W(z)}\omega(z)^{-1}e^{-2n \xi(z)} & 1
\end{pmatrix}, & & \mbox{ if } z \in (\gamma_{+} \cup \gamma_{-}) \cap \mathcal{D}_{t_{k}}.
\end{align}
\item[(c)] As $n \to \infty$, we have $P^{(t_{k})}(z) = (I + \bigO(n^{-1+2|\Re \beta_{k}|}))P^{(\infty)}(z)$ uniformly for $z \in \partial \mathcal{D}_{t_{k}}$.
\item[(d)] As $z$ tends to $t_{k}$, we have $S(z)P^{(t_{k})}(z)^{-1} = \bigO(1)$.
\end{itemize}
We follow \cite{ItsKrasovsky,FouMarSou,DIK} to construct the solution of the above RH problem, and we define the function $f_{t_{k}}$ by
\begin{equation}
f_{t_{k}}(z) = -2 \left\{ \begin{array}{l l}
\xi(z)-\xi_{+}(t_{k}), & \Im z > 0, \\
-\xi(z)-\xi_{+}(t_{k}), & \Im z < 0,
\end{array} \right. = 2\pi i \int_{t_{k}}^{z}\psi(s)\sqrt{1-s^{2}}ds.
\end{equation}
This a conformal map from $\mathcal{D}_{t_{k}}$ to a neighbourhood of $0$, as shown from the expansion of $f_{t_{k}}(z)$ near $t_{k}$
\begin{equation}\label{asymptotics for f in D_t}
f_{t_{k}}(z) = 2\pi i \psi(t_{k}) \sqrt{1-t_{k}^{2}}(z-t_{k}) \left( 1+\bigO((z-t_{k})) \right), \qquad \mbox{ as } z \to t_{k}.
\end{equation}
Now, we use the freedom we had in the choice of the lenses by requiring that 
\begin{equation}
f_{t_{k}}(\gamma_{+}\cap \mathcal{D}_{t_{k}}) \subset (\Gamma_{4}\cup \Gamma_{2}), \qquad f_{t_{k}}(\gamma_{-}\cap \mathcal{D}_{t_{k}}) \subset (\Gamma_{6}\cup \Gamma_{8}),
\end{equation}
where the contour $\Gamma_{4}, \Gamma_{2}, \Gamma_{6}$ and $\Gamma_{8}$ are shown in Figure \ref{Fig:HG}. We define \begin{equation}\label{def of Wk}
\widetilde{W}_{k}(z) = \left\{ \begin{array}{l l}
(z-t_{k})^{\frac{\alpha_{k}}{2}}e^{-\frac{ i \pi\alpha_{k}}{2}}, & z \in Q_{+,k}^{R}, \\
(t_{k}-z)^{\frac{\alpha_{k}}{2}}e^{\frac{ i \pi\alpha_{k}}{2}}, & z \in Q_{+,k}^{L}, \\
(t_{k}-z)^{\frac{\alpha_{k}}{2}}e^{-\frac{ i \pi\alpha_{k}}{2}}, & z \in Q_{-,k}^{L}, \\
(z-t_{k})^{\frac{\alpha_{k}}{2}}e^{\frac{ i \pi\alpha_{k}}{2}}, & z \in Q_{-,k}^{R}, \\
\end{array} \right.
\end{equation}
and where $Q_{+,k}^{R}$, $Q_{+,k}^{L}$, $Q_{-,k}^{L}$ and $Q_{-,k}^{R}$ are the following four quadrant centred at $t_{k}$:
\begin{align*}
& Q_{\pm,k}^{R} = \{ z \in \mathcal{D}_{t_{k}}: \mp \Re f_{t_{k}}(z) > 0 \mbox{, } \Im f_{t_{k}}(z) >0 \}, \qquad Q_{\pm,k}^{L} = \{ z \in \mathcal{D}_{t_{k}}: \mp \Re f_{t_{k}}(z) > 0 \mbox{, } \Im f_{t_{k}}(z) <0 \}.
\end{align*}
The solution of the above RH problem is given in terms of the confluent hypergeometric model RH problem, denoted $\Phi_{\mathrm{HG}}(z;\alpha_{k},\beta_{k})$ (see Section \ref{Section:Appendix} for the definition and properties of $\Phi_{\mathrm{HG}}(z;\alpha_{k},\beta_{k})$). We obtain
\begin{equation}\label{lol 36}
P^{(t_{k})}(z) = E_{t_{k}}(z)\Phi_{\mathrm{HG}}(nf_{t_{k}}(z);\alpha_{k},\beta_{k})\widetilde{W}_{k}(z)^{-\sigma_{3}}e^{-n\xi(z)\sigma_{3}}e^{-\frac{W(z)}{2}\sigma_{3}}\omega_{k}(z)^{-\frac{\sigma_{3}}{2}},
\end{equation}
where $E_{t_{k}}$ is given by
\begin{equation}\label{E in D_t}
E_{t_{k}}(z) = P^{(\infty)}(z) \omega_{k}(z)^{\frac{\sigma_{3}}{2}}e^{\frac{W(z)}{2}\sigma_{3}} \widetilde{W}_{k}(z)^{\sigma_{3}}\hspace{-0.08cm} \left\{ \hspace{-0.18cm} \begin{array}{l l}
e^{ \frac{i\pi\alpha_{k}}{4}\sigma_{3}}e^{-i\pi\beta_{k} \sigma_{3}}, \hspace{-0.2cm} & z \in Q_{+,k}^{R} \\
e^{-\frac{i\pi\alpha_{k}}{4}\sigma_{3}}e^{-i\pi\beta_{k}\sigma_{3}}, \hspace{-0.2cm} & z \in Q_{+,k}^{L} \\
e^{\frac{i\pi\alpha_{k}}{4}\sigma_{3}}\begin{pmatrix}
0 & 1 \\ -1 & 0
\end{pmatrix} , \hspace{-0.2cm} & z \in Q_{-,k}^{L} \\
e^{-\frac{i\pi\alpha_{k}}{4}\sigma_{3}}\begin{pmatrix}
0 & 1 \\ -1 & 0
\end{pmatrix} , \hspace{-0.2cm} & z \in Q_{-,k}^{R} \\
\end{array} \hspace{-0.2cm} \right\} \hspace{-0.08cm} e^{n\xi_{+}(t_{k})\sigma_{3}} (nf_{t_{k}}(z))^{\beta_{k}\sigma_{3}}.
\end{equation} 
From the jumps of $P^{(\infty)}$ \eqref{jumps for P inf} and the definition of $\widetilde{W}_{k}$ \eqref{def of Wk}, we can check that $E_{t_{k}}$ has no jumps at all in $\mathcal{D}_{t_{k}}$. Furthermore, since $P^{(\infty)}(z) = \bigO(1)(z-t_{k})^{-(\frac{\alpha_{k}}{2}+\beta_{k})\sigma_{3}}$ as $z\to t_{k}$, we conclude from \eqref{E in D_t} that $E_{t_{k}}(z) = \bigO(1)$ as $z \to t_{k}$. As a result, $E_{t_{k}}$ is analytic in the disk $\mathcal{D}_{t_{k}}$. Note also from \eqref{E in D_t} that $E_{t_{k}}(z) = \bigO(n^{|\Re \beta_{k}|})$ as  $n \to \infty$, uniformly for $z \in \mathcal{D}_{t_{k}}$. Since $P^{(t_{k})}$ and $S$ have exactly the same jumps on $(\mathbb{R}\cup \gamma_{+}\cup \gamma_{-})\cap \mathcal{D}_{t_{k}}$, $S(z)P^{(t_{k})}(z)^{-1}$ is analytic in $\mathcal{D}_{t_{k}} \setminus \{t_{k}\}$. As $z \to t_{k}$, $z$ outside the lenses, by condition (d) in the RH problem for $S$, by \eqref{lol 35} and by \eqref{lol 36}, $S(z)P^{(t_{k})}(z)^{-1}$ behaves as $\bigO(\log(z-t_{k}))$ if $\Re \alpha_{k} = 0$, as $\bigO(1)$ if $\Re \alpha_{k} > 0$, and as $\bigO((z-t_{k})^{\Re \alpha_{k}})$ if $\Re \alpha_{k} < 0$. Thus, in all the cases, the singularity of $S(z)P^{(t_{k})}(z)^{-1}$ at $z = t_{k}$ is removable and condition (d) of the RH problem for $P^{(t_{k})}$ is verified.

The value of $E_{t_{k}}(t_{k})$ can be obtained by taking the limit $z \to t_{k}$ in \eqref{E in D_t}. Using \eqref{asymptotics for f in D_t}, \eqref{asymptotics for D_alpha in D_t} and \eqref{asymptotics for D_beta in D_t}, it gives
\begin{multline*}
E_{t_{k}}(t_{k}) = \frac{D_{\infty}^{\sigma_{3}}}{2 \sqrt[4]{1-t_{k}^{2}}}\begin{pmatrix}
e^{-\frac{\pi i}{4}}\sqrt{1+t_{k}} + e^{\frac{\pi i}{4}}\sqrt{1-t_{k}} & i \left( e^{-\frac{\pi i}{4}}\sqrt{1+t_{k}} - e^{\frac{\pi i}{4}}\sqrt{1-t_{k}} \right) \\
-i \left( e^{-\frac{\pi i}{4}}\sqrt{1+t_{k}} - e^{\frac{\pi i}{4}}\sqrt{1-t_{k}} \right) & e^{-\frac{\pi i}{4}}\sqrt{1+t_{k}} + e^{\frac{\pi i}{4}}\sqrt{1-t_{k}}
\end{pmatrix} \Lambda_{k}^{\sigma_{3}},
\end{multline*}
where
\begin{equation}\label{def Lambda}
\Lambda_{k} = e^{\frac{W(t_{k})}{2}}D_{W,+}(t_{k})^{-1}e^{i \frac{\lambda_{k}}{2} } (4\pi \psi(t_{k})n(1-t_{k}^{2})^{3/2})^{\beta_{k}} \prod_{j\neq k} T_{kj}^{-\beta_{j}}, 
\end{equation}
and
\begin{equation}
\lambda_{k} = \mathcal{A} \arccos t_{k} - \frac{\pi}{2}\alpha_{k} - \sum_{j=k+1}^{m} \pi \alpha_{j} + 2\pi n \int_{t_{k}}^{1} \rho(s)ds.
\end{equation}

A direct calculation, using equation \eqref{Asymptotics HG} in the appendix, shows that as $n \to \infty$, uniformly for $z \in \partial \mathcal{D}_{t_{k}}$, $P^{(t_{k})}(z)P^{(\infty)}(z)^{-1}$ admits an expansion in the inverse power of $n$ multiplied by $n^{2|\Re \beta_{k}|}$. We will need explicitly the $n^{-1}$ term:
\begin{equation}\label{asymptotics on the disk D_t}
P^{(t_{k})}(z)P^{(\infty)}(z)^{-1} = I + \frac{v_{k}}{n f_{t_{k}}(z)} E_{t_{k}}(z) \begin{pmatrix}
-1 & \tau(\alpha_{k},\beta_{k}) \\ - \tau(\alpha_{k},-\beta_{k}) & 1
\end{pmatrix}E_{t_{k}}(z)^{-1} + \bigO (n^{-2+2|\Re\beta_{k}|}),
\end{equation}
where $v_{k} = \beta_{k}^{2}-\frac{\alpha_{k}^{2}}{4}$, and $\tau(\alpha_{k},\beta_{k})$ is defined in equation \eqref{def of tau}.
\subsection{Local parametrix near $1$}\label{subsection: airy parametrix near 1}
Consider a fixed disk $\mathcal{D}_{1}$, centered at $1$, of radius smaller than $\delta/3$ and such that $\mathcal{D}_{1} \subset U_W$. Inside $\mathcal{D}_{1}$, the local parametrix $P^{(1)}$ is a good approximation of $S$ and must satisfy the following RH problem.
\subsubsection*{RH problem for $P^{(1)}$}
\begin{itemize}
\item[(a)] $P^{(1)} : \mathcal{D}_{1}\setminus (\mathbb{R} \cup \gamma_{+} \cup \gamma_{-}) \to \mathbb{C}^{2\times 2}$ is analytic.
\item[(b)] $P^{(1)}$ has the following jumps:
\begin{align}
& P_{+}^{(1)}(z) = P_{-}^{(1)}(z)\begin{pmatrix}
0 & e^{W(z)}\omega(z) \\ -e^{-W(z)}\omega(z)^{-1} & 0
\end{pmatrix}, & & \mbox{ if } z \in (-\infty,1)\cap \mathcal{D}_{1}, \\
& P_{+}^{(1)}(z) = P_{-}^{(1)}(z)\begin{pmatrix}
1 & e^{W(z)}\omega(z)e^{2n\xi(z)} \\ 0 & 1
\end{pmatrix}, & & \mbox{ if } z \in (1,\infty)\cap \mathcal{D}_{1}, \\
& P_{+}^{(1)}(z) = P_{-}^{(1)}(z)\begin{pmatrix}
1 & 0 \\ e^{-W(z)}\omega(z)^{-1}e^{-2n \xi(z)} & 1
\end{pmatrix}, & & \mbox{ if } z \in (\gamma_{+} \cup \gamma_{-}) \cap \mathcal{D}_{1}.
\end{align}
\item[(c)] As $n \to \infty$, we have $P^{(1)}(z) = (I + \bigO(n^{-1}))P^{(\infty)}(z)$ uniformly for $z \in \partial \mathcal{D}_{1}$.
\item[(d)] As $z$ tends to $1$, we have $P^{(1)}(z) = \bigO(1)$.
\end{itemize}
The construction of this local parametrix is now standard (see e.g. \cite{DKMVZ1}). We denote
\begin{equation}
f_{1}(z) = \left( -\frac{3}{2}\xi(z) \right)^{2/3} = \left( \frac{3\pi}{2} \int_{1}^{z} \psi(s)\sqrt{s^{2}-1}ds \right)^{2/3}.
\end{equation}
This is a conformal map from $\mathcal{D}_{1}$ to a neighbourhood of $0$ and
\begin{equation}\label{asymptotics f near 1}
f_{1}(z) = (\sqrt{2}\pi \psi(1))^{2/3}(z-1)\left( 1+\frac{1}{10}\left( 1+4\frac{\psi^{\prime}(1)}{\psi(1)} \right)(z-1) + \bigO((z-1)^{2}) \right), \qquad \mbox{as } z \to 1.
\end{equation}
We choose the lenses such that $f_{1}(\gamma_{+}\cap \mathcal{D}_{1}) \subset e^{\frac{2\pi i}{3}}\mathbb{R}^{+}$ and $f_{1}(\gamma_{-}\cap \mathcal{D}_{1}) \subset e^{-\frac{2\pi i}{3}}\mathbb{R}^{+}$.
The solution of the above RH problem is given by
\begin{equation}
P^{(1)}(z) = E_{1}(z) \Phi_{\mathrm{Ai}}(n^{2/3}f_{1}(z))\omega(z)^{-\frac{\sigma_{3}}{2}}e^{-n\xi(z)\sigma_{3}}e^{-\frac{W(z)}{2}\sigma_{3}},
\end{equation}
where 
\begin{equation}
E_{1}(z) = P^{(\infty)}(z)e^{\frac{W(z)}{2}\sigma_{3}}\omega(z)^{\frac{\sigma_{3}}{2}}N^{-1}f_{1}(z)^{\frac{\sigma_{3}}{4}}n^{\frac{\sigma_{3}}{6}}, \qquad N = \frac{1}{\sqrt{2}}\begin{pmatrix}
1 & i \\ i & 1
\end{pmatrix},
\end{equation}
and $\Phi_{\mathrm{Ai}}(z)$ is the solution to the Airy model RH problem presented in the Appendix, see Section \ref{Section:Appendix}. One can check that $E_{1}$ has no jumps in $\mathcal{D}_{1}$ and is bounded as $z\to 1$, and thus $E_{1}$ is analytic in $\mathcal{D}_{1}$. A direct calculation, using \eqref{Asymptotics Airy}, shows that as $n \to \infty$, uniformly for $z \in \partial \mathcal{D}_{1}$, $P^{(1)}(z)P^{(\infty)}(z)^{-1}$ admits an expansion in the inverse power of $n$. The explicit forms of the first terms are given in the following expression
\begin{equation}\label{asymptotics on the disk D_1}
P^{(1)}(z)P^{(\infty)}(z)^{-1} = I + \frac{P^{(\infty)}(z)e^{\frac{W(z)}{2}\sigma_{3}}\omega(z)^{\frac{\sigma_{3}}{2}}}{8n f_{1}(z)^{3/2}}\begin{pmatrix}
\frac{1}{6} & i \\ i & -\frac{1}{6}
\end{pmatrix}\omega(z)^{-\frac{\sigma_{3}}{2}}e^{-\frac{W(z)}{2}\sigma_{3}}P^{(\infty)}(z)^{-1} + \bigO(n^{-2}).
\end{equation}

\subsection{Local parametrix near $-1$}
Consider a fixed disk $\mathcal{D}_{-1}$, centered at $-1$, of radius smaller than $\delta/3$ and such that $\mathcal{D}_{-1} \subset U_W$. Inside $\mathcal{D}_{-1}$, the local parametrix $P^{(-1)}$ is a good approximation of $S$ and must satisfy the following RH problem.
\subsubsection*{RH problem for $P^{(-1)}$}
\begin{itemize}
\item[(a)] $P^{(-1)} : \mathcal{D}_{-1}\setminus (\mathbb{R} \cup \gamma_{+} \cup \gamma_{-}) \to \mathbb{C}^{2\times 2}$ is analytic.
\item[(b)] $P^{(-1)}$ has the following jumps:
\begin{align}
& P_{+}^{(-1)}(z) = P_{-}^{(-1)}(z)\begin{pmatrix}
1 & e^{W(z)}\omega(z) e^{n(\xi_{+}(z)+\xi_{-}(z))} \\ 0 & 1
\end{pmatrix}, & & \mbox{ if } z \in (-\infty,-1)\cap \mathcal{D}_{-1}, \\
& P_{+}^{(-1)}(z) = P_{-}^{(-1)}(z)\begin{pmatrix}
0 & e^{W(z)}\omega(z) \\ -e^{-W(z)}\omega(z)^{-1} & 0
\end{pmatrix}, & & \mbox{ if } z \in (-1,\infty)\cap \mathcal{D}_{-1}, \\
& P_{+}^{(-1)}(z) = P_{-}^{(-1)}(z)\begin{pmatrix}
1 & 0 \\ e^{-W(z)}\omega(z)^{-1}e^{-2n \xi(z)} & 1
\end{pmatrix}, & & \mbox{ if } z \in (\gamma_{+} \cup \gamma_{-}) \cap \mathcal{D}_{-1}.
\end{align}
\item[(c)] As $n \to \infty$, we have $P^{(-1)}(z) = (I + \bigO(n^{-1}))P^{(\infty)}(z)$ uniformly for $z \in \partial \mathcal{D}_{-1}$.
\item[(d)] As $z$ tends to $-1$, we have $P^{(-1)}(z) = \bigO(1)$.
\end{itemize}
The construction of this local parametrix around $-1$ is very similar to the construction done in subsection \ref{subsection: airy parametrix near 1} for the local parametrix around 1. We denote
\begin{equation}
f_{-1}(z) = -\left( -\frac{3}{2} \left\{ \begin{array}{l l}
\xi(z)-\xi_{+}(-1), & \mbox{Im}z>0, \\
\xi(z)-\xi_{-}(-1), & \mbox{Im}z<0, \\
\end{array} \right\} \right)^{2/3} = \left( \frac{3\pi}{2} \int_{-1}^{z} \psi(s)\sqrt{1-s^{2}}ds \right)^{2/3}.
\end{equation}
This is a conformal map from $\mathcal{D}_{-1}$ to a neighbourhood of $0$ and
\begin{equation}\label{asymptotics of f near -1}
f_{-1}(z) = (\sqrt{2}\pi \psi(-1))^{2/3}(z+1)\left( 1-\frac{1}{10}\left( 1-4\frac{\psi^{\prime}(-1)}{\psi(-1)} \right)(z+1) + \bigO((z+1)^{2}) \right), \quad \mbox{as } z \to -1.
\end{equation}
We choose the lenses such that $-f_{-1}(\gamma_{+}\cap \mathcal{D}_{-1}) \subset e^{-\frac{2\pi i}{3}}\mathbb{R}^{+}$ and $-f_{-1}(\gamma_{-}\cap \mathcal{D}_{-1}) \subset e^{\frac{2\pi i}{3}}\mathbb{R}^{+}$. The solution of the above RH problem is given by
\begin{equation}
P^{(-1)}(z) = E_{-1}(z) \sigma_{3} \Phi_{\mathrm{Ai}}(-n^{2/3}f_{-1}(z))\sigma_{3}\omega(z)^{-\frac{\sigma_{3}}{2}}e^{-n\xi(z)\sigma_{3}}e^{-\frac{W(z)}{2}\sigma_{3}},
\end{equation}
where $E_{-1}$ is analytic in $\mathcal{D}_{-1}$ and is given by
\begin{equation}
E_{-1}(z) = (-1)^{n} P^{(\infty)}(z)e^{\frac{W(z)}{2}\sigma_{3}}\omega(z)^{\frac{\sigma_{3}}{2}}N (-f_{-1}(z))^{\frac{\sigma_{3}}{4}}n^{\frac{\sigma_{3}}{6}}.
\end{equation}
Similarly to \eqref{asymptotics on the disk D_1}, one shows with a direct calculation together with \eqref{Asymptotics Airy}, that as $n \to \infty$, uniformly for $z \in \partial \mathcal{D}_{-1}$, $P^{(-1)}(z)P^{(\infty)}(z)^{-1}$ admits an expression in the inverse power of $n$. The first terms are given by
\begin{equation}\label{asymptotics on the disk D_-1}
P^{(-1)}(z)P^{(\infty)}(z)^{-1} = I + \frac{P^{(\infty)}(z)e^{\frac{W(z)}{2}\sigma_{3}}\omega(z)^{\frac{\sigma_{3}}{2}}}{8n (-f_{-1}(z))^{3/2}}\begin{pmatrix}
\frac{1}{6} & -i \\ -i & -\frac{1}{6}
\end{pmatrix}\omega(z)^{-\frac{\sigma_{3}}{2}}e^{-\frac{W(z)}{2}\sigma_{3}}P^{(\infty)}(z)^{-1} + \bigO ( n^{-2} ).
\end{equation}

\subsection{Small norm RH problem} \label{Subsection: small-norm RH problem}
Let $\mathcal{D} = \mathcal{D}_{-1} \cup \mathcal{D}_{1} \cup \bigcup_{j=1}^{m} \mathcal{D}_{t_{j}}$ be the union of the disks, whose boundaries are oriented in the clockwise direction as shown in Figure \ref{Fig:R}, and let $P$ be defined on $\mathcal{D}$ such that $\left. P \right|_{\mathcal{D}_{a}} = P^{(a)}$, where $a = -1,t_{1},...,t_{m},1$.
We define 
\begin{equation}\label{def of R}
R(z) = R(z;\vec{\alpha},\vec{\beta},V,W) = \left\{ \begin{array}{l l}
S(z) P^{(\infty)}(z)^{-1}, & z \in \mathbb{C}\setminus \mathcal{D}, \\
S(z) P(z)^{-1}, & z \in \mathcal{D}.
\end{array} \right.
\end{equation}
\begin{figure}[t]
    \begin{center}
    \setlength{\unitlength}{1truemm}
    \begin{picture}(100,30)(28,30)
        \put(12,40){\line(1,0){21}}
        \put(25,40){\thicklines\vector(1,0){.0001}}
        \put(103,40){\line(1,0){20}}
        \put(115,40){\thicklines\vector(1,0){.0001}}
        \put(49.6,47.5){\thicklines\vector(1,0){.0001}}
        \put(49.6,32.3){\thicklines\vector(1,0){.0001}}
        \put(69.6,47.5){\thicklines\vector(1,0){.0001}}
        \put(69.6,32.3){\thicklines\vector(1,0){.0001}}
        \put(89.6,47.5){\thicklines\vector(1,0){.0001}}
        \put(89.6,32.3){\thicklines\vector(1,0){.0001}}
        \put(38,39.9){\thicklines\circle{10}} \put(35.7,39){$-1$} 
        \put(39.3,44.9){\thicklines\vector(1,0){.0001}}
        \put(58,39.9){\thicklines\circle{10}} \put(57.35,39){$t_{1}$}
        \put(59.3,44.9){\thicklines\vector(1,0){.0001}}
        \put(78,39.9){\thicklines\circle{10}} \put(77,39){$t_{m}$}
        \put(79.3,44.9){\thicklines\vector(1,0){.0001}}
        \put(98,39.9){\thicklines\circle{10}} \put(98,39){$1$}
        \put(99.3,44.9){\thicklines\vector(1,0){.0001}}
        \qbezier(41.3,43.3)(48,52)(54.7,43.3)
        \qbezier(41.3,36.5)(48,28)(54.7,36.5)
        
        \qbezier(61.3,43.3)(68,52)(74.7,43.3)
        \qbezier(61.3,36.5)(68,28)(74.7,36.5)

        \qbezier(81.3,43.3)(88,52)(94.7,43.3)
        \qbezier(81.3,36.5)(88,28)(94.7,36.5)

        \put(65.5,53){$\Sigma_R$}
    \end{picture}
    \caption{Jump contours for the RH problem for $R$ with $m=2$. The circles are oriented in the clockwise direction. \label{Fig:R}}
\end{center}
\end{figure}
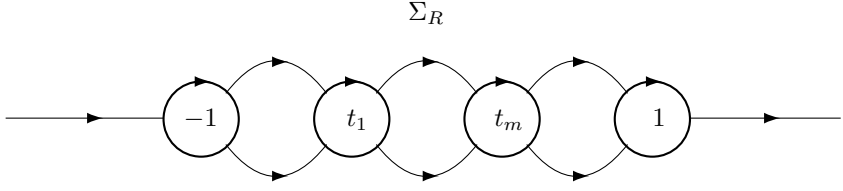
\hspace{-0.08cm}Since $P$ and $S$ have exactly the same jumps inside $\mathcal{D}$, $R$ is analytic in $\mathcal{D}\setminus\{-1,t_{1},...,t_{m},1 \}$. Also, from the behaviours of $P$ and $S$ near $-1$, $t_{1}$, ..., $t_{m}$ and $1$, these points are removable singularities of $R$. Thus, $R$ satisfies the following RH problem:
\subsubsection*{RH problem for $R$}
\begin{itemize}
\item[(a)] $R : \mathbb{C}\setminus \Sigma_{R} \to \mathbb{C}^{2\times 2}$ is analytic, where $\Sigma_{R}$ is shown in Figure \ref{Fig:R}.
\item[(b)] $R$ has the jumps $R_{+}(z) = R_{-}(z)J_{R}(z)$ for $z \in \Sigma_{R}$, where $J_{R}(z)$ is given in \eqref{asymptotics on the disk D_t}, \eqref{asymptotics on the disk D_1} and \eqref{asymptotics on the disk D_-1} for $z$ on respectively $\partial \mathcal{D}_{t_{k}}$, $\partial\mathcal{D}_{1}$, and $\partial\mathcal{D}_{-1}$. As $n \to \infty$ we have
\begin{align}
& J_{R}(z) = I+\bigO(e^{-cn}), & & \mbox{ uniformly for } z \in (\gamma_{+} \cup \gamma_{-}\cup \mathbb{R})\setminus (\mathcal{S}\cup \mathcal{D}), \label{J_R 1} \\
& J_{R}(z) = I + \bigO (n^{-1} ), & & \mbox{ uniformly for } z \in \partial \mathcal{D}_{-1} \cup \partial \mathcal{D}_{1}, \\
& J_{R}(z) = I + \bigO (n^{-1+2|\Re\beta_{k}|}), & & \mbox{ uniformly for } z \in \partial \mathcal{D}_{t_{k}}, k = 1,...,m,
\end{align}
where $c>0$ is a constant.
\item[(c)] As $z \to \infty$, we have $R(z) = I+\bigO(z^{-1})$.
\end{itemize}
Let us define
\begin{equation}
\mathcal{P}_{\beta}^{(\frac{1}{2})} = \{ \vec{\beta}\in \mathbb{C}^{m} :  \Re \beta_{j} \in (-\tfrac{1}{2},\tfrac{1}{2}), \mbox{ for all } j = 1,...,m \}.
\end{equation}
From the standard theory for small-norm RH problems \cite{DKMVZ2,DKMVZ1}, if $\Omega$ is a compact subset of $\mathcal{P}_{\alpha}\times \mathcal{P}_{\beta}^{(\frac{1}{2})}$, there exists $n_{\star} = n_{\star}(\Omega)$ such that $R$ exists for \textit{all} $n \geq n_{\star}$, for \textit{all} $(\vec{\alpha},\vec{\beta}) \in \Omega$ and for \textit{all} $z \in \mathbb{C}\setminus\Sigma_{R}$. Furthermore, for any $r \in \mathbb{N}$, as $n \to \infty$ we have
\begin{align}\label{asymptotics for R}
& R(z) = \sum_{j=0}^{r} R^{(j)}(z)n^{-j} + R_{R}^{(r+1)}(z)n^{-r-1}, \qquad R^{(0)}(z) \equiv I, \\ 
& R^{(j)}(z) = \bigO ( n^{2\beta_{\max}} ) \quad R^{(j)}(z)^{\prime} = \bigO ( n^{2\beta_{\max}} ), \quad R_{R}^{(r+1)}(z) = \bigO ( n^{2\beta_{\max}} ) \quad R_{R}^{(r+1)}(z)^{\prime} = \bigO ( n^{2\beta_{\max}} ),   \nonumber 
\end{align}
uniformly for $z \in \mathbb{C}\setminus \Sigma_{R}$, uniformly for $(\vec{\alpha},\vec{\beta}) \in \Omega$ and uniformly in $t_{1},...,t_{m}$, as long as there exists $\delta > 0$ independent of $n$ such that
\begin{equation}\label{lol 44}
\min_{j\neq k}\{ |t_{j}-t_{k}|,|t_{j}-1|,|t_{j}+1|\} \geq \delta.
\end{equation}
We can find explicit expressions for $R^{(1)}$, $R^{(2)}$,..., $R^{(r)}$ by induction. The factor $n^{\beta_{k}\sigma_{3}}$ in the expression \eqref{E in D_t} for $E_{t_{k}}$ induces factors of the forms $n^{\pm 2\beta_{k}}$ in the entries of $J_{R}^{(j)}(z)$, for $j \geq 1$ and $z \in \partial \mathcal{D}_{t_{k}}$, as can be seen in \eqref{asymptotics on the disk D_t}. Thus, from \eqref{J_R 1}, \eqref{asymptotics on the disk D_t}, \eqref{asymptotics on the disk D_1} and \eqref{asymptotics on the disk D_-1}, the jumps of $R$ admit an expansion as $n \to \infty$ in powers of $n^{-1}$ multiplied by terms of order $n^{2\beta_{\max}}$. As $n \to \infty$, we have
\begin{align}\label{asymptotics for the jumps of R}
& J_{R}(z) = \sum_{j=0}^{r} J_{R}^{(j)}(z)n^{-j} + \bigO (n^{-r-1+2 \beta_{\max}} ), & & \qquad \mbox{uniformly for } z \in \partial \mathcal{D},
\end{align}
where $J_{R}^{(0)}(z) \equiv I$ and for $j \geq 1$, $J_{R}^{(j)}(z) = \bigO(n^{2 \beta_{\max}})$ as $n \to \infty$. From a perturbative analysis of the small norm RH problem for $R$, for every $j \geq 1$
, $R^{(j)}(z)$ is analytic on $\mathbb{C}\setminus \partial \mathcal{D}$ and satisfies 
\begin{align}
& R^{(j)}_{+}(z) = R^{(j)}_{-}(z) + \sum_{\ell=1}^{j} R_{-}^{(j-\ell)}(z)J_{R}^{(\ell)}(z), & & z \in \partial \mathcal{D}, \\
& R^{(j)}(z) = \bigO(z^{-1}), & & \mbox{ as } z \to \infty.
\end{align}
Therefore, by the Sokhotsky-Plemelj formula, $R^{(j)}$ is simply given by
\begin{equation}\label{lol 43}
R^{(j)}(z) = \frac{1}{2\pi i} \int_{\partial \mathcal{D}} \frac{\sum_{\ell=1}^{j} R_{-}^{(j-\ell)}(s)J_{R}^{(\ell)}(s)}{s-z}ds, \qquad j \geq 1,
\end{equation}
where we recall that the orientation on the disks is clockwise, as shown in Figure \ref{Fig:R}. 

\vspace{0.2cm}\hspace{-0.65cm}
Since $R(z)$ exists for \textit{all} $n \geq n_\star$ and for \textit{all} $(\vec{\alpha},\vec{\beta}) \in \Omega$, by inverting the transformations $Y \mapsto T \mapsto S \mapsto R$, we have constructed a solution for the RH problem for $Y$ that exists for \textit{all} $n \geq n_\star$ and for \textit{all} $(\vec{\alpha},\vec{\beta}) \in \Omega$. From \eqref{Y definition} and the determinantal representation for orthogonal polynomials (see \eqref{lol 33}), this implies that the polynomials $\kappa_{n}^{-1}p_{n}$ and $\kappa_{n-1}p_{n-1}$ are analytic functions of $(\vec{\alpha},\vec{\beta}) \in \Omega$ (since they exist, they can not have poles). Therefore, $R(z)$ is also an analytic function of $(\vec{\alpha},\vec{\beta}) \in \Omega$ (some extra calculations are required to verify this at $z = t_{1},...,t_{m}$, which are similar to those in Subsection \ref{Subsection:Y tilde regularized}, and we omit them here).
Furthermore, by \eqref{lol 43}, and since the asymptotic series of the confluent hypergeometric function that appears in the jumps for $R$ \eqref{asymptotics on the disk D_t} (see also \eqref{Asymptotics HG}) is differentiable in $\alpha_{k}$'s and $\beta_{k}$'s, the matrices $R^{(j)}$, $j = 1,...,r,$ are also analytic functions of $(\vec{\alpha},\vec{\beta}) \in \Omega$. Hence, the terms in \eqref{asymptotics for R} are both uniform and analytic in $(\vec{\alpha},\vec{\beta}) \in \Omega$. Furthermore, since factors of the forms $n^{\pm 2 \beta_{k}}$, $k = 1,...,m$, appear in the entries of $J_{R}^{(j)}$ for $j \geq 1$ (see \eqref{asymptotics for the jumps of R} and the comment above), as $n \to \infty$ we have
\begin{equation}\label{estimates for R with beta}
\partial_{\beta_{k}}R^{(j)}(z) = \bigO (  n^{2\beta_{\max}} \log n), \qquad \partial_{\beta_{k}}R_{R}^{(r+1)}(z) = \bigO (  n^{2\beta_{\max}} \log n),
\end{equation}
and these asymptotics are also uniform in $z$, $\vec{\alpha}$, $\vec{\beta}$ and $t_{1}$,...,$t_{m}$ as in \eqref{asymptotics for R}. 
\begin{remark}\label{Remark: R}
We will need the following further properties of $R$ for particular values of the parameters $V$ and $W$.
\begin{itemize}
\item[(a)] If $V$ is replaced by $V_{s}$ (see \eqref{deformation parameter potential} for the definition of $V_{s}$)  and $W$ by $0$, the asymptotics \eqref{asymptotics for R} also hold uniformly for $s \in [0,1]$. This follows from standard arguments for small-norm Riemann-Hilbert problems and a detailed analysis of the Cauchy operator associated to $R$, see e.g. \cite[Lemma 4.35]{BerWebbWong} for a similar situation. Furthermore, the determinantal representation for orthogonal polynomials implies that $R$ is an analytic function of $s \in (0,1)$. 
\item[(b)] If $W$ is replaced by $W_{t}$ (see \eqref{deformation parameter Tau} for the definition of $W_{t}$), the asymptotics \eqref{asymptotics for R} also hold uniformly for $t \in [0,1]$, see e.g. \cite[Lemma 4.35]{BerWebbWong} for a similar situation. Furthermore, the determinantal representation for orthogonal polynomials implies that $R$ is analytic for $t \in [0,1]$.
\end{itemize}
\end{remark}
The goal for the rest of this section is to compute $R^{(1)}(z)$ from \eqref{lol 43}. Note that the expression for $J_{R}^{(1)}(z)$ for $z\in \partial \mathcal{D}$ can be analytically continued on $\overline{\mathcal{D}}$, except at $-1,t_{1},...,t_{m}$ and $1$, where the expressions in \eqref{asymptotics on the disk D_t}, \eqref{asymptotics on the disk D_1} and \eqref{asymptotics on the disk D_-1} admit poles. These poles are of order $2$ at $-1$ and $1$, and of order $1$ at $t_{k}$, $k=1,...,m$. Therefore, for $z$ outside the disks, $R^{(1)}(z)$ is given by
\begin{equation}\label{R^{(1)}}
\begin{array}{r c l}
\displaystyle R^{(1)}(z) & = & \displaystyle \sum_{j=1}^{m} \frac{1}{z-t_{j}} \mbox{Res}\big(J_{R}^{(1)}(s),s=t_{j}\big) \\
 &  & \displaystyle +\frac{1}{z-1} \mbox{Res}\big(J_{R}^{(1)}(s),s=1\big)+ \frac{1}{(z-1)^{2}} \mbox{Res}\big((s-1)J_{R}^{(1)}(s),s=1\big) \\
  &  & \displaystyle +\frac{1}{z+1} \mbox{Res}\big(J_{R}^{(1)}(s),s=-1\big)+ \frac{1}{(z+1)^{2}} \mbox{Res}\big((s+1)J_{R}^{(1)}(s),s=-1\big).  \\
\end{array}
\end{equation}
We can compute these residues explicitly in the case $W\equiv 0$. For the residues at $t_{k}$, $k \in \{1,...,m\}$, using \eqref{asymptotics on the disk D_t}, \eqref{asymptotics for f in D_t}, \eqref{asymptotics for D_alpha in D_t} and \eqref{asymptotics for D_beta in D_t}, we have
\begin{equation}
\mbox{Res}\big(J_{R}^{(1)}(z),z=t_{k}\big) = \frac{v_{k} D_{\infty}^{\sigma_{3}}}{2\pi \psi(t_{k})(1-t_{k}^{2})}\begin{pmatrix}
t_{k} + \widetilde{\Lambda}_{I,k} & -i - i\widetilde{\Lambda}_{R,2,k} \\
-i + i \widetilde{\Lambda}_{R,1,k} & - t_{k} - \widetilde{\Lambda}_{I,k}
\end{pmatrix}D_{\infty}^{-\sigma_{3}},
\end{equation}
where
\begin{align}
& \widetilde{\Lambda}_{I,k} = \frac{\tau(\alpha_{k},\beta_{k})\Lambda_{k}^{2}-\tau(\alpha_{k},-\beta_{k})\Lambda_{k}^{-2}}{2i}, \\[0.3cm]
& \widetilde{\Lambda}_{R,1,k} = \frac{\tau(\alpha_{k},\beta_{k})\Lambda_{k}^{2}e^{i\arcsin t_{k}}+\tau(\alpha_{k},-\beta_{k})\Lambda_{k}^{-2}e^{-i\arcsin t_{k}}}{2}, \\[0.3cm] & \widetilde{\Lambda}_{R,2,k} = \frac{\tau(\alpha_{k},\beta_{k})\Lambda_{k}^{2}e^{-i\arcsin t_{k}}+\tau(\alpha_{k},-\beta_{k})\Lambda_{k}^{-2}e^{i\arcsin t_{k}}}{2}.
\end{align}
Note the relation
\begin{equation}\label{connection Lambdas}
\widetilde{\Lambda}_{R,1,k} - \widetilde{\Lambda}_{R,2,k} = -2t_{k} \widetilde{\Lambda}_{I,k}.
\end{equation}
For the two residues at $1$, using \eqref{asymptotics on the disk D_1}, \eqref{asymptotics f near 1}, \eqref{asymptotics for D_alpha in D_1} and \eqref{asymptotics for D_beta in D_1}, we obtain
\begin{equation}
\mbox{Res}\big((z-1)J_{R}^{(1)}(z),z=1\big) = \frac{5}{2^{5} 3 \pi \psi(1)} D_{\infty}^{\sigma_{3}} \begin{pmatrix}
-1 & i \\ i & 1
\end{pmatrix} D_{\infty}^{-\sigma_{3}},
\end{equation}
and
\begin{multline*}
\mbox{Res}\big(J_{R}^{(1)}(z),z=1\big) = \frac{D_{\infty}^{\sigma_{3}}}{2^{5}\pi \psi(1)} \times \\[0.3cm] \hspace{0.5cm} \begin{pmatrix}
-2(\mathcal{A}-\widetilde{\mathcal{B}_{1}})^{2} + 1 + \frac{\psi^{\prime}(1)}{\psi(1)}  & 2i \left( (\mathcal{A}-\widetilde{\mathcal{B}_{1}})^{2} + 2 (\mathcal{A}-\widetilde{\mathcal{B}_{1}}) + \frac{2}{3} - \frac{1}{2}\frac{\psi^{\prime}(1)}{\psi(1)} \right) \\
2i \left( (\mathcal{A}-\widetilde{\mathcal{B}_{1}})^{2} - 2 (\mathcal{A}-\widetilde{\mathcal{B}_{1}}) + \frac{2}{3} - \frac{1}{2}\frac{\psi^{\prime}(1)}{\psi(1)} \right) & 2(\mathcal{A}-\widetilde{\mathcal{B}_{1}})^{2} - 1 -\frac{\psi^{\prime}(1)}{\psi(1)}
\end{pmatrix}D_{\infty}^{-\sigma_{3}}.
\end{multline*}
Finally, for the two residues at $-1$, using \eqref{asymptotics on the disk D_-1}, \eqref{asymptotics of f near -1}, \eqref{asymptotics for D_alpha in D_-1} and \eqref{asymptotics for D_beta in D_-1}, we obtain
\begin{equation}
\mbox{Res}\big((z+1)J_{R}^{(1)}(z),z=-1\big) = \frac{5}{2^{5} 3 \pi \psi(-1)} D_{\infty }^{\sigma_{3}} \begin{pmatrix}
-1 & -i \\ -i & 1
\end{pmatrix} D_{\infty}^{-\sigma_{3}},
\end{equation}
and
\begin{align*}
& \mbox{Res}\big(J_{R}^{(1)}(z),z=-1\big) = \frac{D_{\infty}^{\sigma_{3}}}{2^{5}\pi \psi(-1)} \times \\[0.3cm] 
& \hspace{0.0cm} \begin{pmatrix}
2(\mathcal{A}+\widetilde{\mathcal{B}}_{-1})^{2}-1 + \frac{\psi^{\prime}(-1)}{\psi(-1)} & 2i \left( (\mathcal{A}+\widetilde{\mathcal{B}}_{-1})^{2} + 2(\mathcal{A}+\widetilde{\mathcal{B}}_{-1})+ \frac{2}{3} + \frac{1}{2}\frac{\psi^{\prime}(-1)}{\psi(-1)} \right) \\
2i \left( (\mathcal{A}+\widetilde{\mathcal{B}}_{-1})^{2} - 2(\mathcal{A}+\widetilde{\mathcal{B}}_{-1})+ \frac{2}{3} + \frac{1}{2}\frac{\psi^{\prime}(-1)}{\psi(-1)}  \right) & -2(\mathcal{A}+\widetilde{\mathcal{B}}_{-1})^{2} + 1 - \frac{\psi^{\prime}(-1)}{\psi(-1)}
\end{pmatrix}D_{\infty}^{-\sigma_{3}}.
\end{align*}

\section{Integration in $\beta_{1},...,\beta_{m}$}\label{Section: beta}
In this section we use the differential identity \eqref{differential identity} and the RH analysis of Section \ref{Section: steepest descent} only for $W\equiv 0$ and $V(x) = 2x^{2}$.
\subsection{Computation of $\widetilde{Y}(t_{k})$}\label{Subsection:Y tilde regularized}
For $z \in \mathcal{D}_{t_{k}}$, $k \in \{1,...,m\}$, $z$ outside the lenses and $z \in Q_{+,k}^{R}$, we invert the transformations $Y \mapsto T \mapsto S \mapsto R$ in order to express $Y$ in terms of $R$ and $P^{(t_{k})}$,
\begin{equation}\label{Y in Dt}
Y(z) = e^{- \frac{n\ell}{2}\sigma_{3}} R(z)P^{(t_{k})}(z) e^{ng(z)\sigma_{3}}e^{\frac{n\ell}{2}\sigma_{3}},
\end{equation}
with
\begin{equation}
P^{(t_{k})}(z) = E_{t_{k}}(z) \Phi_{\mathrm{HG}}(nf_{t_{k}}(z);\alpha_{k},\beta_{k})e^{\frac{\pi i \alpha_{k}}{2}\sigma_{3}}(z-t_{k})^{-\frac{\alpha_{k}}{2}\sigma_{3}}e^{-n\xi(z)\sigma_{3}}\omega_{k}(z)^{-\frac{\sigma_{3}}{2}}.
\end{equation}
Note that for $z \in Q_{+,k}^{R}$, $z$ outside the lenses, $nf_{t_{k}}(z) \in II$ (see Figure \ref{Fig:HG}) and we can use \eqref{phi_HG}
\begin{multline*}
\Phi_{\mathrm{HG}}(nf_{t_{k}}(z))e^{\frac{i\pi  \alpha_{k}}{4}\sigma_{3}} = \\  \begin{pmatrix}
 \frac{\Gamma(1 + \frac{\alpha_{k}}{2}-\beta_{k})}{\Gamma(1+\alpha_{k})}G(\frac{\alpha_{k}}{2}+\beta_{k}, \alpha_{k}; nf_{t_{k}}(z))e^{-\frac{i\pi\alpha_{k}}{2}} & -\frac{\Gamma(1 + \frac{\alpha_{k}}{2}-\beta_{k})}{\Gamma(\frac{\alpha_{k}}{2}+\beta_{k})}H(1+\frac{\alpha_{k}}{2}-\beta_{k},\alpha_{k};nf_{t_{k}}(z)e^{-\pi i}) \\ 
 \frac{\Gamma(1 + \frac{\alpha_{k}}{2}+\beta_{k})}{\Gamma(1+\alpha_{k})}G(1+\frac{\alpha_{k}}{2}+\beta_{k},\alpha_{k};nf_{t_{k}}(z))e^{-\frac{i\pi\alpha_{k}}{2}} & H(\frac{\alpha_{k}}{2}-\beta_{k},\alpha_{k};nf_{t_{k}}(z)e^{-\pi i})
\end{pmatrix}.
\end{multline*} \normalsize
In \cite[Section 13.14(iii)]{NIST}, using the relations \eqref{relation between G and H and Whittaker}, we can find asymptotics of these functions near the origin:
\begin{equation}
G(a,\alpha_{k};z) = z^{\frac{\alpha_{k}}{2}}(1+\bigO(z)), \qquad \mbox{ as } z \to 0,
\end{equation}
and, if $\alpha_{k} \neq 0$ and $a - \frac{\alpha_{k}}{2} \pm \frac{\alpha_{k}}{2} \neq 0,-1,-2,...$,
\begin{equation}
H(a,\alpha_{k};z) = \left\{ \begin{array}{l l}
\displaystyle \frac{\Gamma(\alpha_{k})}{\Gamma(a)}z^{-\frac{\alpha_{k}}{2}} + \bigO(z^{1-\frac{\Re \alpha_{k}}{2}}) + \bigO(z^{\frac{\Re\alpha_{k}}{2}}), & \displaystyle \mbox{ if } \Re \alpha_{k} > 0, \\
\displaystyle \frac{\Gamma(-\alpha_{k})}{\Gamma(a-\alpha_{k})}z^{\frac{\alpha_{k}}{2}} + \frac{\Gamma(\alpha_{k})}{\Gamma(a)}z^{-\frac{\alpha_{k}}{2}} + \bigO(z^{1+\frac{\Re\alpha_{k}}{2}}), & \displaystyle \mbox{ if } -1 < \Re \alpha_{k} \leq 0.
\end{array} \right.
\end{equation}
Note that conditions $a - \frac{\alpha_{k}}{2} \pm \frac{\alpha_{k}}{2} \neq 0,-1,-2,...$ are always satisfied for $a = \frac{\alpha_{k}}{2}-\beta_{k}$ and $a = 1+\frac{\alpha_{k}}{2}-\beta_{k}$, since $\Re \beta_{k} \in \big( \frac{-1}{4},\frac{1}{4}\big)$ and $\Re \alpha_{k} > -1$.
Therefore, using \eqref{asymptotics for f in D_t}, the leading terms in $\Phi_{\mathrm{HG}}(nf_{t_{k}}(z))e^{\frac{\pi i \alpha_{k}}{2}\sigma_{3}}(z-t_{k})^{-\frac{\alpha_{k}}{2}\sigma_{3}}\omega_{k}(z)^{-\frac{\sigma_{3}}{2}}$ as $z \to t_{k}$ for $-1 < \Re \alpha_{k} \leq 0$, $\alpha_{k} \neq 0$ are given by
\begin{equation}\label{lol 6}
\begin{pmatrix}
\displaystyle \Phi_{k,11} & \displaystyle \alpha_{k}^{-1}\big(\Phi_{k,12} + \widetilde{c}_{k} \Phi_{k,11}(z-t_{k})^{\alpha_{k}}\big) \\
\displaystyle \Phi_{k,21} & \displaystyle \alpha_{k}^{-1}\big(\Phi_{k,22} + \widetilde{c}_{k} \Phi_{k,21}(z-t_{k})^{\alpha_{k}}\big)
\end{pmatrix}, \qquad \widetilde{c}_{k} = \frac{\alpha_{k}\Gamma(1+\alpha_{k})\Gamma(-\alpha_{k})e^{-\frac{\pi i \alpha_{k}}{2}}\omega_{k}(t_{k})}{\Gamma(-\frac{\alpha_{k}}{2}-\beta_{k})\Gamma(1+\frac{\alpha_{k}}{2}+\beta_{k})},
\end{equation}
and
\begin{equation}\label{def Psi matrix}
\hspace{-0.2cm} \begin{array}{l l}
\displaystyle \Phi_{k,11} = \frac{\Gamma(1+\frac{\alpha_{k}}{2}-\beta_{k})}{\Gamma(1+\alpha_{k})}\Big(4n\sqrt{1-t_{k}^{2}}\Big)^{\frac{\alpha_{k}}{2}}\omega_{k}(t_{k})^{-\frac{1}{2}}, & \displaystyle \Phi_{k,12} = \frac{-\alpha_{k}\Gamma(\alpha_{k})}{\Gamma(\frac{\alpha_{k}}{2}+\beta_{k})}\Big(4n\sqrt{1-t_{k}^{2}}\Big)^{-\frac{\alpha_{k}}{2}}\omega_{k}(t_{k})^{\frac{1}{2}}, \\[0.4cm]
\displaystyle \Phi_{k,21} = \frac{\Gamma(1+\frac{\alpha_{k}}{2}+\beta_{k})}{\Gamma(1+\alpha_{k})}\Big(4n\sqrt{1-t_{k}^{2}}\Big)^{\frac{\alpha_{k}}{2}}\omega_{k}(t_{k})^{-\frac{1}{2}}, & \displaystyle \Phi_{k,22} = \frac{\alpha_{k}\Gamma(\alpha_{k})}{\Gamma(\frac{\alpha_{k}}{2}-\beta_{k})}\Big(4n\sqrt{1-t_{k}^{2}}\Big)^{-\frac{\alpha_{k}}{2}}\omega_{k}(t_{k})^{\frac{1}{2}}.
\end{array}
\end{equation}
By the connection formula $\Gamma(z)\Gamma(1-z) = \frac{\pi}{\sin(\pi z)}$, we can rewrite $\widetilde{c}_{k}$ as
\begin{equation}
\widetilde{c}_{k} = \frac{\pi\alpha_{k}}{\sin(\pi\alpha_{k})} \frac{e^{i\pi\beta_{k}} - e^{- i\pi\alpha_{k}}e^{-i\pi\beta_{k}}}{2\pi i} \omega_{k}(t_{k}).
\end{equation}
If $\Re \alpha_{k} > 0$, the same expression \eqref{lol 6} is valid, except that $\widetilde{c}_{k}$ has to be replaced by $0$. Therefore, by \eqref{lim VP}, \eqref{substraction in VP}, \eqref{def second column of tilde Y}, \eqref{relation between g and xi}, \eqref{asymptotics for R} and \eqref{Y in Dt}, we obtain as $n \to \infty$, for $\alpha_{k} \neq 0$,
\begin{equation}\label{Y tilde alpha_k neq 0}
\widetilde{Y}(t_{k}) = e^{-\frac{n\ell}{2}\sigma_{3}} \left(I + \bigO (n^{-1+2\beta_{\max}})\right)E_{t_{k}}(t_{k})\begin{pmatrix}
\Phi_{k,11} & \Phi_{k,12} \\ \Phi_{k,21} & \Phi_{k,22}
\end{pmatrix}e^{nt_{k}^{2}\sigma_{3}}.
\end{equation}
The case of $\alpha_{k} = 0$ is simpler. We can obtain $\widetilde{Y}(t_{k})$ directly from \eqref{def second column of tilde Y} and \eqref{def VP}. It suffices to take the limit $\alpha_{k}\to 0$ in \eqref{Y tilde alpha_k neq 0} and to note that $\lim_{x\to 0}x\Gamma(x) = 1$ (see e.g. \cite[Chapter 5]{NIST}). 
The goal for the rest of this section is to prove the following.
\begin{proposition}\label{prop:beta}
As $n \to \infty$,
\begin{equation*}
\begin{array}{r c l}
\displaystyle \log \frac{D_{n}(\vec{\alpha},\vec{\beta},2x^{2},0)}{D_{n}(\vec{\alpha},\vec{0},2x^{2},0)} & = & \displaystyle 2in \sum_{j=1}^{m} \Big(\arcsin t_{j} + t_{j} \sqrt{1-t_{j}^{2}}\Big)\beta_{j} + i \mathcal{A} \sum_{j=1}^{m} \beta_{j} \arcsin t_{j} \\
& & \displaystyle -\frac{i\pi}{2}\sum_{j=1}^{m} \beta_{j}\mathcal{A}_{j} - \sum_{j=1}^{m} \beta_{j}^{2} \log\big(8n(1-t_{j}^{2})^{3/2}\big) +  \sum_{1 \leq j < k \leq m} \log  T_{jk}^{2\beta_{j}\beta_{k}} \\
& & \displaystyle + \sum_{j=1}^{m} \log \left(   \frac{G(1+\frac{\alpha_{j}}{2}+\beta_{j})G(1+\frac{\alpha_{j}}{2}-\beta_{j})}{G(1+\frac{\alpha_{j}}{2})^{2}} \right) + \bigO \left( \frac{\log n}{n^{1-4 \beta_{\max}}}\right),
\end{array}
\end{equation*}
where $\mathcal{A}_{k} = \sum_{j=1}^{k-1} \alpha_{j} - \sum_{j=k+1}^{m} \alpha_{j}$.
\end{proposition}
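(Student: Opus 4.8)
The plan is to feed the output of the steepest descent analysis of Section~\ref{Section: steepest descent} into the differential identity \eqref{differential identity} and then integrate in $\beta_1,\dots,\beta_m$, starting from $\vec\beta=\vec0$. Every object occurring in \eqref{differential identity} must first be expressed through the Riemann--Hilbert data: the leading coefficients $\kappa_n,\kappa_{n-1}$, the subleading coefficients $\eta_n,\gamma_n$ from \eqref{def coeff OPs}, and the boundary values $Y_{11}(t_j),Y_{21}(t_j)$ together with the regularised matrix $\widetilde Y(t_j)$. I would organise this into a \emph{global} part (coefficients read off at $z=\infty$) and a \emph{local} part (values at the singularities $t_j$).

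For the global coefficients I would invert $Y(z)=e^{-\frac{n\ell}{2}\sigma_3}R(z)P^{(\infty)}(z)e^{ng(z)\sigma_3}e^{\frac{n\ell}{2}\sigma_3}$ and expand as $z\to\infty$. Writing $Y(z)=(I+Y_1 z^{-1}+Y_2 z^{-2}+\cdots)z^{n\sigma_3}$, the standard identities $\kappa_{n-1}^2=-(Y_1)_{21}/(2\pi i)$, $\kappa_n^{-2}=-2\pi i (Y_1)_{12}$, $\eta_n=(Y_1)_{11}$ and $\gamma_n=(Y_2)_{11}$ deliver all four coefficients. Here I would use the Gaussian-specific expansion \eqref{asymptotics for e^{ng}} of $e^{ng}$, the matrices $P_1^{(\infty)}$ and $P_{2,11}^{(\infty)}$ from \eqref{P_1^inf}--\eqref{P_2^inf}, and $R(z)=I+R^{(1)}(z)/n+\cdots$ with $R^{(1)}$ given by \eqref{R^{(1)}} in terms of the residues computed at the end of Section~\ref{Section: steepest descent}. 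Since $\ell=1+2\log 2$ for the Gaussian potential, conjugation by $e^{\pm\frac{n\ell}{2}\sigma_3}$ inserts the $e^{\pm n\ell}$ factors into $(Y_1)_{12},(Y_1)_{21}$, which is what eventually produces the $\bigO(n)$ term of the proposition.

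For the local values I would use the representation \eqref{Y in Dt} of $Y$ inside $\mathcal D_{t_k}$ together with the small-$z$ behaviour \eqref{lol 6}, \eqref{def Psi matrix} of the confluent hypergeometric parametrix and the explicit constant $E_{t_k}(t_k)$, reading off $Y_{11}(t_j),Y_{21}(t_j)$ and, via \eqref{Y tilde alpha_k neq 0}, the entries of $\widetilde Y(t_j)$. The factor $\Lambda_k\sim n^{\beta_k}$ from \eqref{def Lambda} and the Gamma-function ratios in \eqref{def Psi matrix} are the crucial ingredients: products such as $\widetilde Y_{22}(t_j)\partial_{\beta_k}Y_{11}(t_j)$ generate both the $\log n$ contributions and the $\beta_k$-derivatives of $\log\Gamma(1+\tfrac{\alpha_k}{2}\pm\beta_k)$. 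Inserting everything into \eqref{differential identity} and grouping by powers of $n$, I would obtain $\partial_{\beta_k}\log D_n(\vec\alpha,\vec\beta,2x^2,0)$ up to an error of order $n^{-1+4\beta_{\max}}\log n$; the $\log n$ and the $n^{4\beta_{\max}}$ arise respectively from the estimates \eqref{estimates for R with beta} and from the simultaneous occurrence of $\Lambda_k^{\pm2}\sim n^{\pm2\beta_k}$.

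The last step is integration from $\beta_k=0$. The $\bigO(n)$ and cross contributions integrate to $2in\sum_j(\arcsin t_j+t_j\sqrt{1-t_j^2})\beta_j$ and $\sum_{j<k}\log T_{jk}^{2\beta_j\beta_k}$, the $\log n$ terms to $-\sum_j\beta_j^2\log(8n(1-t_j^2)^{3/2})$, and the Gamma-function combination to the Barnes-$G$ ratio $\prod_j G(1+\tfrac{\alpha_j}{2}+\beta_j)G(1+\tfrac{\alpha_j}{2}-\beta_j)/G(1+\tfrac{\alpha_j}{2})^2$, using $G(1+z)/G(z)=\Gamma(z)$. Because \eqref{differential identity} holds only on $\Omega\cap\widetilde{\mathcal P}_1^{(n)}=\Omega\setminus\widetilde\Omega_1^{(n)}$, I would first invoke the analyticity of $D_n(\vec\alpha,\vec\beta,2x^2,0)$ in $\beta_k$ to extend the identity to all of $\Omega$, so the finitely many excluded points are harmless. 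I expect the genuine difficulty to be the bookkeeping: checking that the many $\bigO(n)$ and $\bigO(1)$ contributions from the global and local computations collapse into exactly the stated closed forms (in particular that spurious $n$- and $\log n$-dependence cancels except where claimed), and verifying that all error sources combine into the single uniform bound $\bigO(\log n\, n^{-1+4\beta_{\max}})$.
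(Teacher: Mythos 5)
Your proposal is correct and follows essentially the same route as the paper: the differential identity \eqref{differential identity}, asymptotics of the global coefficients $\kappa_{n},\kappa_{n-1},\eta_{n},\gamma_{n}$ from the $z\to\infty$ expansion of $Y=e^{-\frac{n\ell}{2}\sigma_{3}}RP^{(\infty)}e^{ng\sigma_{3}}e^{\frac{n\ell}{2}\sigma_{3}}$, the local data $Y(t_{j}),\widetilde{Y}(t_{j})$ from \eqref{Y in Dt}, \eqref{lol 6}--\eqref{def Psi matrix} and \eqref{Y tilde alpha_k neq 0}, extension of the identity from $\Omega\setminus\widetilde{\Omega}_{1}^{(n)}$ to $\Omega$ by analyticity, and then integration in the $\beta_{k}$'s one at a time with the Barnes $G$-function absorbing the Gamma terms. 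One genuine (and legitimate) deviation: you extract $\kappa_{n}^{-2}=-2\pi i(Y_{1})_{12}$ directly, whereas the paper, having computed only $\kappa_{n-1}^{2}$ from $(Y_{1})_{21}$, obtains $\kappa_{n}^{2}$ via the rescaling relation \eqref{lol 7} (the weight depends on $n$, so one cannot simply replace $n$ by $n+1$); your exact identity sidesteps that trick, at the price that the correction terms appear as $R_{1,12}^{(1)}/P_{1,12}^{(\infty)}$ instead of the paper's $R_{1,21}^{(1)}/P_{1,21}^{(\infty)}-\frac{\mathcal{A}}{2}+\sum_{j}\frac{it_{j}\beta_{j}}{\sqrt{1-t_{j}^{2}}}$, and the two must (and do) agree. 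One small but consequential imprecision for your bookkeeping: the factors $e^{\pm n\ell}$ inserted by the conjugation are independent of $\vec{\beta}$, so they drop out of every $\partial_{\beta_{k}}$ and cannot be the source of the $\bigO(n)$ term; that term instead arises because the prefactors $n$ and $(n+\mathcal{A})$ in \eqref{differential identity} multiply $\beta_{k}$-derivatives of the order-one, $\beta$-dependent quantities, namely $D_{\beta,\infty}^{\mp2}=\exp\big(\mp 2i\sum_{j}\beta_{j}\arcsin t_{j}\big)$ inside $\log(\kappa_{n}\kappa_{n-1})$ and $P_{2,11}^{(\infty)}$ inside $\gamma_{n}-\eta_{n}^{2}/2$.
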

By \eqref{def coeff OPs} and \eqref{Y definition}, we have
\begin{equation}\label{How to find the coeff}
\kappa_{n-1}^{2} = \lim_{z\to\infty} \frac{i Y_{21}(z)}{2\pi z^{n-1}}, \quad \eta_{n} = \lim_{z\to\infty}  \frac{Y_{11}(z)-z^{n}}{z^{n-1}}, \quad \gamma_{n} = \lim_{n\to\infty} \frac{Y_{11}(z)-z^{n}-\eta_{n}z^{n-1}}{z^{n-2}}.
\end{equation}
For $z$ outside the lenses and outside the disks, by inverting the transformations, we can express $Y$ in terms of $R$ and $P^{(\infty)}$, one has
\begin{equation}\label{Y near inf}
Y(z) = e^{-\frac{n\ell}{2}\sigma_{3}}R(z)P^{(\infty)}(z)e^{ng(z)\sigma_{3}}e^{\frac{n\ell}{2}\sigma_{3}}.
\end{equation}
From \eqref{asymptotics for e^{ng}}, \eqref{P_1^inf}, \eqref{asymptotics for R} and \eqref{R^{(1)}}, we obtain 
\begin{equation}\label{kappa_n-1}
\kappa_{n-1}^{2} = e^{n}2^{2(n-1)+\mathcal{A}}\pi^{-1} \exp\bigg(\hspace{-0.14cm}-2i\sum_{j=1}^{m}\beta_{j} \arcsin t_{j}\bigg) \bigg( 1+\frac{R_{1,21}^{(1)}}{n P^{(\infty)}_{1,21}} + \bigO (n^{-2+2\beta_{\max}})\bigg),
\end{equation}
where the notation $R_{k,ij}^{(1)}$ refers to the $z^{-k}$ coefficient in the large $z$ asymptotics of $R_{ij}^{(1)}$. More explicitly, combining \eqref{P_1^inf} and \eqref{R^{(1)}}, we find
\begin{equation}
 \frac{R_{1,21}^{(1)}}{P^{(\infty)}_{1,21}} = \sum_{j=1}^{m} \frac{v_{j}(1-\widetilde{\Lambda}_{R,1,j})}{2(1-t_{j}^{2})} - \frac{1}{8} \left( \frac{2}{3}-2\mathcal{A} +\mathcal{A}^{2}+(1-\mathcal{A})(\widetilde{\mathcal{B}}_{1}-\widetilde{\mathcal{B}}_{-1})+\frac{\widetilde{\mathcal{B}}_{1}^{2} + \widetilde{\mathcal{B}}_{-1}^{2}}{2} \right).
\end{equation}
The asymptotics for $\kappa_{n}^{2}$ are a bit more delicate to obtain than simply replace $n$ by $n+1$ in \eqref{kappa_n-1}, because the weight $w$ depends also on $n$. By writing explicitly the dependence of the weight $w$ in the parameters $n$, $t_{1}$,...,$t_{m}$, we have the relation
\begin{equation}\label{lol 7}
w\left(\sqrt{\frac{n+1}{n}}x;n,t_{1},...,t_{m}\right) = \left( \frac{n+1}{n} \right)^{\frac{\mathcal{A}}{2}} w\left(x;n+1,\sqrt{\frac{n}{n+1}}t_{1},...,\sqrt{\frac{n}{n+1}}t_{m} \right).
\end{equation}
From \eqref{lol 7} and with a change of variable in the orthogonality conditions \eqref{ortho conditions}, we see that to get large $n$ asymptotics for $\kappa_{n}^{2}$, first we have to replace $n$ by $n+1$ in \eqref{kappa_n-1}, then we multiply the result by $\left( \frac{n}{n+1} \right)^{n + \frac{1+\mathcal{A}}{2}}$ and finally we replace each $t_{k}$ by $\sqrt{\frac{n}{n+1}}t_{k}$. We find
\begin{multline*}
\kappa_{n}^{2} = e^{n}2^{2n+\mathcal{A}}\pi^{-1} \exp\Bigg(-2i\sum_{j=1}^{m}\beta_{j} \arcsin t_{j}\Bigg) \Bigg( 1+\frac{R_{1,21}^{(1)}}{n P^{(\infty)}_{1,21}} - \frac{\mathcal{A}}{2n} + \sum_{j=1}^{m} \frac{it_{j}\beta_{j}}{\sqrt{1-t_{j}^{2}}n} + \bigO (n^{-2+2\beta_{\max}}) \Bigg).
\end{multline*}
By \eqref{How to find the coeff}, \eqref{Y near inf}, \eqref{asymptotics for e^{ng}}, \eqref{P_1^inf} and \eqref{R^{(1)}}, we have as $n \to \infty$
\begin{equation}
\eta_{n} = P_{1,11}^{(\infty)} + \frac{R_{1,11}^{(1)}}{n} + \bigO(n^{-2+2\beta_{\max}}).
\end{equation}
From the same equations, and with increasing effort, we obtain large $n$ asymptotics for $\gamma_{n}$,
\begin{equation}
\gamma_{n} = -\frac{n}{8} + P_{2,11}^{(\infty)} + \frac{1}{n} \left[ P_{1,11}^{(\infty)}R_{1,11}^{(1)} + R_{2,11}^{(1)} + P_{1,21}^{(\infty)}R_{1,12}^{(1)} \right] + \bigO(n^{-2+2\beta_{\max}}).
\end{equation}
Taking into account that $R_{1,11}^{(1)} = \bigO(n^{2\beta_{\max}})$ as $n \to \infty$, we see that
\begin{equation}
\gamma_{n} - \frac{\eta_{n}^{2}}{2} = -\frac{n}{8} + P_{2,11}^{(\infty)} - \frac{(P_{1,11}^{(\infty)})^{2}}{2} + \frac{1}{n} \left[ R_{2,11}^{(1)} + P_{1,21}^{(\infty)}R_{1,12}^{(1)} \right] + \bigO(n^{-2+4\beta_{\max}}), \qquad \mbox{ as } n \to \infty.
\end{equation}
These terms are made explicit using \eqref{P_1^inf}, \eqref{P_2^inf} and \eqref{R^{(1)}}:
\begin{align*}
& P_{2,11}^{(\infty)}-\frac{(P_{1,11}^{(\infty)})^{2}}{2} = \frac{1}{8} - \frac{1}{8}\sum_{j=1}^{m}\alpha_{j}(1-2t_{j}^{2}) + \frac{i}{2}\sum_{j=1}^{m}t_{j}\sqrt{1-t_{j}^{2}}\beta_{j}, \\
& R_{2,11}^{(1)} = \sum_{j=1}^{m} \frac{t_{j}v_{j}(t_{j}+\widetilde{\Lambda}_{I,j})}{4(1-t_{j}^{2})} - \frac{1}{2^{5}} \left( \frac{2}{3} + (\mathcal{A}-\widetilde{\mathcal{B}}_{1})^{2} + (\mathcal{A}+\widetilde{\mathcal{B}}_{-1})^{2} \right),\\
& P_{1,21}^{(\infty)}R_{1,12}^{(1)} = -\sum_{j=1}^{m} \frac{v_{j}(1+\widetilde{\Lambda}_{R,2,j})}{8(1-t_{j}^{2})} + \frac{1}{2^{5}} \left( \frac{2}{3}+2\mathcal{A} +\mathcal{A}^{2}-(1+\mathcal{A})(\widetilde{\mathcal{B}}_{1}-\widetilde{\mathcal{B}}_{-1}) +\frac{\widetilde{\mathcal{B}}_{1}^{2} + \widetilde{\mathcal{B}}_{-1}^{2}}{2} \right).
\end{align*}
We are now able to compute the differential identity \eqref{differential identity}. After some calculations (there is a lot of cancellations), using \eqref{connection Lambdas} (and \eqref{estimates for R with beta} for the $o(1)$ term), we obtain
\begin{multline}\label{explicit first part of the diff identity}
-(n+\mathcal{A}) \partial_{\beta_{k}}\log(\kappa_{n}\kappa_{n-1}) - 2n\partial_{\beta_{k}}\left( \frac{\kappa_{n-1}^{2}}{\kappa_{n}^{2}} \right) + 4n \partial_{\beta_{k}} \left(\gamma_{n}-\frac{\eta_{n}^{2}}{2}\right) = \\
2in \left( \arcsin t_{k} +  t_{k} \sqrt{1-t_{k}^{2}} \right) + 2i\mathcal{A} \arcsin t_{k}  - 2\beta_{k} + \bigO\left(\frac{\log n}{n^{1-4\beta_{\max}}}\right).
\end{multline}
The second part of the differential identity is obtained using \eqref{Y tilde alpha_k neq 0}:
\begin{align}\label{explicit second part of the diff identity}
& \sum_{j=1}^{m} \left[\widetilde{Y}_{22}(t_{j})\partial_{\beta_{k}}Y_{11}(t_{j}) - \widetilde{Y}_{12}(t_{j})\partial_{\beta_{k}}Y_{21}(t_{j}) + Y_{11}(t_{j})\widetilde{Y}_{22}(t_{j})\partial_{\beta_{k}} \log (\kappa_{n}\kappa_{n-1}) \right] = \\
& = -\mathcal{A} \partial_{\beta_{k}}\log D_{\infty} + \sum_{j=1}^{m} \big( \partial_{\beta_{k}}\Phi_{j,11} \Phi_{j,22} - \Phi_{j,12}\partial_{\beta_{k}} \Phi_{j,21} -2\beta_{j} \partial_{\beta_{k}} \log \Lambda_{j} \big) + \bigO\left( \frac{\log n}{n^{1-4\beta_{\max}}} \right). \nonumber
\end{align}
These terms have more explicit forms using \eqref{D inf}, \eqref{def Lambda} and \eqref{def Psi matrix}:
\begin{align}
& -2\sum_{j=1}^{m}\beta_{j}\partial_{\beta_{k}}\log \Lambda_{j} = - 2 \beta_{k} \log(8n(1-t_{k}^{2})^{3/2}) + \sum_{j \neq k} \log  T_{jk}^{2\beta_{j}}, \\
& \sum_{j=1}^{m} \big( \partial_{\beta_{k}}\Phi_{j,11} \Phi_{j,22} - \Phi_{j,12}\partial_{\beta_{k}} \Phi_{j,21}  \big) = -\frac{i\pi}{2} \mathcal{A}_{j}  \\
& \hspace{2cm} +\beta_{k} \partial_{\beta_{k}} \log \frac{\Gamma(1+\frac{\alpha_{k}}{2}+\beta_{k})}{\Gamma(1+\frac{\alpha_{k}}{2}-\beta_{k})} + \frac{\alpha_{k}}{2} \partial_{\beta_{k}} \log \left(\Gamma(1+\frac{\alpha_{k}}{2}+\beta_{k})\Gamma(1+\frac{\alpha_{k}}{2} - \beta_{k}) \right), \nonumber
\end{align}
and $\mathcal{A}_{j}$ is defined in Proposition \ref{prop:beta}. Finally, summing \eqref{explicit first part of the diff identity} and \eqref{explicit second part of the diff identity}, we obtain
\begin{align}\label{Explicit diff identity}
& \partial_{\beta_{k}} \log D_{n}(\vec{\alpha},\vec{\beta},2x^{2},0) = 2in \Big(\arcsin t_{k} + t_{k} \sqrt{1-t_{k}^{2}}\Big) + i \mathcal{A} \arcsin t_{k} - \frac{i\pi}{2}\mathcal{A}_{k}- 2 \beta_{k} +  \sum_{j\neq k } \log  T_{jk}^{2\beta_{j}} \nonumber \\
& \hspace{2cm} -2 \beta_{k} \log \big(8n(1-t_{k}^{2})^{3/2}\big)  + \frac{\alpha_{k}}{2}\partial_{\beta_{k}} \log \left( \Gamma(1+\frac{\alpha_{k}}{2}+\beta_{k})\Gamma(1+\frac{\alpha_{k}}{2}-\beta_{k}) \right) \nonumber \\ 
& \hspace{2cm}  +\beta_{k} \partial_{\beta_{k}} \log \frac{\Gamma(1+\frac{\alpha_{k}}{2}+\beta_{k})}{\Gamma(1+\frac{\alpha_{k}}{2}-\beta_{k})} + \bigO \left( \frac{\log n}{n^{1-4\beta_{\max}}} \right).
\end{align}
For convenience, we denote $F_{1,n}(\vec{\alpha},\vec{\beta})$ for the r.h.s. of \eqref{differential identity}, which can be expressed in terms of $R$ and the parametrices. From Subsection \ref{Subsection: small-norm RH problem}, if $\Omega$ is a compact subset of $\mathcal{P}_{\alpha}\times \mathcal{P}_{\beta}^{(\frac{1}{4})}$, there exists $n_{\star}$ depending only on $\Omega$, such that $F_{1,n}(\vec{\alpha},\vec{\beta})$ exists for \textit{all} $n \geq n_{\star}$ and for \textit{all} $(\vec{\alpha},\vec{\beta}) \in \Omega$, and is analytic for $(\vec{\alpha},\vec{\beta}) \in \Omega$. Furthermore, the large $n$ asymptotics for $F_{1,n}(\vec{\alpha},\vec{\beta})$ given by the r.h.s. of \eqref{Explicit diff identity} are uniform for \textit{all} $(\vec{\alpha},\vec{\beta}) \in \Omega$. Nevertheless, the identity \eqref{Explicit diff identity} itself is valid only for $(\vec{\alpha},\vec{\beta}) \in \Omega \setminus \widetilde{\Omega}_{1}^{(n)}$, where $\widetilde{\Omega}_{1}^{(n)}$ consists of at most a finite number of points (see the beginning of Section \ref{Section:diff identities}). 

\vspace{0.2cm}\hspace{-0.53cm}Now, we show how to extend this differential identity for \textit{all} $(\vec{\alpha},\vec{\beta}) \in \Omega$, following \cite{Krasovsky, ItsKrasovsky, DIK}. First, we use the identity \eqref{Explicit diff identity} with $k = 1$, $\beta_{2} = ... = \beta_{m} = 0$ with $\vec{\alpha}$ fixed. We assume that $(\vec{\alpha},\vec{0}) \in \Omega$, we write $\vec{\beta}_{1} = (\beta_{1},0,...,0)$ and we define
\begin{equation}
H(\beta_{1}) = D_{n}(\vec{\alpha},\vec{\beta}_{1},2x^{2},0)\exp\bigg(-\int_{0}^{\beta_{1}}F_{1,n}(\vec{\alpha},(s_{1},0,...,0))ds_{1}\bigg).
\end{equation}
Equation \eqref{Explicit diff identity} is the statement that $H^{\prime}(\beta_{1}) = 0$ for all $\beta_{1}$ such that $(\vec{\alpha},\vec{\beta}_{1}) \in \Omega\setminus \widetilde{\Omega}_{1}^{(n)}$. Since the large $n$ asymptotics for $F_{1,n}(\vec{\alpha},\vec{\beta})$ given by \eqref{Explicit diff identity} hold uniformly and continuously for all $\beta_{1}$ such that $(\vec{\alpha},\vec{\beta}_{1}) \in \Omega$, $H$ is continuously differentiable for all $\beta_{1}$ such that $(\vec{\alpha},\vec{\beta}_{1}) \in \Omega$ (for $n \geq n_{\star}$). Thus by continuity, since $\widetilde{\Omega}_{1}^{(n)}$ consists of at most a finite number of points, one has in fact that $H^{\prime}(\beta_{1}) = 0$ for all $\beta_{1}$ such that $(\vec{\alpha},\vec{\beta}_{1}) \in \Omega$ (for $n \geq n_{\star}$). Moreover, as $H(0) = D_{n}(\vec{\alpha},\vec{0},2x^{2},0) \neq 0$ (if $n$ is chosen sufficiently large, see \eqref{partition GUE} and \eqref{asymptotics alpha Krasovsky}), the determinant $D_{n}(\vec{\alpha},\vec{\beta}_{1},2x^{2},0)$ is never zero. Hence the identity \eqref{Explicit diff identity} for $k = 1$ holds for all $\beta_{1}$ such that $(\vec{\alpha},\vec{\beta}_{1}) \in \Omega$ (for sufficiently large $n$). Integrating this identity in $\beta_{1}$ gives, as $n \to \infty$
\begin{align}\label{lol 9}
& \log\frac{D_{n}(\vec{\alpha},(\beta_{1},0,...,0),2x^{2},0)}{D_{n}(\vec{\alpha},(0,0,...,0),2x^{2},0)} = 2in \Big(\arcsin t_{1} + t_{1} \sqrt{1-t_{1}^{2}}\Big)\beta_{1} + i \mathcal{A} \beta_{1} \arcsin t_{1} - \frac{i\pi}{2}\mathcal{A}_{1}\beta_{1} \nonumber \\
& \hspace{3cm} - \beta_{1}^{2} \log \big(8n(1-t_{1}^{2})^{3/2}\big) - \beta_{1}^{2} +  \frac{\alpha_{1}}{2}\log \frac{\Gamma(1+\frac{\alpha_{1}}{2}+\beta_{1})\Gamma(1+\frac{\alpha_{1}}{2}-\beta_{1})}{\Gamma(1+\frac{\alpha_{1}}{2})^{2}} \nonumber \\
& \hspace{3cm} + \int_{0}^{\beta_{1}} x \partial_{x} \log \frac{\Gamma(1+\frac{\alpha_{1}}{2}+x)}{\Gamma(1+\frac{\alpha_{1}}{2}-x)}dx + \bigO \left( \frac{\log n}{n^{1-4 \beta_{\max}}} \right).
\end{align}
Since $\Omega$ is arbitrary, \eqref{lol 9} is valid for any $\beta_{1}$ such that $\Re \beta_{1} \in (\tfrac{-1}{4},\tfrac{1}{4})$. 
This equation can be simplified using the known formula (see \cite[formula 5.17.4]{NIST})
\begin{equation}
\int_{0}^{z}\log \Gamma (1+x) dx = \frac{z}{2} \log 2\pi - \frac{z(z+1)}{2} + z \log \Gamma(z+1) - \log G(z+1),
\end{equation}
where $G$ is Barnes' $G$-function. After integrations by parts, one has
\begin{multline}
\int_{0}^{\beta_{1}} x \partial_{x} \log \frac{\Gamma(1+\frac{\alpha_{1}}{2}+x)}{\Gamma(1+\frac{\alpha_{1}}{2}-x)}dx = \beta_{1}^{2} - \frac{\alpha_{1}}{2}\log \frac{\Gamma(1+\frac{\alpha_{1}}{2}+\beta_{1})\Gamma(1+\frac{\alpha_{1}}{2}-\beta_{1})}{\Gamma(1+\frac{\alpha_{1}}{2})^{2}} \\[0.3cm] + \log \frac{G(1+\frac{\alpha_{1}}{2}+\beta_{1})G(1+\frac{\alpha_{1}}{2}-\beta_{1})}{G(1+\frac{\alpha_{1}}{2})^{2}}.
\end{multline}
Therefore \eqref{lol 9} can be rewritten as
\begin{multline}\label{in beta_1}
\log\frac{D_{n}(\alpha,(\beta_{1},0,...,0),2x^{2},0)}{D_{n}(\alpha,(0,0,...,0),2x^{2},0)} = 2in \Big(\arcsin t_{1} + t_{1} \sqrt{1-t_{1}^{2}}\Big)\beta_{1} + i \mathcal{A} \beta_{1} \arcsin t_{1} - \frac{i\pi}{2}\mathcal{A}_{1}\beta_{1} \\
- \beta_{1}^{2} \log \big(8n(1-t_{1}^{2})^{3/2}\big) + \log \frac{G(1+\frac{\alpha_{1}}{2}+\beta_{1})G(1+\frac{\alpha_{1}}{2}-\beta_{1})}{G(1+\frac{\alpha_{1}}{2})^{2}}+ \bigO \left( \frac{\log n}{n^{1-4 \beta_{\max}}}\right).
\end{multline}
Now, we use \eqref{Explicit diff identity} for $k = 2$, where we fix $\beta_{1}$ and $\vec{\alpha}$, and we set $\beta_{3}=...=\beta_{m} = 0$. For convenience, we write $\vec{\beta}_{2} = (\beta_{1},\beta_{2},0,...,0)$. Equation \eqref{Explicit diff identity} is valid for all $\beta_{2}$ such that $(\vec{\alpha},\vec{\beta}_{2}) \in \Omega\setminus \Omega_{1}^{(n)}$. We can extend it to all $\beta_{2}$ such that $(\vec{\alpha},\vec{\beta}_{2}) \in \Omega$ with a similar argument to the one done below equation \eqref{Explicit diff identity}. After an integration in $\beta_{2}$, we obtain a similar formula to \eqref{in beta_1} but with an extra term
\begin{multline*}
\log\frac{D_{n}(\alpha,(\beta_{1},\beta_{2},0,...,0),2x^{2},0)}{D_{n}(\alpha,(\beta_{1},0,...,0),2x^{2},0)} = 2in \Big(\arcsin t_{2} + t_{2} \sqrt{1-t_{2}^{2}}\Big)\beta_{2} + i \mathcal{A} \beta_{2} \arcsin t_{2}  +  \log T_{12}^{2\beta_{1}\beta_{2}} \\
- \frac{i\pi}{2}\mathcal{A}_{2}\beta_{2}- \beta_{2}^{2} \log \big(8n(1-t_{2}^{2})^{3/2}\big) + \log \frac{G(1+\frac{\alpha_{2}}{2}+\beta_{2})G(1+\frac{\alpha_{2}}{2}-\beta_{2})}{G(1+\frac{\alpha_{2}}{2})^{2}}+ \bigO \left( \frac{\log n}{n^{1-4 \beta_{\max}}}\right).
\end{multline*}
Again, by freedom in the choice of $\Omega$, the above expansion is valid for any $\beta_{2}$ such that $\Re \beta_{2} \in (\tfrac{-1}{4},\tfrac{1}{4})$. We can proceed in the same way recursively for each $\beta_{k}$, $k=1,...,m$. After $m$ integrations, it suffices to sum these identities to obtain Proposition \ref{prop:beta}.

\section{Integration in $V$}\label{Section: integrating V}
In this section, we will use the RH analysis done in Section \ref{Section: steepest descent} with $W \equiv 0$ and where $V$ is replaced by $V_{s}$, defined in \eqref{deformation parameter potential} and we will make explicit the dependence of the weight in $s$:
\begin{equation}
w_{s}(x) = e^{-nV_{s}(x)} \omega_{\alpha}(x)\omega_{\beta}(x).
\end{equation} 
In the RH analysis, we also have to replace the Euler-Lagrange constant $\ell$ by $\ell_{s}$ and the function $\psi$ by $\psi_{s}$, see \eqref{d_s and ell_s}. We will also use the differential identity 
\begin{equation}\label{lol 31}
\partial_{s} \log D_{n}(\vec{\alpha},\vec{\beta},V_{s},0) =  \frac{1}{2\pi i}\int_{\mathbb{R}}[Y^{-1}(x)Y^{\prime}(x)]_{21}\partial_{s}w_{s}(x)dx,
\end{equation}
which was obtained in \eqref{diff identity s}, and is valid only when $D_{k}^{(n)}(\vec{\alpha},\vec{\beta},V_{s},W) \neq 0$ for all $k = 1,2,...,n+1$, i.e. for $(\vec{\alpha},\vec{\beta},s) \in \widetilde{\mathcal{P}}_{2}^{(n)}$ (see also the discussion at the beginning of Section \ref{Section:diff identities}). Some calculations in this section and the next one are similar to those done in \cite{BerWebbWong}, in which the authors put great effort in showing all the details, so we will sometimes refer to equations and lemmas in their paper. Our goal is to prove the following.
\begin{proposition}\label{prop:V}
As $n \to \infty$, we have
\begin{align*}
& \log \frac{D_{n}(\vec{\alpha},\vec{\beta},V,0)}{D_{n}(\vec{\alpha},\vec{\beta},2x^{2},0)} = - \frac{n^{2}}{2} \int_{-1}^{1} \sqrt{1-x^{2}}\big(V(x)-2x^{2}\big)\Big( \frac{2}{\pi} + \psi(x)\Big)dx + n \sum_{j=1}^{m} \frac{\alpha_{j}}{2}\big(V(t_{j})-2t_{j}^{2}\big) \nonumber \\
& \hspace{2cm} - \frac{n\mathcal{A}}{2\pi} \int_{-1}^{1} \frac{V(x)-2x^{2}}{\sqrt{1-x^{2}}}dx -2\pi n \sum_{j=1}^{m} i \beta_{j} \int_{t_{j}}^{1} \Big(\psi(x)-\frac{2}{\pi}\Big)\sqrt{1-x^{2}}dx \nonumber \\
& \hspace{2cm} - \sum_{j=1}^{m} \bigg( \beta_{j}^{2} - \frac{\alpha_{j}^{2}}{4} \bigg) \log \left( \frac{\pi}{2}\psi(t_{j}) \right) - \frac{1}{24}\log\left( \frac{\pi^{2}}{4}\psi(1)\psi(-1) \right) + \bigO\big(n^{-1+4\beta_{\max}}\big).
\end{align*}
\end{proposition}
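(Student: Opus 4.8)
The plan is to integrate the differential identity \eqref{lol 31} in $s$ from $s=0$ to $s=1$. Since $V_s=(1-s)2x^2+sV$, we have $\partial_s w_s(x)=-n\big(V(x)-2x^2\big)w_s(x)$, so \eqref{lol 31} takes the form
\begin{equation}
\partial_s\log D_n(\vec{\alpha},\vec{\beta},V_s,0)=\frac{-n}{2\pi i}\int_{\mathbb{R}}[Y^{-1}(x)Y'(x)]_{21}\big(V(x)-2x^2\big)w_s(x)\,dx=:F_{2,n}(\vec{\alpha},\vec{\beta},s).
\end{equation}
First I would note that $[Y^{-1}Y']_{21}$ is continuous across $\mathbb{R}\setminus\{t_1,\dots,t_m\}$: the jump of $Y$ is unipotent and upper-triangular, which leaves the $(2,1)$ entry of $Y^{-1}Y'$ invariant, so the integrand is well defined. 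Combining \eqref{general diff identity 1} and \eqref{general diff identity} identifies $\frac{1}{2\pi i}[Y^{-1}Y']_{21}w_s$ with $\sum_{j=0}^{n-1}p_j^2(x)w_s(x)$, the diagonal of the correlation kernel, whose leading behaviour is the equilibrium density $n\psi_s(x)\sqrt{1-x^2}$. The whole task is then (i) to produce a large-$n$ expansion of $F_{2,n}$ uniform for $s\in[0,1]$ and $(\vec{\alpha},\vec{\beta})\in\Omega$, and (ii) to integrate it in $s$ after controlling the zeros of the Hankel determinants.

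For (i) I would invert $Y\mapsto T\mapsto S\mapsto R$, writing $Y=e^{-\frac{n\ell_s}{2}\sigma_3}R\,P^{(\infty)}e^{ng\sigma_3}e^{\frac{n\ell_s}{2}\sigma_3}$ away from the disks and through $P^{(t_k)}$, $P^{(1)}$, $P^{(-1)}$ inside them, and split $\int_{\mathbb{R}}$ into four ranges. Outside $[-1,1]$ the weight $w_s$ is exponentially small by \eqref{lol 15}--\eqref{lol 16}, so this part is negligible. On the bulk of $(-1,1)$ I would use $R(z)=I+n^{-1}R^{(1)}(z)+\bigO(n^{-2+2\beta_{\max}})$ with $R^{(1)}=\bigO(n^{2\beta_{\max}})$ and the explicit residues of $R^{(1)}$; the $\bigO(n)$ part of the integrand, after the prefactor $-n$ and the elementary identity $\int_0^1\psi_s\,ds=\tfrac12(\tfrac2\pi+\psi)$, yields the $n^2$ term of the proposition. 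The $\bigO(1)$ part of the integrand (the subleading density correction, together with the confluent hypergeometric contributions on the disks $\mathcal{D}_{t_k}$) produces, after $-n$ and $\int_0^1$, the terms of order $n$: the factor $|x-t_j|^{\alpha_j}$ gives $\tfrac{\alpha_j}{2}\big(V(t_j)-2t_j^2\big)$, the Szeg\H{o} function $D_\alpha$ gives the principal-value piece $-\tfrac{n\mathcal{A}}{2\pi}\int_{-1}^1\frac{V-2x^2}{\sqrt{1-x^2}}dx$, and $D_\beta$ gives the $\beta_j$ integral.

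The $\bigO(n^{-1})$ part of the integrand gives the $\bigO(1)$ constants, and the cleanest way to extract them is to observe that they appear as $s$-integrals of total derivatives. The conformal maps $f_{t_k}$ and $f_{\pm1}$ carry the factors $\psi_s(t_k)$ and $\psi_s(\pm1)$, so the relevant $s$-integrands reduce to $\partial_s\log\psi_s(t_j)$ and $\partial_s\log\big(\psi_s(1)\psi_s(-1)\big)$; using $\psi_0\equiv\tfrac2\pi$ and $\psi_1=\psi$ these integrate to $-\big(\beta_j^2-\tfrac{\alpha_j^2}{4}\big)\log\big(\tfrac{\pi}{2}\psi(t_j)\big)$ and $-\tfrac{1}{24}\log\big(\tfrac{\pi^2}{4}\psi(1)\psi(-1)\big)$. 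Tracking the error through $R^{(j)}=\bigO(n^{2\beta_{\max}})$ in the products appearing in the expansion, and then integrating in $s$, gives the stated remainder $\bigO(n^{-1+4\beta_{\max}})$.

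For (ii), which I expect to be the main obstacle, I must handle the fact that \eqref{lol 31} is valid only on $\big(\Omega\times(0,1)\big)\setminus\widetilde{\Omega}_2^{(n)}$, where $\widetilde{\Omega}_2^{(n)}$ is locally finite in the interior but may accumulate at $s=0,1$, since the determinants $D_k^{(n)}(\vec{\alpha},\vec{\beta},V_s,0)$ are not analytic in $s$ there. Fixing $(\vec{\alpha},\vec{\beta})\in\Omega$, I would set $H(s)=D_n(\vec{\alpha},\vec{\beta},V_s,0)\exp\big(-\int_0^sF_{2,n}(\vec{\alpha},\vec{\beta},\sigma)\,d\sigma\big)$. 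Here $F_{2,n}$ exists and is continuous on $[0,1]$ by the RH analysis (uniform in $s$ by Remark \ref{Remark: R}(a)), hence integrable, and $D_n(V_s)$ is continuous on $[0,1]$; thus $H$ is continuous on $[0,1]$ with $H(0)=D_n(\vec{\alpha},\vec{\beta},2x^2,0)\neq0$. On each $(\epsilon,1-\epsilon)$ the set $\widetilde{\Omega}_{2,\epsilon}^{(n)}$ is finite, and off it $\log D_n$ is locally analytic with $H'=0$; continuity then forces $H$ to be constant, so $D_n(V_s)\neq0$ there, and letting $\epsilon\to0$ extends this and the validity of \eqref{lol 31} to all $s\in(0,1)$. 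Since $H$ is constant and equal to $D_n(2x^2)\neq0$, continuity at $s=1$ gives $\log\frac{D_n(\vec{\alpha},\vec{\beta},V,0)}{D_n(\vec{\alpha},\vec{\beta},2x^2,0)}=\int_0^1F_{2,n}(\vec{\alpha},\vec{\beta},\sigma)\,d\sigma$, into which I substitute the uniform expansion from step (i). Besides this endpoint non-analyticity, the genuinely laborious part will be the bookkeeping of the $\bigO(n^{-1})$ correction across all local regions and the verification of the numerous cancellations that leave only the stated constants.
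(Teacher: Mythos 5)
Your overall strategy coincides with the paper's: integrate the identity \eqref{diff identity s} in $s$ from $0$ to $1$, feed in the asymptotics from the RH analysis, and extend the identity past the possible zeros of the determinants; your step (ii), including the treatment of the accumulation of $\widetilde{\Omega}_2^{(n)}$ at $s=0,1$, is essentially the paper's argument. However, step (i) has a genuine gap in the extraction of the constant term. You assert that the relevant $s$-integrands ``reduce to total derivatives $\partial_s\log\psi_s(t_j)$ and $\partial_s\log\big(\psi_s(1)\psi_s(-1)\big)$''. This is false pointwise in $s$: the residue of $J_R^{(1)}$ at $t_k$ contains, besides the non-oscillatory total-derivative part, the terms $\widetilde{\Lambda}_{I,k}$ built from $\Lambda_k^{\pm 2}$, which are of size $n^{\pm 2\Re\beta_k}$ and hence pointwise \emph{larger} than the constants you are trying to compute (and still $\bigO(1)$, not negligible, when $\Re\beta_k=0$). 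These terms disappear only because $\Lambda_k^{2}$ carries the factor $e^{2\pi i n s\int_{t_k}^{1}(\psi-\frac{2}{\pi})\sqrt{1-x^2}\,dx}$, so that after integration by parts in $s$ (the paper's equation \eqref{lol 2}) their contribution to $\int_0^1 I_{3,s}\,ds$ is $\bigO(n^{-1+2|\Re\beta_k|})$. Without this oscillation-in-$s$ argument, ``tracking the error through $R^{(j)}=\bigO(n^{2\beta_{\max}})$ and then integrating in $s$'' only gives a bound $\bigO(n^{2\beta_{\max}})$ on the constant-order part, which swamps both the $\log\big(\frac{\pi}{2}\psi(t_j)\big)$ terms and the claimed remainder $\bigO(n^{-1+4\beta_{\max}})$. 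This mechanism is a central point of the paper's Section 5 and is absent from your proposal.

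A second, structural problem is tied to your choice to expand the real-line integral of the kernel diagonal directly. The paper instead uses the jump relation \eqref{jump relations of Y} to rewrite \eqref{lol 31} as the contour integral $-\frac{1}{2\pi i}\int_{\mathcal{C}}[Y^{-1}Y']_{11}\,\partial_s\log w_s\,dz$ over a closed curve encircling $[-1,1]$ and the lenses (equation \eqref{lol 1}), where $[Y^{-1}Y']_{11}$ splits cleanly into $n g'$, $-\partial_z\log D$, and an $R$-term, all non-oscillatory in $z$. On the real line, by contrast, $\sum_j p_j^2(x)w_s(x)$ has oscillatory-in-$x$ corrections of order $1$ in the bulk; multiplied by the prefactor $-n\big(V(x)-2x^2\big)$, integration by parts in $x$ gains only one factor $n^{-1}$, so these corrections (and the matching boundary terms at $\partial\mathcal{D}_{t_k}$, $\partial\mathcal{D}_{\pm1}$) contribute precisely at the constant order. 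Your attribution of the constants solely to ``the $\bigO(n^{-1})$ part of the integrand'' therefore does not survive in your own framework: either you must prove these $x$-oscillatory contributions cancel, or you should adopt the paper's contour deformation, which eliminates them from the outset.
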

Using the jump relations \eqref{jump relations of Y} of $Y$, the differential identity \eqref{lol 31} becomes
\begin{equation}\label{lol 1}
\partial_{s} \log D_{n}(\vec{\alpha},\vec{\beta},V_{s},0) = \int_{\mathbb{R}\setminus [-1-\epsilon,1+\epsilon]} \hspace{-1.1cm} [Y^{-1}(x)Y^{\prime}(x)]_{21} \partial_{s} w_{s}(x) \frac{dx}{2\pi i} - \frac{1}{2\pi i} \int_{\mathcal{C}} [Y^{-1}(z)Y^{\prime}(z)]_{11} \partial_{s} \log w_{s}(z) dz,
\end{equation}
where $\epsilon >0$ is fixed and $\mathcal{C}$ is a closed curve surrounding $[-1,1]$ and the lenses $\gamma_{+}\cup\gamma_{-}$, is oriented clockwise, and passes through $-1-\epsilon$ and $1+\epsilon$. 

\hspace{-0.5cm}For $z$ outside the lenses, similarly to \eqref{Y near inf}, we have
\begin{equation}\label{Y near inf 2}
Y(z) = e^{-\frac{n\ell_{s}}{2}\sigma_{3}}R(z)P^{(\infty)}(z)e^{ng(z)\sigma_{3}}e^{\frac{n\ell_{s}}{2}\sigma_{3}},
\end{equation}
and thus, by \eqref{var inequality} and \eqref{g+ + g-}, one has
\begin{equation}
[Y^{-1}(x)Y^{\prime}(x)]_{21} \partial_{s} w_{s}(x) = \bigO(e^{-cn}), \qquad \mbox{ as } n \to \infty,
\end{equation}
uniformly for $x \in \mathbb{R}\setminus [-1-\epsilon,1+\epsilon]$ and $s \in [0,1]$ (see Remark \ref{Remark: R} (a)), where $c >0$ is a fixed constant. We can explicitly compute $[Y^{-1}(z)Y^{\prime}(z)]_{11}$ using \eqref{Y near inf 2}. Thus, equation \eqref{lol 1} becomes as $n \to \infty$
\begin{equation}\label{lol 4}
\begin{array}{r c l}
\displaystyle \partial_{s} \log D_{n}(\vec{\alpha},\vec{\beta},V_{s},0)  & = & \displaystyle I_{1,s} + I_{2,s} + I_{3,s} + \bigO(e^{-cn}), \\[0.3cm]
\displaystyle I_{1,s} & = & \displaystyle \frac{-n}{2\pi i} \int_{\mathcal{C}} g^{\prime}(z) \partial_{s} \log w_{s}(z) dz, \\[0.3cm]
\displaystyle I_{2,s} & = & \displaystyle \frac{-1}{2\pi i} \int_{\mathcal{C}} [P^{(\infty)}(z)^{-1}P^{(\infty)}(z)^{\prime}]_{11} \partial_{s} \log w_{s}(z) dz, \\[0.3cm]
\displaystyle I_{3,s} & = & \displaystyle \frac{-1}{2\pi i} \int_{\mathcal{C}} [P^{(\infty)}(z)^{-1}R^{-1}(z)R^{\prime}(z)P^{(\infty)}(z)]_{11} \partial_{s} \log w_{s}(z) dz.
\end{array}
\end{equation}
Since $\partial_{s} \log w(z) = -n \partial_{s} V_{s}(z)$, by \eqref{asymptotics for R}, we have  that 
\begin{equation}
I_{1,s} = \bigO(n^{2}), \qquad I_{2,s} = \bigO(n), \qquad I_{3,s} = \bigO(n^{2\beta_{\max}}), \qquad \mbox{ as } n \to \infty.
\end{equation}
More detailed calculations will later show that we actually have $\int_{0}^{1}I_{3,s}ds = \bigO(1)$ as $n \to \infty$, due to the fact that $I_{3,s}$ is highly oscillatory in $s$ as $n\to\infty$. By \eqref{g+ - g- 2}, a direct calculation shows that
\begin{equation}\label{lol 37}
I_{1,s} = - n^{2} \int_{-1}^{1} \big(V(x)-2x^{2}\big)\psi_{s}(x)\sqrt{1-x^{2}}dx,
\end{equation}
and therefore
\begin{equation}
\int_{0}^{1} I_{1,s} ds = - \frac{n^{2}}{2} \int_{-1}^{1} \sqrt{1-x^{2}}\big(V(x)-2x^{2}\big)\Big( \frac{2}{\pi} + \psi(x)\Big)dx.
\end{equation}
From \eqref{Pinf Region 1}, we have $[P^{(\infty)}(z)^{-1}P^{(\infty)}(z)^{\prime}]_{11} = - \partial_{z} \log D(z)$, and thus $I_{2,s} = I_{2,s,\alpha} + I_{2,s,\beta}$, with
\begin{equation}
I_{2,s,\alpha} = \displaystyle \frac{-n}{2\pi i} \int_{\mathcal{C}} \partial_{z} \log D_{\alpha}(z) \partial_{s}V_{s}(z)dz, \qquad I_{2,s,\beta} = \displaystyle \frac{-n}{2\pi i} \int_{\mathcal{C}} \partial_{z} \log D_{\beta}(z) \partial_{s}V_{s}(z)dz. 
\end{equation}
Note that since $\partial_{s}V_{s}(x) = V(x)-2x^{2}$, $I_{2,s}$ is in fact independent of $s$. We can compute $\partial_{z}\log D(z)$ using the explicit form of \eqref{def D_alpha} and \eqref{def D_beta}. This gives 
\begin{equation}\label{lol 5}
\partial_{z} \log D_{\alpha}(z) = - \frac{\mathcal{A}}{2} \frac{1}{\sqrt{z^{2}-1}} + \sum_{j=1}^{m} \frac{\alpha_{j}}{2} \frac{1}{z-t_{j}}, \qquad \partial_{z} \log D_{\beta}(z) = \sum_{j=1}^{m} \frac{i \beta_{j} \sqrt{1-t_{j}^{2}}}{\sqrt{z^{2}-1}(z-t_{j})}.
\end{equation}
By a contour deformation of $\mathcal{C}$, we obtain
\begin{equation}\label{final expression for I_2,s,alpha}
I_{2,s,\alpha} = n \sum_{j=1}^{m} \frac{\alpha_{j}}{2}\big(V(t_{j})-2t_{j}^{2}\big) - \frac{n\mathcal{A}}{2\pi} \int_{-1}^{1} \frac{V(x)-2x^{2}}{\sqrt{1-x^{2}}}dx,
\end{equation}
and
\begin{equation}
I_{2,s,\beta} = n \sum_{j=1}^{m} \frac{i\beta_{j}}{\pi}\sqrt{1-t_{j}^{2}}  \dashint_{-1}^{1} \frac{V(x)-2x^{2}}{\sqrt{1-x^{2}}(x-t_{j})}dx.
\end{equation}
By \cite[equation (5.17)]{BerWebbWong}, this principal value integral can be a bit simplified:
\begin{equation}\label{final expression for I_2,s,beta}
I_{2,s,\beta} = -2\pi n \sum_{j=1}^{m} i \beta_{j} \int_{t_{j}}^{1} \Big(\psi(x)-\frac{2}{\pi}\Big)\sqrt{1-x^{2}}dx.
\end{equation}
Integrating \eqref{final expression for I_2,s,alpha} and \eqref{final expression for I_2,s,beta} in $s$, we get
\begin{equation}
\begin{array}{r c l}
\displaystyle \int_{0}^{1} I_{2,s} ds & = & \displaystyle n \sum_{j=1}^{m} \frac{\alpha_{j}}{2}\big(V(t_{j})-2t_{j}^{2}\big) - \frac{n\mathcal{A}}{2\pi} \int_{-1}^{1} \frac{V(x)-2x^{2}}{\sqrt{1-x^{2}}}dx \\
& & \displaystyle -2\pi n \sum_{j=1}^{m} i \beta_{j} \int_{t_{j}}^{1} \Big(\psi(x)-\frac{2}{\pi}\Big)\sqrt{1-x^{2}}dx.
\end{array}
\end{equation}
An explicit form of the asymptotics for $I_{3,s}$ requires more work, as it involves $R^{(1)}$. By \eqref{asymptotics for R} and Remark \ref{Remark: R} (a), $I_{3,s}$ can be rewritten as 
\begin{equation}
I_{3,s} = \frac{1}{2\pi i} \int_{\mathcal{C}} [P^{(\infty)}(z)^{-1}R^{(1)}(z)^{\prime}P^{(\infty)}(z)]_{11} \partial_{s}V_{s}(z)dz + \bigO\big(n^{-1+4\beta_{\max}}\big), \qquad \mbox{ as } n \to \infty,
\end{equation}
uniformly for $(\vec{\alpha},\vec{\beta},s) \in \Omega\times [0,1]$, where $\Omega$ is an arbitrary compact subset of $\mathcal{P}_{\alpha}\times \mathcal{P}_{\beta}^{(\frac{1}{4})}$. Using \eqref{Pinf Region 1}, it becomes, as $n \to \infty$
\begin{multline}\label{lol 12}
I_{3,s} = \frac{1}{2\pi i} \int_{\mathcal{C}} \left( \frac{a(z)^{2}+a(z)^{-2}}{4}[R_{11}^{(1)}(z)^{\prime} - R_{22}^{(1)}(z)^{\prime}] + \frac{1}{2}[R_{11}^{(1)}(z)^{\prime}+R_{22}^{(1)}(z)^{\prime}] \right. \\ \left. + i \frac{a(z)^{-2}-a(z)^{2}}{4} [R_{12}^{(1)}(z)^{\prime}D_{\infty}^{-2} + R_{21}^{(1)}(z)^{\prime}D_{\infty}^{2}] \right) \big(V(z)-2z^{2}\big)dz + \bigO\big(n^{-1+4\beta_{\max}}\big).
\end{multline}
From \eqref{R^{(1)}} and below, we have
\begin{align*}
& R_{11}^{(1)\prime}(z) - R_{22}^{(1)\prime}(z) = \sum_{j=1}^{m} \frac{1}{(z-t_{j})^{2}} \frac{v_{j}(-2t_{j}-2\widetilde{\Lambda}_{I,j})}{2\pi \psi_{s}(t_{j}) (1-t_{j}^{2})}+ \frac{1}{(z-1)^{3}} \frac{5}{2^{3}3\pi \psi_{s}(1)}  \\
& \hspace{0cm} +\frac{1}{(z-1)^{2}} \frac{(\mathcal{A}-\widetilde{\mathcal{B}}_{1})^{2} - \frac{1}{2} - \frac{1}{2}\frac{\psi_{s}^{\prime}(1)}{\psi_{s}(1)}}{2^{3}\pi \psi_{s}(1)}+\frac{1}{(z+1)^{2}} \frac{-(\mathcal{A}+\widetilde{\mathcal{B}}_{-1})^{2} + \frac{1}{2} - \frac{1}{2}\frac{\psi_{s}^{\prime}(-1)}{\psi_{s}(-1)}}{2^{3}\pi \psi_{s}(-1)}+ \frac{1}{(z+1)^{3}} \frac{5}{2^{3}3\pi \psi_{s}(-1)}, \nonumber \\[0.3cm]
& R_{11}^{(1)\prime}(z) + R_{22}^{(1)\prime}(z) = 0, \\
& i[R_{12}^{(1)\prime}(z)D_{\infty}^{-2}+R_{21}^{(1)\prime}(z)D_{\infty}^{2}] = \sum_{j=1}^{m} \frac{1}{(z-t_{j})^{2}} \frac{v_{j}(-2+\widetilde{\Lambda}_{R,1,j}-\widetilde{\Lambda}_{R,2,j})}{2\pi \psi_{s}(t_{j}) (1-t_{j}^{2})}+ \frac{1}{(z-1)^{3}} \frac{5}{2^{3}3\pi \psi_{s}(1)} \\
& \hspace{0cm} +\frac{1}{(z-1)^{2}} \frac{(\mathcal{A}-\widetilde{\mathcal{B}}_{1})^{2} + \frac{2}{3} - \frac{1}{2}\frac{\psi_{s}^{\prime}(1)}{\psi_{s}(1)}}{2^{3}\pi \psi_{s}(1)} +\frac{1}{(z+1)^{2}} \frac{(\mathcal{A}+\widetilde{\mathcal{B}}_{-1})^{2} + \frac{2}{3} + \frac{1}{2}\frac{\psi_{s}^{\prime}(-1)}{\psi_{s}(-1)}}{2^{3}\pi \psi_{s}(-1)}+ \frac{1}{(z+1)^{3}} \frac{-5}{2^{3}3\pi \psi_{s}(-1)}.
\end{align*}
Plugging these equations in \eqref{lol 12}, we can split $I_{3,s}$ into $m+2$ integrals and an error term:
\begin{equation}\label{lol 3}
I_{3,s} = \sum_{j=1}^{m} I_{3,s,t_{j}} + I_{3,s,1} + I_{3,s,-1} + \bigO\big(n^{-1+4\beta_{\max}}\big), \qquad \mbox{ as } n \to \infty,
\end{equation}
where
\begin{align}
& I_{3,s,t_{k}} = \frac{-v_{k}}{8\pi^{2} \psi_{s}(t_{k})\sqrt{1-t_{k}^{2}}} \int_{\mathcal{C}} \left[ \frac{a_{+}^{2}(t_{k})}{a^{2}(z)} + \frac{a^{2}(z)}{a_{+}^{2}(t_{k})} + \widetilde{\Lambda}_{I,k} \left( \frac{a_{+}^{2}(t_{k})}{a^{2}(z)} - \frac{a^{2}(z)}{a_{+}^{2}(t_{k})} \right) \right] \frac{\partial_{s}V_{s}(z)}{(z-t_{k})^{2}}dz, \nonumber \\
& I_{3,s,1} = \int_{\mathcal{C}} \Bigg[ \frac{a^{-2}(z)}{4\pi \psi_{s}(1)} \Bigg( \frac{2(\mathcal{A}-\widetilde{\mathcal{B}}_{1})^{2}+\frac{1}{6}-\frac{\psi_{s}^{\prime}(1)}{\psi_{s}(1)}}{2^{3}(z-1)^{2}}+\frac{5}{2^{2}3(z-1)^{3}} \Bigg) + \frac{a^{2}(z)}{4(z-1)^{2}}\frac{-\frac{7}{6}}{2^{3}\pi \psi_{s}(1)}  \Bigg] \partial_{s}V_{s}(z)\frac{dz}{2\pi i}, \nonumber \\
& I_{3,s,-1} \hspace{-0.1cm} = \hspace{-0.1cm} \int_{\mathcal{C}} \hspace{-0.1cm} \Bigg[ \frac{a^{2}(z)}{4\pi \psi_{s}(-1)} \Bigg( \frac{-2(\mathcal{A}+\widetilde{\mathcal{B}}_{-1})^{2}-\frac{1}{6}-\frac{\psi_{s}^{\prime}(-1)}{\psi_{s}(-1)}}{2^{3}(z+1)^{2}}+\frac{5}{2^{2}3(z+1)^{3}} \Bigg) \hspace{-0.1cm} + \hspace{-0.1cm} \frac{a^{-2}(z)}{4(z+1)^{2}}\frac{\frac{7}{6}}{2^{3}\pi \psi_{s}(-1)}  \Bigg] \hspace{-0.07cm} \partial_{s}V_{s}(z)\frac{dz}{2\pi i}. \nonumber
\end{align}
From \cite[equations (5.16), (5.17), (5.22), (5.23)]{BerWebbWong}, we have
\begin{align}
& \int_{\mathcal{C}} \left( \frac{a_{+}^{2}(t_{k})}{a^{2}(z)} + \frac{a^{2}(z)}{a_{+}^{2}(t_{k})} \right)\frac{\partial_{s}V_{s}(z)}{(z-t_{k})^{2}}dz = 8\pi^{2} \sqrt{1-t_{k}^{2}} \Big(\psi(t_{k})- \frac{2}{\pi}\Big), \\
& \int_{\mathcal{C}} \left( \frac{a_{+}^{2}(t_{k})}{a^{2}(z)} - \frac{a^{2}(z)}{a_{+}^{2}(t_{k})} \right)\frac{\partial_{s}V_{s}(z)}{(z-t_{k})^{2}}dz = \frac{8\pi^{2}}{1-t_{k}^{2}} \int_{t_{k}}^{1} \Big(\psi(x)- \frac{2}{\pi}\Big)\sqrt{1-x^{2}}dx.
\end{align}
This allows us to evaluate $I_{3,s,t_{k}}$ more explicitly:
\begin{equation}\label{final expression for I_3,s,t_k}
I_{3,s,t_{k}} = - \frac{v_{k}}{\psi_{s}(t_{k})}\Big( \psi(t_{k})-\frac{2}{\pi}\Big) - \frac{v_{k} \widetilde{\Lambda}_{I,k}}{(1-t_{k}^{2})^{3/2}\psi_{s}(t_{k})}\int_{t_{k}}^{1} \Big( \psi(x)-\frac{2}{\pi}\Big) \sqrt{1-x^{2}}dx.
\end{equation}
Note that $\Lambda_{k}$ (and therefore $\widetilde{\Lambda}_{I,k}$) depends on $s$. Therefore, integrating it in $s$, we see that
\begin{equation}
\int_{0}^{1} I_{3,s,t_{k}}ds = -v_{k}\log \left( \frac{\pi}{2}\psi(t_{k}) \right) - \frac{v_{k}}{(1-t_{k}^{2})^{3/2}}\int_{t_{k}}^{1} \Big( \psi(x)-\frac{2}{\pi}\Big) \sqrt{1-x^{2}}dx \int_{0}^{1} \frac{\widetilde{\Lambda}_{I,k}}{\psi_{s}(t_{k})}ds.
\end{equation}
From \eqref{def Lambda}, one has $\Lambda_{k}^{\pm 2} = \bigO(n^{\pm 2\beta_{k}})$ as $n \to \infty$, but it is also highly oscillatory in $s$. We can write
\begin{equation}
\Lambda_{k}^{2} = g_{k,n}n^{2\beta_{k}}\psi_{s}(t_{k})^{2\beta_{k}} e^{2\pi i n s \int_{t_{k}}^{1} \left( \psi(x)-\frac{2}{\pi}\right) \sqrt{1-x^{2}}dx},
\end{equation}
where $g_{k,n}$ is independent of $s$ and $g_{k,n} = \bigO(1)$ as $n \to \infty$. Thus,
\begin{equation}\label{lol 2}
\int_{t_{k}}^{1} \Big( \psi(x)-\frac{2}{\pi}\Big) \sqrt{1-x^{2}}dx \int_{0}^{1} \frac{\Lambda_{k}^{\pm 2}}{\psi_{s}(t_{k})}ds = \frac{\pm g_{k,n}^{\pm 1}}{n^{1\mp 2\beta_{k}}} \int_{0}^{1} \frac{\partial_{s} \left( e^{\pm 2\pi i n s \int_{t_{k}}^{1} \left( \psi(x)-\frac{2}{\pi}\right) \sqrt{1-x^{2}}dx} \right)}{\psi_{s}(t_{k})^{1\mp 2\beta_{k}}}\frac{ds}{2\pi  i}.
\end{equation}
From integration by parts of the right-hand side of \eqref{lol 2}, it follows that this integral is $\bigO(n^{-1\pm 2\beta_{k}})$ as $n \to \infty$, and thus
\begin{equation}
\int_{0}^{1} I_{3,s,t_{k}}ds = -v_{k}\log \left( \frac{\pi}{2}\psi(t_{k}) \right) + \bigO(n^{-1+2|\Re \beta_{k}|}), \qquad \mbox{ as } n \to \infty.
\end{equation}
We now turn to the computation of $I_{3,s,1}$. From \cite[equations (5.27), (5.28) and (5.29)]{BerWebbWong}, we have $\int_{\mathcal{C}} \frac{a^{-2}(z)}{(z-1)^{2}}\partial_{s} V_{s}(z) dz = 0$ and 
\begin{align*}
& \int_{\mathcal{C}} \frac{a^{-2}(z)}{(z-1)^{3}}\partial_{s} V_{s}(z) dz = -\frac{8\pi^{2} i}{3}\Big( \psi(1)-\frac{2}{\pi}\Big), \quad \int_{\mathcal{C}} \frac{a^{2}(z)}{(z-1)^{2}}\partial_{s} V_{s}(z) dz = -\frac{16\pi^{2} i}{3}\Big( \psi(1)-\frac{2}{\pi}\Big).
\end{align*}
Therefore, we obtain $I_{3,s,1} = \frac{-\left( \psi(1)-\frac{2}{\pi} \right)}{24\psi_{s}(1)}$. The computation of $I_{3,s,-1}$ is similar, and gives $I_{3,s,-1} = \frac{-\left( \psi(-1)-\frac{2}{\pi} \right)}{24\psi_{s}(-1)}$. By integrating it in $s$ from $0$ to $1$, it gives
\begin{equation}\label{integration of I_3,s,1}
\int_{0}^{1} I_{3,s,1} ds = -\frac{1}{24}\log \left( \frac{\pi}{2}\psi(1) \right).
\end{equation}
The computation of $I_{3,s,-1}$ is similar, and gives $\int_{0}^{1} I_{3,s,-1} ds = -\frac{1}{24}\log \left( \frac{\pi}{2}\psi(-1) \right)$.
Since the $\bigO$ term in \eqref{lol 3} is uniform in $s\in [0,1]$ (see Remark \ref{Remark: R} (a)), we obtain, as $n \to \infty$
\begin{equation}
\int_{0}^{1} I_{3,s}ds = - \sum_{j=1}^{m} v_{j} \log \left( \frac{\pi}{2}\psi(t_{j}) \right) - \frac{1}{24}\log\left( \frac{\pi^{2}}{4}\psi(1)\psi(-1) \right) + \bigO(n^{-1+4\beta_{\max}}).
\end{equation}
From the above calculations, as $n \to \infty$, we have
\begin{equation}\label{lol 38}
\partial_{s}\log D_{n}(\vec{\alpha},\vec{\beta},V_{s},0) = I_{1,s} + I_{2,s,\alpha} + I_{2,s,\beta} + \sum_{j=1}^{m} I_{3,s,t_{j}} + I_{3,s,1} + I_{3,s,-1} + \bigO\big(n^{-1+4\beta_{\max}}\big),
\end{equation}
where the quantities $I_{1,s}$, $I_{2,s,\alpha}$,..., as well as the integrals $\int_{0}^{1}I_{1,s}ds$, $\int_{0}^{1}I_{2,s,\alpha}ds$,... have already been computed explicitly. For convenience, we denote $F_{2,n}(\vec{\alpha},\vec{\beta},s)$ for the r.h.s. of \eqref{lol 1}, which can be expressed in terms of $R$ and the parametrices. We recall that (see the beginning of Section \ref{Section:diff identities} and Subsection \ref{Subsection: small-norm RH problem}), if $\Omega$ is a compact subset of $\mathcal{P}_{\alpha}\times \mathcal{P}_{\beta}^{(\frac{1}{4})}$, there exists $n_{\star}$ depending only on $\Omega$, such that $F_{2,n}(\vec{\alpha},\vec{\beta},s)$ exists for \textit{all} $n \geq n_{\star}$ and for \textit{all} $(\vec{\alpha},\vec{\beta},s) \in \Omega\times[0,1]$, and is analytic for $s \in (0,1)$. Furthermore, the large $n$ asymptotics for $F_{2,n}(\vec{\alpha},\vec{\beta},s)$ given by the r.h.s. of \eqref{lol 38} is uniform for \textit{all} $(\vec{\alpha},\vec{\beta},s) \in \Omega\times[0,1]$. Nevertheless, the identity \eqref{lol 38} itself is valid only for $(\vec{\alpha},\vec{\beta},s) \in (\Omega \times (0,1)) \setminus \widetilde{\Omega}_{2}^{(n)}$, where $\widetilde{\Omega}_{2}^{(n)}$ is such that, for any $\epsilon \in (0,\tfrac{1}{2})$, $\widetilde{\Omega}_{2,\epsilon}^{(n)} = \{ (\vec{\alpha},\vec{\beta},s)\in \Omega_{2}^{(n)}:s \in (\epsilon,1-\epsilon) \}$ consists of at most a finite number of points. 

\vspace{0.2cm}\hspace{-0.55cm}Now, we show how to extend this differential identity for \textit{all} $(\vec{\alpha},\vec{\beta},s) \in \Omega\times (0,1)$, by adapting slightly the argument presented in Section \ref{Section: beta}, below equation \eqref{Explicit diff identity}. We fix $\vec{\alpha}$ and $\vec{\beta}$ such that $(\vec{\alpha},\vec{\beta})\in \Omega$, and we define $H(s) = D_{n}(\vec{\alpha},\vec{\beta},V_{s},0)\exp(-\int_{0}^{s}F_{2,n}(\vec{\alpha},\vec{\beta},\widetilde{s})d\widetilde{s})$. The equation \eqref{lol 38} is the statement that $H^{\prime}(s) = 0$ for all $s \in (0,1)$ such that $(\vec{\alpha},\vec{\beta},s) \in (\Omega\times(0,1))\setminus \widetilde{\Omega}_{2}^{(n)}$. Since the large $n$ asymptotics of $F_{2,n}(\vec{\alpha},\vec{\beta},s)$ given by the r.h.s. of \eqref{lol 38} hold uniformly for $s \in [0,1]$ and continuously (even analytically) for $s \in (0,1)$, $H$ is continuously differentiable for all $s\in (0,1)$ (for $n \geq n_{\star}$). Thus, by continuity, one has $H^{\prime}(s) = 0$ for all $s \in (\epsilon,1-\epsilon)$. Since $\epsilon > 0$ can be chosen arbitrarily small, one has in fact $H^{\prime}(s) = 0$ for all $s \in (0,1)$ (for $n \geq n_{\star}$). Also, since $D_{n}(\vec{\alpha},\vec{\beta},V_{s},0)$ is a continuous function of $s \in [0,1]$ (this can be proved by applying Lebesgue's dominated convergence theorem on the associated moments $w_{j}(\vec{\alpha},\vec{\beta},V_{s},0)$), $H(s)$ is continuous (and thus constant) for $s \in [0,1]$. Moreover, as $H(0) = D_{n}(\vec{\alpha},\vec{\beta},2x^{2},0) \neq 0$ (if $n$ is chosen sufficiently large, see \eqref{partition GUE}, \eqref{asymptotics alpha Krasovsky} and Proposition \ref{prop:beta}), the determinant $D_{n}(\vec{\alpha},\vec{\beta},V_{s},0)$ is never zero. Hence, the identity \eqref{lol 38} holds for \textit{all} $s \in (0,1)$. By integrating \eqref{lol 38} in $s$ from $0$ to $1$, and by using the above calculations for the quantities $\int_{0}^{1}I_{1,s}ds$,..., this finishes the proof of Proposition \ref{prop:V} (since $\Omega$ is arbitrary).

\section{Integration in $W$}\label{Section: Tau}
In this section, we will use the RH analysis done in Section \ref{Section: steepest descent}, where the potential $V$ is one-cut regular and independent of any parameter, and where $W$ is replaced by $W_{t}$ defined in \eqref{deformation parameter Tau}. The weight depends on $t$ and is denoted by
\begin{equation}\label{lol 13}
w_{t}(x) = e^{-nV(x)}e^{W_{t}(x)}\omega(x).
\end{equation}
We will use the differential identity 
\begin{equation}\label{lol 41}
\partial_{t} \log D_{n}(\vec{\alpha},\vec{\beta},V,W_{t}) =  \frac{1}{2\pi i}\int_{\mathbb{R}}[Y^{-1}(x)Y^{\prime}(x)]_{21}\partial_{t}w_{t}(x)dx,
\end{equation}
which was obtained in \eqref{diff identity t}, and which is valid only for $(\vec{\alpha},\vec{\beta},t) \in \widetilde{\mathcal{P}}_{3}^{(n)}$ (see the discussion at the beginning of Section \ref{Section:diff identities}). The main result of this section is the following.
\begin{proposition} As $n \to \infty$, 
\begin{align*}
& \log \frac{D_{n}(\vec{\alpha},\vec{\beta},V,W)}{D_{n}(\vec{\alpha},\vec{\beta},V,0)} = n\int_{-1}^{1}\psi(x)\sqrt{1-x^{2}}W(x)dx -\frac{1}{4\pi^{2}}\int_{-1}^{1}  \frac{W(y)}{\sqrt{1-y^{2}}} \bigg(\dashint_{-1}^{1} \frac{W^{\prime}(x) \sqrt{1-x^{2}}}{x-y}dx\bigg)dy \\
&  + \frac{\mathcal{A}}{2\pi}\int_{-1}^{1} \frac{W(x)}{\sqrt{1-x^{2}}}dx - \sum_{j=1}^{m} \frac{\alpha_{j}}{2}W(t_{j}) +  \sum_{j=1}^{m} \frac{i \beta_{j}}{\pi} \sqrt{1-t_{j}^{2}} \dashint_{-1}^{1} \frac{W(x)}{\sqrt{1-x^{2}}(t_{j}-x)}dx + \bigO\big(n^{-1+2\beta_{\max}}\big).
\end{align*}
\end{proposition}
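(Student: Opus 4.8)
The plan is to mirror the three-part strategy of Sections~\ref{Section: beta} and~\ref{Section: integrating V}: start from the differential identity \eqref{lol 41}, feed in the output of the steepest descent analysis through the substitution $Y(z) = e^{-\frac{n\ell}{2}\sigma_{3}}R(z)P^{(\infty)}(z)e^{ng(z)\sigma_{3}}e^{\frac{n\ell}{2}\sigma_{3}}$, and finally integrate in $t$ from $0$ to $1$. Since $\partial_{t}\log w_{t}(x) = \partial_{t}W_{t}(x)$ is analytic in a fixed neighbourhood of $[-1,1]$ independent of $t$ (see \eqref{deformation parameter Tau}), the same contour manipulation based on \eqref{jump relations of Y} that produced \eqref{lol 1} lets me rewrite \eqref{lol 41}, up to an $\bigO(e^{-cn})$ error, as $\frac{-1}{2\pi i}\int_{\mathcal{C}}[Y^{-1}(z)Y'(z)]_{11}\partial_{t}W_{t}(z)\,dz$, with $\mathcal{C}$ the clockwise loop around $[-1,1]$ used there. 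Expanding $[Y^{-1}Y']_{11}$ with the substitution splits the right-hand side, exactly as in \eqref{lol 4}, into three pieces $I_{1,t}$ (the $g'$-term), $I_{2,t}$ (the $P^{(\infty)}$-term) and $I_{3,t}$ (the $R$-term). The crucial simplification relative to Section~\ref{Section: integrating V} is that $\partial_{t}\log w_{t}$ is of order $1$ rather than order $n$, so the orders of the three pieces drop correspondingly.

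First the leading term and the error. The piece $I_{1,t} = \frac{-n}{2\pi i}\int_{\mathcal{C}}g'(z)\partial_{t}W_{t}(z)\,dz$ collapses onto the cut using \eqref{g+ - g- 2}, giving $I_{1,t} = n\int_{-1}^{1}\psi(x)\sqrt{1-x^{2}}\,\partial_{t}W_{t}(x)\,dx$; since $\int_{0}^{1}\partial_{t}W_{t}\,dt = W_{1}-W_{0} = W$, integration in $t$ yields the linear-in-$n$ term $n\int_{-1}^{1}\psi\sqrt{1-x^{2}}W\,dx$. For the remainder, by \eqref{asymptotics for R} and Remark~\ref{Remark: R}(b) one has $R^{-1}R' = \bigO(n^{-1+2\beta_{\max}})$ uniformly on $\mathcal{C}$ and in $t\in[0,1]$, so $I_{3,t} = \bigO(n^{-1+2\beta_{\max}})$ and contributes only the stated error after integration. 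Hence the entire constant term must come from $I_{2,t}$.

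Now the constant term. Using \eqref{Pinf Region 1}, $[P^{(\infty)-1}P^{(\infty)\prime}]_{11} = -\partial_{z}\log D(z)$ with $D = D_{W_{t}}D_{\alpha}D_{\beta}$, so $I_{2,t} = \frac{1}{2\pi i}\int_{\mathcal{C}}\partial_{z}\log D(z)\,\partial_{t}W_{t}(z)\,dz = I_{2,t,W}+I_{2,t,\alpha}+I_{2,t,\beta}$. For $I_{2,t,\alpha}$ and $I_{2,t,\beta}$ I insert the explicit forms \eqref{lol 5}; a residue at each $t_{j}$ together with a collapse of $\mathcal{C}$ onto the branch cut of $\sqrt{z^{2}-1}$ gives $I_{2,t,\alpha} = -\sum_{j}\frac{\alpha_{j}}{2}\partial_{t}W_{t}(t_{j}) + \frac{\mathcal{A}}{2\pi}\int_{-1}^{1}\frac{\partial_{t}W_{t}(x)}{\sqrt{1-x^{2}}}dx$ and $I_{2,t,\beta} = \sum_{j}\frac{i\beta_{j}}{\pi}\sqrt{1-t_{j}^{2}}\dashint_{-1}^{1}\frac{\partial_{t}W_{t}(x)}{\sqrt{1-x^{2}}(t_{j}-x)}dx$, which integrate in $t$ (replacing $\partial_{t}W_{t}$ by $W$) into the three non-quadratic constant terms of the Proposition. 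The term $I_{2,t,W}$ is the delicate one: since $\partial_{z}\log D_{W_{t}}$ depends linearly on $W_{t}$ through the Szeg\H{o} function \eqref{def D_T}, $I_{2,t,W} = B(W_{t},\partial_{t}W_{t})$ for the bilinear form $B(f,g) = \frac{1}{2\pi i}\int_{\mathcal{C}}\big(\partial_{z}\log D_{f}(z)\big)g(z)\,dz$. The key observation is that $B$ is symmetric, whence $I_{2,t,W} = \tfrac12\partial_{t}B(W_{t},W_{t})$ and $\int_{0}^{1}I_{2,t,W}\,dt = \tfrac12\big(B(W,W)-B(0,0)\big) = \tfrac12 B(W,W)$. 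Reducing $B(W,W) = \frac{1}{2\pi i}\int_{\mathcal{C}}(\partial_{z}\log D_{W})W\,dz$ to the cut and integrating by parts produces exactly the quadratic term $-\frac{1}{4\pi^{2}}\int_{-1}^{1}\frac{W(y)}{\sqrt{1-y^{2}}}\big(\dashint_{-1}^{1}\frac{W'(x)\sqrt{1-x^{2}}}{x-y}dx\big)dy$.

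It then remains to justify the identity on all of $\Omega\times[0,1]$: since the $D_{k}^{(n)}(\vec{\alpha},\vec{\beta},V,W_{t})$ are analytic in $t$ (property (c) of Section~\ref{Section:diff identities}), the exceptional set $\widetilde{\Omega}_{3}^{(n)}$ is finite and the identity extends by the continuity argument used below \eqref{Explicit diff identity}; summing $\int_{0}^{1}(I_{1,t}+I_{2,t}+I_{3,t})\,dt$ gives the Proposition, uniformly in the parameters as in Theorem~\ref{Main theorem}. I expect the symmetrization of $B$ for the quadratic term $I_{2,t,W}$ to be the main obstacle: one must verify the symmetry of the bilinear Szeg\H{o} form and carefully control the principal-value integrals arising from collapsing $\mathcal{C}$ onto the branch cut, which is precisely the computation underlying the classical Szeg\H{o} constant. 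The deformation \eqref{deformation parameter Tau} is designed so that $W_{0}=0$ and $W_{1}=W$, which is what makes the telescoping $\int_{0}^{1}\tfrac12\partial_{t}B(W_{t},W_{t})\,dt = \tfrac12 B(W,W)$ work irrespective of the nonlinear $t$-dependence of $W_{t}$.
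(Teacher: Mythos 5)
Your proposal is correct and follows the same skeleton as the paper's proof: the differential identity \eqref{lol 41}, the contour deformation giving the decomposition $I_{1,t}+I_{2,t}+I_{3,t}$ (the paper's \eqref{lol 25}), the evaluation of $I_{1,t}$ via \eqref{g+ - g- 2} and the telescoping $\int_{0}^{1}\partial_{t}W_{t}\,dt=W$, the splitting $I_{2,t}=I_{2,t,W}+I_{2,t,\alpha}+I_{2,t,\beta}$ with \eqref{lol 5}, the bound $\int_{0}^{1}I_{3,t}\,dt=\bigO(n^{-1+2\beta_{\max}})$ from \eqref{asymptotics for R} and Remark \ref{Remark: R}(b), and the extension across the exceptional set $\widetilde{\Omega}_{3}^{(n)}$. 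The one place where you genuinely diverge is the treatment of the quadratic term: the paper disposes of $\int_{0}^{1}I_{2,t,W}\,dt$ (and of $\int_{0}^{1}I_{2,t,\alpha}\,dt$) by citing \cite[equation (5.14) and Lemma 5.4]{BerWebbWong}, whereas you derive it inline by noting that $B(f,g)=\frac{1}{2\pi i}\int_{\mathcal{C}}\bigl(\partial_{z}\log D_{f}(z)\bigr)g(z)\,dz$ is a \emph{symmetric} bilinear form, so that $I_{2,t,W}=B(W_{t},\partial_{t}W_{t})=\tfrac{1}{2}\partial_{t}B(W_{t},W_{t})$ telescopes to $\tfrac{1}{2}B(W,W)$ irrespective of the nonlinear $t$-dependence of $W_{t}$. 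This is sound: collapsing $\mathcal{C}$ onto the cut with Sokhotski--Plemelj and integrating by parts gives $B(f,g)=-\frac{1}{2\pi^{2}}\int_{-1}^{1}g'(y)\sqrt{1-y^{2}}\,\bigl(\dashint_{-1}^{1}\frac{f(x)\,dx}{\sqrt{1-x^{2}}(y-x)}\bigr)dy$, which diagonalizes in Chebyshev coefficients as $\tfrac{1}{4}\sum_{k\geq1}k f_{k}g_{k}$, hence is symmetric and reproduces exactly the double principal-value integral in the statement. In effect you have reconstructed the content of the cited lemma of Berestycki--Webb--Wong; your route buys self-containedness at the cost of reproving it, while both arguments rest on the same structural point you identify, namely that the deformation \eqref{deformation parameter Tau} satisfies $W_{0}=0$ and $W_{1}=W$, so only endpoint values survive once the integrand is an exact $t$-derivative.
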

By a very similar calculation to the one done at the beginning of section \ref{Section: integrating V}, we obtain exactly the same equation as \eqref{lol 4}, except that $s$ is replaced by $t$ and $\partial_{s}\log w_{s}(z)$ is replaced by $\partial_{t} \log w_{t}(z) = \partial_{t} W_{t}(z)$:
\begin{equation}\label{lol 25}
\begin{array}{r c l}
\displaystyle \partial_{t} \log D_{n}(\vec{\alpha},\vec{\beta},V,W_{t}) & = & \displaystyle I_{1,t} + I_{2,t} + I_{3,t} + \bigO(e^{-cn}), \\[0.3cm]
\displaystyle I_{1,t} & = & \displaystyle \frac{-n}{2\pi i} \int_{\mathcal{C}} g^{\prime}(z) \partial_{t} \log w_{t}(z) dz, \\[0.3cm]
\displaystyle I_{2,t} & = & \displaystyle \frac{-1}{2\pi i} \int_{\mathcal{C}} [P^{(\infty)}(z)^{-1}P^{(\infty)}(z)^{\prime}]_{11} \partial_{t} \log w_{t}(z) dz, \\[0.3cm]
\displaystyle I_{3,t} & = & \displaystyle \frac{-1}{2\pi i} \int_{\mathcal{C}} [P^{(\infty)}(z)^{-1}R^{-1}(z)R^{\prime}(z)P^{(\infty)}(z)]_{11} \partial_{t} \log w_{t}(z) dz.
\end{array}
\end{equation}
We thus have, as $n \to \infty$,
\begin{equation}
I_{1,t} = \bigO(n), \qquad I_{2,t} = \bigO(1), \qquad I_{3,t} = \bigO(n^{-1+2\beta_{\max}}),
\end{equation}
and therefore
\begin{equation}\label{lol 42}
\partial_{t} \log D_{n}(\vec{\alpha},\vec{\beta},V,W_{t}) = I_{1,t} + I_{2,t} + \bigO(n^{-1+2\beta_{\max}}).
\end{equation}
Let us denote $F_{3,n}(\vec{\alpha},\vec{\beta},t)$ for the r.h.s. of \eqref{lol 41}. We recall that (see the beginning of Section \ref{Section:diff identities} and Subsection \ref{Subsection: small-norm RH problem}), if $\Omega$ is a compact subset of $\mathcal{P}_{\alpha}\times \mathcal{P}_{\beta}^{(\frac{1}{4})}$, there exists $n_{\star} = n_{\star}(\Omega)$ such that $F_{3,n}(\vec{\alpha},\vec{\beta},t)$ exists for \textit{all} $(\vec{\alpha},\vec{\beta},t) \in \Omega\times[0,1]$. Furthermore, the large $n$ asymptotics of $F_{3,n}(\vec{\alpha},\vec{\beta},t)$ given by the r.h.s. of \eqref{lol 42} (more explicit expressions for $I_{1,t}$ and $I_{2,t}$ will be computed below) are valid uniformly for \textit{all} $(\vec{\alpha},\vec{\beta},t) \in \Omega\times [0,1]$, but the identity \eqref{lol 42} itself is valid only for $(\vec{\alpha},\vec{\beta},t) \in (\Omega\times[0,1])\setminus \widetilde{\Omega}_{3}^{(n)}$, where $\widetilde{\Omega}_{3}^{(n)}$ consists of at most a finite number of points. For sufficiently large $n$, we can extend the identity \eqref{lol 42} from $(\Omega\times[0,1])\setminus \widetilde{\Omega}_{3}^{(n)}$ to $\Omega\times[0,1]$. The argument is similar to the one presented in Section \ref{Section: beta}, below equation \eqref{Explicit diff identity}, and we omit the discussion here. In the rest of this section we compute more explicitly $I_{1,t}$ and $I_{2,t}$. From \eqref{g+ - g- 2}, we have
\begin{equation}
I_{1,t} = n\int_{-1}^{1}\psi(x)\sqrt{1-x^{2}}\partial_{t} W_{t}(x)dx,
\end{equation}
and therefore
\begin{equation}
\int_{0}^{1} I_{1,t} dt = n\int_{-1}^{1}\psi(x)\sqrt{1-x^{2}}W(x)dx.
\end{equation}
Also, from \eqref{Pinf Region 1}, we have $[P^{(\infty)}(z)^{-1}P^{(\infty)}(z)^{\prime}]_{11} = - \partial_{z} \log D(z)$, and thus
\begin{align*}
& I_{2,t} = I_{2,t,W} + I_{2,t,\alpha} + I_{2,t,\beta}, \hspace{0.3cm} \qquad \mbox{ where } \qquad I_{2,t,W} = \frac{1}{2\pi i} \int_{\mathcal{C}} \partial_{z} \log D_{W_{t}}(z)\partial_{t}W_{t}(z)dz, \\
& I_{2,t,\alpha} = \frac{1}{2\pi i} \int_{\mathcal{C}} \partial_{z} \log D_{\alpha}(z)\partial_{t}W_{t}(z)dz, \qquad  \qquad  I_{2,t,\beta} = \frac{1}{2\pi i} \int_{\mathcal{C}} \partial_{z} \log D_{\beta}(z)\partial_{t}W_{t}(z)dz.
\end{align*}
From \cite[equation (5.14) and Lemma 5.4]{BerWebbWong}, we have
\begin{align}
& \int_{0}^{1} I_{2,t,W} dt = -\frac{1}{4\pi^{2}}\int_{-1}^{1}  \frac{W(y)}{\sqrt{1-y^{2}}} \bigg(\dashint_{-1}^{1} \frac{W^{\prime}(x) \sqrt{1-x^{2}}}{x-y}dx\bigg)dy, \\
& \int_{0}^{1} I_{2,t,\alpha} dt = \frac{\mathcal{A}}{2\pi}\int_{-1}^{1} \frac{W(x)}{\sqrt{1-x^{2}}}dx - \sum_{j=1}^{m} \frac{\alpha_{j}}{2}W(t_{j}).
\end{align}
What remains to be done is to evaluate more explicitly $I_{2,t,\beta}$ and integrate it in $t$. This can be achieved from \eqref{lol 5} and a contour deformation of $\mathcal{C}$, and gives
\begin{equation}
\int_{0}^{1} I_{2,t,\beta} dt = \sum_{j=1}^{m} \frac{i \beta_{j}}{\pi} \sqrt{1-t_{j}^{2}} \dashint_{-1}^{1} \frac{W(x)}{\sqrt{1-x^{2}}(t_{j}-x)}dx.
\end{equation}
Since the $\bigO$ terms in \eqref{asymptotics for R} are uniform for $t \in [0,1]$ (see Remark \ref{Remark: R} (b)), we have $\int_{0}^{1} I_{3,t} dt = \bigO\big(n^{-1+2\beta_{\max}}\big)$ as $n \to \infty$, which finishes the proof.

\section{Appendix}\label{Section:Appendix}
In this section we recall some well-known model RH problems: 1) the Airy model RH problem, whose solution is denoted $\Phi_{\mathrm{Ai}}$ and 2) the confluent hypergeometric model RH problem, whose solution is denoted $\Phi_{\mathrm{HG}}(z) = \Phi_{\mathrm{HG}}(z;\alpha,\beta)$. We will only consider the cases when $\alpha$ and $\beta$ are such that $\Re \alpha >-1$ and $\Re \beta \in (-\frac{1}{4},\frac{1}{4})$.

\subsection{Airy model RH problem}
\begin{itemize}
\item[(a)] $\Phi_{\mathrm{Ai}} : \mathbb{C} \setminus \Sigma_{A} \rightarrow \mathbb{C}^{2 \times 2}$ is analytic, and $\Sigma_{A}$ is shown in Figure \ref{figAiry}.
\item[(b)] $\Phi_{\mathrm{Ai}}$ has the jump relations
\begin{equation}\label{jumps P3}
\begin{array}{l l}
\Phi_{\mathrm{Ai},+}(z) = \Phi_{\mathrm{Ai},-}(z) \begin{pmatrix}
0 & 1 \\ -1 & 0
\end{pmatrix}, & \mbox{ on } \mathbb{R}^{-}, \\

\Phi_{\mathrm{Ai},+}(z) = \Phi_{\mathrm{Ai},-}(z) \begin{pmatrix}
 1 & 1 \\
 0 & 1
\end{pmatrix}, & \mbox{ on } \mathbb{R}^{+}, \\

\Phi_{\mathrm{Ai},+}(z) = \Phi_{\mathrm{Ai},-}(z) \begin{pmatrix}
 1 & 0  \\ 1 & 1
\end{pmatrix}, & \mbox{ on } e^{ \frac{2\pi i}{3} }  \mathbb{R}^{+} , \\

\Phi_{\mathrm{Ai},+}(z) = \Phi_{\mathrm{Ai},-}(z) \begin{pmatrix}
 1 & 0  \\ 1 & 1
\end{pmatrix}, & \mbox{ on }e^{ -\frac{2\pi i}{3} }\mathbb{R}^{+} . \\
\end{array}
\end{equation}
\item[(c)] As $z \to \infty$, $z \notin \Sigma_{A}$, we have
\begin{equation}\label{Asymptotics Airy}
\Phi_{\mathrm{Ai}}(z) = z^{-\frac{\sigma_{3}}{4}}N \left( I + \sum_{k=1}^{\infty} \frac{\Phi_{\mathrm{Ai,k}}}{z^{3k/2}} \right) e^{-\frac{2}{3}z^{3/2}\sigma_{3}},
\end{equation}
where $N = \frac{1}{\sqrt{2}}\begin{pmatrix}
1 & i \\ i & 1
\end{pmatrix}$ and $\Phi_{\mathrm{Ai,1}} = \frac{1}{8}\begin{pmatrix}
\frac{1}{6} & i \\ i & -\frac{1}{6}
\end{pmatrix}$.

As $z \to 0$, we have
\begin{equation}
\Phi_{\mathrm{Ai}}(z) = \bigO(1).
\end{equation} 
\end{itemize}
The Airy model RH problem was introduced and solved in \cite{DKMVZ1} (see also \cite[equation (7.30)]{DKMVZ1}, where explicit forms for the constant matrices $\Phi_{\mathrm{Ai,k}}$ can be found). We have
\begin{figure}[t]
    \begin{center}
    \setlength{\unitlength}{1truemm}
    \begin{picture}(100,55)(-5,10)
        \put(50,40){\line(1,0){30}}
        \put(50,40){\line(-1,0){30}}
        \put(50,39.8){\thicklines\circle*{1.2}}
        \put(50,40){\line(-0.5,0.866){15}}
        \put(50,40){\line(-0.5,-0.866){15}}
        \qbezier(53,40)(52,43)(48.5,42.598)
        \put(53,43){$\frac{2\pi}{3}$}
        \put(50.3,36.8){$0$}
        \put(65,39.9){\thicklines\vector(1,0){.0001}}
        \put(35,39.9){\thicklines\vector(1,0){.0001}}
        \put(41,55.588){\thicklines\vector(0.5,-0.866){.0001}}
        \put(41,24.412){\thicklines\vector(0.5,0.866){.0001}}
    \end{picture}
    \caption{\label{figAiry}The jump contour $\Sigma_{A}$ for $\Phi_{\mathrm{Ai}}$.}
\end{center}
\end{figure}
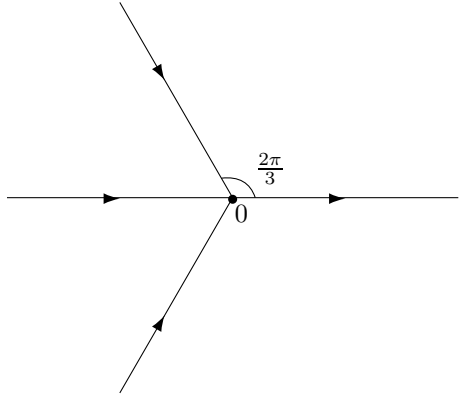
\begin{equation}
\Phi_{\mathrm{Ai}}(z) := M_{A} \times \left\{ \begin{array}{l l}
\begin{pmatrix}
\mbox{Ai}(z) & \mbox{Ai}(\omega^{2}z) \\
\mbox{Ai}^{\prime}(z) & \omega^{2}\mbox{Ai}^{\prime}(\omega^{2}z)
\end{pmatrix}e^{-\frac{\pi i}{6}\sigma_{3}}, & \mbox{for } 0 < \arg z < \frac{2\pi}{3}, \\
\begin{pmatrix}
\mbox{Ai}(z) & \mbox{Ai}(\omega^{2}z) \\
\mbox{Ai}^{\prime}(z) & \omega^{2}\mbox{Ai}^{\prime}(\omega^{2}z)
\end{pmatrix}e^{-\frac{\pi i}{6}\sigma_{3}}\begin{pmatrix}
1 & 0 \\ -1 & 1
\end{pmatrix}, & \mbox{for } \frac{2\pi}{3} < \arg z < \pi, \\
\begin{pmatrix}
\mbox{Ai}(z) & - \omega^{2}\mbox{Ai}(\omega z) \\
\mbox{Ai}^{\prime}(z) & -\mbox{Ai}^{\prime}(\omega z)
\end{pmatrix}e^{-\frac{\pi i}{6}\sigma_{3}}\begin{pmatrix}
1 & 0 \\ 1 & 1
\end{pmatrix}, & \mbox{for } -\pi < \arg z < -\frac{2\pi}{3}, \\
\begin{pmatrix}
\mbox{Ai}(z) & - \omega^{2}\mbox{Ai}(\omega z) \\
\mbox{Ai}^{\prime}(z) & -\mbox{Ai}^{\prime}(\omega z)
\end{pmatrix}e^{-\frac{\pi i}{6}\sigma_{3}}, & \mbox{for } -\frac{2\pi}{3} < \arg z < 0, \\
\end{array} \right.
\end{equation}
with $\omega = e^{\frac{2\pi i}{3}}$, Ai the Airy function and
\begin{equation}
M_{A} := \sqrt{2 \pi} e^{\frac{\pi i}{6}} \begin{pmatrix}
1 & 0 \\ 0 & -i
\end{pmatrix}.
\end{equation}

\subsection{Confluent hypergeometric model RH problem}

\begin{itemize}
\item[(a)] $\Phi_{\mathrm{HG}} : \mathbb{C} \setminus \Sigma_{\mathrm{HG}} \rightarrow \mathbb{C}^{2 \times 2}$ is analytic, where $\Sigma_{\mathrm{HG}}$ is shown in Figure \ref{Fig:HG}.
\item[(b)] For $z \in \Gamma_{k}$ (see Figure \ref{Fig:HG}), $k = 1,...,8$, $\Phi_{\mathrm{HG}}$ has the jump relations
\begin{equation}\label{jumps PHG3}
\Phi_{\mathrm{HG},+}(z) = \Phi_{\mathrm{HG},-}(z)J_{k},
\end{equation}
where
\begin{align*}
& J_{1} = \begin{pmatrix}
0 & e^{-i\pi \beta} \\ -e^{i\pi\beta} & 0
\end{pmatrix}, \quad J_{5} = \begin{pmatrix}
0 & e^{i\pi\beta} \\ -e^{-i\pi\beta} & 0
\end{pmatrix},\quad J_{3} = J_{7} = \begin{pmatrix}
e^{\frac{i\pi\alpha}{2}} & 0 \\ 0 & e^{-\frac{i\pi\alpha}{2}}
\end{pmatrix}, \\
& J_{2} = \begin{pmatrix}
1 & 0 \\ e^{-i\pi\alpha}e^{i\pi\beta} & 1
\end{pmatrix}\hspace{-0.1cm}, \hspace{-0.3cm} \quad J_{4} = \begin{pmatrix}
1 & 0 \\ e^{i\pi\alpha}e^{-i\pi\beta} & 1
\end{pmatrix}\hspace{-0.1cm}, \hspace{-0.3cm} \quad J_{6} = \begin{pmatrix}
1 & 0 \\ e^{-i\pi\alpha}e^{-i\pi\beta} & 1
\end{pmatrix}\hspace{-0.1cm}, \hspace{-0.3cm} \quad J_{8} = \begin{pmatrix}
1 & 0 \\ e^{i\pi\alpha}e^{i\pi\beta} & 1
\end{pmatrix}.
\end{align*}
\item[(c)] As $z \to \infty$, $z \notin \Sigma_{\mathrm{HG}}$, we have
\begin{equation}\label{Asymptotics HG}
\Phi_{\mathrm{HG}}(z) = \left( I + \sum_{k=1}^{\infty} \frac{\Phi_{\mathrm{HG},k}}{z^{k}} \right) z^{-\beta\sigma_{3}}e^{-\frac{z}{2}\sigma_{3}}M^{-1}(z),
\end{equation}
where 
\begin{equation}\label{def of tau}
\Phi_{\mathrm{HG},1} = \Big(\beta^{2}-\frac{\alpha^{2}}{4}\Big) \begin{pmatrix}
-1 & \tau(\alpha,\beta) \\ - \tau(\alpha,-\beta) & 1
\end{pmatrix}, \qquad \tau(\alpha,\beta) = \frac{- \Gamma\left( \frac{\alpha}{2}-\beta \right)}{\Gamma\left( \frac{\alpha}{2}+\beta + 1 \right)},
\end{equation}
and
\begin{equation}
M(z) = \left\{ \begin{array}{l l}
\displaystyle e^{\frac{i\pi\alpha}{4} \sigma_{3}}e^{- i\pi\beta  \sigma_{3}}, & \displaystyle \frac{\pi}{2} < \arg z < \pi, \\
\displaystyle e^{-\frac{i\pi\alpha}{4} \sigma_{3}}e^{-i\pi\beta  \sigma_{3}}, & \displaystyle \pi < \arg z < \frac{3\pi}{2}, \\
e^{\frac{i\pi\alpha}{4}\sigma_{3}} \begin{pmatrix}
0 & 1 \\ -1 & 0
\end{pmatrix}, & -\frac{\pi}{2} < \arg z < 0, \\
e^{-\frac{i\pi\alpha}{4}\sigma_{3}} \begin{pmatrix}
0 & 1 \\ -1 & 0
\end{pmatrix}, & 0 < \arg z < \frac{\pi}{2}.
\end{array} \right.
\end{equation}
In \eqref{Asymptotics HG}, $z^{-\beta}$ has a cut along $i\mathbb{R}^{-}$, such that $z^{-\beta} \in \mathbb{R}$ as $z \in \mathbb{R}^{+}$.

As $z \to 0$, we have
\begin{equation}\label{lol 35}
\begin{array}{l l}
\displaystyle \Phi_{\mathrm{HG}}(z) = \left\{ \begin{array}{l l}
\begin{pmatrix}
\bigO(1) & \bigO(\log z) \\
\bigO(1) & \bigO(\log z)
\end{pmatrix}, & \mbox{if } z \in II \cup III \cup VI \cup VII, \\
\begin{pmatrix}
\bigO(\log z) & \bigO(\log z) \\
\bigO(\log z) & \bigO(\log z)
\end{pmatrix}, & \mbox{if } z \in I\cup IV \cup V \cup VIII,
\end{array} \right., & \displaystyle \mbox{ if } \Re \alpha = 0, \\[0.9cm]

\displaystyle \Phi_{\mathrm{HG}}(z) = \left\{ \begin{array}{l l}
\begin{pmatrix}
\bigO(z^{\frac{\Re\alpha}{2}}) & \bigO(z^{-\frac{\Re\alpha}{2}}) \\
\bigO(z^{\frac{\Re\alpha}{2}}) & \bigO(z^{-\frac{\Re\alpha}{2}})
\end{pmatrix}, & \mbox{if } z \in II \cup III \cup VI \cup VII, \\
\begin{pmatrix}
\bigO(z^{-\frac{\Re\alpha}{2}}) & \bigO(z^{-\frac{\Re\alpha}{2}}) \\
\bigO(z^{-\frac{\Re\alpha}{2}}) & \bigO(z^{-\frac{\Re\alpha}{2}})
\end{pmatrix}, & \mbox{if } z \in I\cup IV \cup V \cup VIII,
\end{array} \right. , & \displaystyle \mbox{ if } \Re \alpha > 0, \\[0.9cm]

\displaystyle \Phi_{\mathrm{HG}}(z) = \begin{pmatrix}
\bigO(z^{\frac{\Re\alpha}{2}}) & \bigO(z^{\frac{\Re\alpha}{2}}) \\
\bigO(z^{\frac{\Re\alpha}{2}}) & \bigO(z^{\frac{\Re\alpha}{2}}) 
\end{pmatrix}, & \displaystyle \mbox{ if } \Re \alpha < 0.
\end{array}
\end{equation}
\end{itemize}
\begin{figure}[t!]
    \begin{center}
    \setlength{\unitlength}{1truemm}
    \begin{picture}(100,55)(-5,10)
        \put(50,40){\line(-1,0){26}}
        \put(50,40){\line(1,0){26}}        
        \put(50,39.8){\thicklines\circle*{1.2}}
        \put(50,40){\line(-0.5,0.5){21}}
        \put(50,40){\line(-0.5,-0.5){21}}
        \put(50,40){\line(0.5,0.5){21}}
        \put(50,40){\line(0.5,-0.5){21}}
        \put(50,40){\line(0,1){25}}
        \put(50,40){\line(0,-1){25}}
        \put(50.3,35){$0$}
                \put(76.5,39){$\Gamma_7$}        
                \put(71,62){$\Gamma_8$}        
                \put(49,66){$\Gamma_1$}        
                \put(25.8,62.3){$\Gamma_2$}        
                \put(19.7,39){$\Gamma_3$}        
                \put(24.5,17.5){$\Gamma_4$}
                \put(49,11.5){$\Gamma_5$}
                \put(71,17){$\Gamma_6$}        
        \put(30,39.9){\thicklines\vector(-1,0){.0001}}
        \put(68,39.9){\thicklines\vector(1,0){.0001}}
        \put(32,58){\thicklines\vector(-0.5,0.5){.0001}}
        \put(35,25){\thicklines\vector(0.5,0.5){.0001}}
        \put(68,58){\thicklines\vector(0.5,0.5){.0001}}
        \put(65,25){\thicklines\vector(-0.5,0.5){.0001}}
        \put(50,60){\thicklines\vector(0,1){.0001}}
        \put(50,25){\thicklines\vector(0,1){.0001}}
        \put(40,57){$I$}
        \put(30,49){$II$}
        \put(30,32){$III$}
        \put(40,23){$IV$}
        \put(56,23){$V$}
        \put(66,34){$VI$}
        \put(66,46){$VII$}
        \put(55,57){$VIII$}
    \end{picture}
    \caption{\label{Fig:HG}The jump contour $\Sigma_{\mathrm{HG}}$ for $\Phi_{\mathrm{HG}}(z)$. The ray $\Gamma_{k}$ is oriented from $0$ to $\infty$, and forms an angle with $\mathbb{R}^{+}$ which is a multiple of $\frac{\pi}{4}$.}
\end{center}
\end{figure}
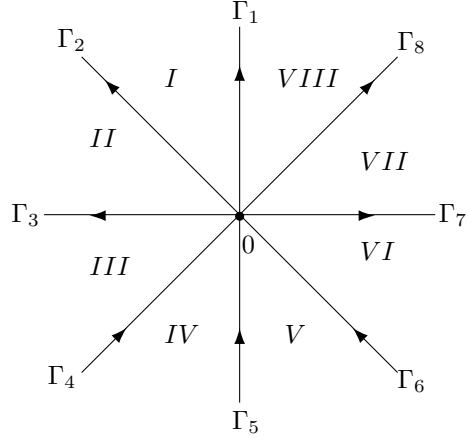
This model RH problem was first introduced and solved explicitly in \cite{ItsKrasovsky} for the case $\alpha = 0$, and then in \cite{FouMarSou} and \cite{DIK} for the general case. The constant matrices $\Phi_{\mathrm{HG},k}$ depend analytically on $\alpha$ and $\beta$ (they can be found explicitly e.g. in \cite[equation (56)]{FouMarSou}). Consider the matrix
\begin{equation}\label{phi_HG}
\widehat{\Phi}_{\mathrm{HG}}(z) = \begin{pmatrix}
\frac{\Gamma(1 + \frac{\alpha}{2}-\beta)}{\Gamma(1+\alpha)}G(\frac{\alpha}{2}+\beta, \alpha; z)e^{-\frac{i\pi\alpha}{2}} & -\frac{\Gamma(1 + \frac{\alpha}{2}-\beta)}{\Gamma(\frac{\alpha}{2}+\beta)}H(1+\frac{\alpha}{2}-\beta,\alpha;ze^{-i\pi }) \\
\frac{\Gamma(1 + \frac{\alpha}{2}+\beta)}{\Gamma(1+\alpha)}G(1+\frac{\alpha}{2}+\beta,\alpha;z)e^{-\frac{i\pi\alpha}{2}} & H(\frac{\alpha}{2}-\beta,\alpha;ze^{-i\pi })
\end{pmatrix} e^{-\frac{i\pi\alpha}{4}\sigma_{3}},
\end{equation}
where $G$ and $H$ are related to the Whittaker functions:
\begin{equation}\label{relation between G and H and Whittaker}
G(a,\alpha;z) = \frac{M_{\kappa,\mu}(z)}{\sqrt{z}}, \quad H(a,\alpha;z) = \frac{W_{\kappa,\mu}(z)}{\sqrt{z}}, \quad \mu = \frac{\alpha}{2}, \quad \kappa = \frac{1}{2}+\frac{\alpha}{2}-a.
\end{equation}
The solution $\Phi_{\mathrm{HG}}$ is given by
\begin{equation}
\Phi_{\mathrm{HG}}(z) = \left\{ \begin{array}{l l}
\widehat{\Phi}_{\mathrm{HG}}(z)J_{2}^{-1}, & \mbox{ for } z \in I, \\
\widehat{\Phi}_{\mathrm{HG}}(z), & \mbox{ for } z \in II, \\
\widehat{\Phi}_{\mathrm{HG}}(z)J_{3}, & \mbox{ for } z \in III, \\
\widehat{\Phi}_{\mathrm{HG}}(z)J_{3}J_{4}^{-1}, & \mbox{ for } z \in IV, \\
\widehat{\Phi}_{\mathrm{HG}}(z)J_{2}^{-1}J_{1}^{-1}J_{8}^{-1}J_{7}^{-1}J_{6}, & \mbox{ for } z \in V, \\
\widehat{\Phi}_{\mathrm{HG}}(z)J_{2}^{-1}J_{1}^{-1}J_{8}^{-1}J_{7}^{-1}, & \mbox{ for } z \in VI, \\
\widehat{\Phi}_{\mathrm{HG}}(z)J_{2}^{-1}J_{1}^{-1}J_{8}^{-1}, & \mbox{ for } z \in VII, \\
\widehat{\Phi}_{\mathrm{HG}}(z)J_{2}^{-1}J_{1}^{-1}, & \mbox{ for } z \in VIII. \\
\end{array} \right.
\end{equation}

\section*{Acknowledgements}
CC was supported by the European Research Council under the European Union's Seventh Framework Programme (FP/2007/2013)/ ERC Grant Agreement n.\, 307074. The author acknowledges both referees for their careful reading and useful remarks that have helped to improve the present paper.

\end{document}